\documentclass[11pt, letterpaper]{article}
\usepackage[letterpaper, top=1in, bottom=1in, left=1in, right=1in]{geometry}
\usepackage[hyphens]{url}  
\usepackage{graphicx} 
\usepackage{natbib}  
\usepackage{caption} 
\usepackage{amsmath}
\usepackage{algorithm}
\usepackage{algorithmic}
\usepackage{booktabs}

\usepackage{amsthm,amsmath,amsfonts,amssymb}
\usepackage{dsfont}

\newtheorem{assumption}{Assumption}
\newtheorem{theorem}{Theorem}
\newtheorem{proposition}{Proposition}
\newtheorem{lemma}{Lemma}
\newtheorem{corollary}{Corollary}

\usepackage{xcolor}

\newcommand{\Prob}{\mathbb{P}}

\newcommand{\footremember}[2]{%
    \footnote{#2}
    \newcounter{#1}
    \setcounter{#1}{\value{footnote}}%
}

\title{Effort Matters in Score-Based Admissions: 
\\How Retaking and Aggregation Shape Test Scores}
\author{Christine Ling\footremember{trailer}{Georgia Institute of Technology. Email: cling39@gatech.edu. 
}
\and Diptangshu Sen\footremember{alley}{Georgia Institute of Technology. Email: dsen30@gatech.edu }
\and Juba Ziani\footremember{somethingelse2}{Georgia Institute of Technology. Email: jziani3@gatech.edu}
}

\date{\today}

\begin{document}

\maketitle

\begin{abstract}
Observed standardized test scores are the result of an endogenous process: students strategically allocate effort across multiple retake attempts to improve their outcomes. Because students differ in their ability to make these investments, the interaction between applicant strategy and institutional scoring rules---such as the widely used Single-Sitting and Superscoring policies---can disparately distort observed scores.

We develop a strategic framework where students allocate effort in response to different scoring policies. We show that Superscoring---the practice of combining the best section scores across attempts---introduces systematic score inflation through order-statistic selection over noise draws. This degrades signal accuracy and amplifies wealth-based disparities by disproportionately rewarding applicants who can afford repeated testing. Conversely, Single-Sitting---which keeps the best overall score rather than section-level scores---preserves signal fidelity but excludes high-ability students who lack the resources to prepare for all subjects simultaneously. Neither rule uniformly dominates; instead, they force a structural trade-off between statistical precision and fair outcomes. Finally, to address this, we propose three algorithmic interventions which either modify how scores from multiple attempts are combined, or apply a post-hoc correction to observed scores. Using simulations calibrated to 2025 College Board data, we compare standard scoring rules against these proposed interventions. 
\end{abstract}

\section{Introduction}
\label{sec:intro}
Standardized testing is a central component of university admissions. As institutions increasingly rely on algorithmic screening to process large applicant pools, the mathematical rules used to aggregate test scores can substantially influence admission outcomes. Two policies currently dominate. Under \emph{Single-Sitting}, institutions evaluate an applicant's best complete exam attempt. Under \emph{Superscoring}, institutions independently combine the best section scores obtained across multiple attempts. Although these policies are typically viewed as passive measurements, they fundamentally alter the incentives students face. Observed test scores are therefore the outcome of a strategic process in which students decide how and when to invest effort, rather than an exogenous measurement of ability.

We formalize standardized testing as an incentive problem. Rather than treating test scores as exogenous signals, we model students as strategic agents who allocate study effort across multiple subjects and retake opportunities in response to the institution's aggregation rule. Because additional preparation and repeated testing incur both financial and temporal costs, students' ability to optimize against the scoring algorithm depends on their available resources. Consequently, observed scores reflect the joint interaction of intrinsic ability, strategic effort, and wealth constraints.

We study how strategic student behavior and institutional scoring rules jointly shape both observed test scores and downstream admissions decisions. Because students optimize against the aggregation rule, changing how scores are combined changes not only how applicants are evaluated, but also how they prepare for the exam and whether they choose to retake it. These strategic responses fundamentally alter the statistical properties of the scores observed by the institution, making score aggregation both a measurement problem and an incentive design problem. Our goal is to understand these interactions, characterize the trade-offs induced by existing aggregation rules, and study whether alternative scoring mechanisms can better balance the competing objectives of accurate measurement of student ability, broad participation, and fair outcomes. Our main contributions are:
\begin{enumerate}
    \item \textbf{Endogenous Effort-Based Testing Model.} In Section~\ref{sec:model}, we provide a novel model of standardized test-taking as a two-stage optimization problem where students strategically allocate study effort across multiple subjects and attempts, taking wealth disparities into account.
    \item \textbf{Strategic Responses to Scoring Rules.} In Section~\ref{sec:student_behavior}, we demonstrate that scoring rules fundamentally alter how students invest effort into preparation. We show that Superscoring induces effort specialization driven by favorable test-day noise, whereas Single-Sitting imposes a simultaneous preparation burden that disparately harms the scores of wealth-constrained students.
    \item \textbf{Measurement Distortions.} In Section~\ref{sec:measurement}, we establish that Superscoring systematically inflates scores and degrades signal fidelity relative to Single-Sitting. We formally define the ``Access Premium,'' showing that Superscoring amplifies existing inequalities by rewarding affluent applicants who can afford multiple test attempts.
    \item \textbf{Empirical Evaluation of Algorithmic Corrections.} Using simulations empirically calibrated to the 2025 College Board data, we computationally evaluate both standard scoring rules and our proposed alternative interventions in Section~\ref{sec:experiments_and_corrections}. We demonstrate experimentally that while demographic-blind statistical mechanisms penalize unearned noise, achieving ability-proportionate representation requires demographic-aware structural interventions.
\end{enumerate}
Finally, we provide a discussion of our limitations and proposed future work in Appendix~\ref{app:futurework}.

\section{Related Work}
\label{sec:related}

\paragraph{Empirical Frictions in Admissions} Prior work documents correlations between parental wealth and student achievement \cite{chetty2014where}, shows that elite admissions preferences further widen these disparities \cite{chetty2025diversifying}, and finds that standardized test scores and application essays are structurally tied to household income \cite{alvero2021essay}. Financial and logistical barriers---including test fees and access to preparation resources---further discourage qualified low-income applicants \cite{klasik2012college, bulman2015effect}, contributing to the population of high-achieving ``missing one-offs'' who never enter the applicant pool \cite{hoxby2012missing, sackett2012role}. 

\paragraph{Strategic Classification and Performative Prediction}
We build on strategic classification, where agents modify observable features to improve algorithmic outcomes. This literature distinguishes between scenarios where agents ``game'' a classifier through superficial changes \cite{hardt2016strategic, braverman2020role, dong2018strategic}, and settings where agents invest effort to make genuine attribute improvements \cite{kleinberg2020how, bechavod2021learning, harris2021stateful,efthymiou2026incentivizing}. We draw a parallel in admissions, demonstrating how different scoring rules incentivize genuine learning through study effort and variance harvesting through repeated test attempts. Because universities enforce capacity constraints, our model expands strategic ranking \cite{liu2022strategic}---where agents compete for relative placement rather than absolute classification---by incorporating endogenous wealth frictions. We build on foundational work addressing fairness in strategic classification \cite{milli2019social, hu2019disparate, estornell2023fairness} and score variance \cite{garg2021standardized} to show how unequal wealth budgets for retaking exams compound systemic inequalities.

\paragraph{Information Design and Strategic Manipulation.} \cite{frankel2022improving} design optimal systems that correct for expected manipulation, matching the motivation behind our Full-Information Score Correction (FISC). In admissions contests, \cite{krishna2022pareto} show that hiding score information can improve overall effort, but assume homogeneous applicants. By incorporating wealth inequality, our setting also aligns with work showing that unequal manipulation costs fundamentally affect fairness \cite{hu2019disparate, liu2019disparate}, and that unequal access to signaling mechanisms drives downstream inequality \cite{immorlica2019access}. Our Contextual Correction relates to the trade-off between predictive accuracy and demographic fairness established by \cite{kleinberg2018human}.

\paragraph{Strategic Retaking and Test Design} \cite{vigdor2003retaking} show that taking the highest submitted score creates unequal incentives to retake the test, disproportionately benefiting students who can afford repeated attempts.\cite{zwick2019assessment} documents the resulting institutional shift toward Superscoring. \cite{rios2026admission} studies the efficiency-equity trade-offs of retake policies using administrative data, but treats study effort as fixed. Recent work has also examined the broader algorithmic design and equity of SAT-based admissions \cite{niu2022best, borghesan2022heterogeneous, george2026design}. Much of this literature focuses on test-optional policies \cite{garg2026dropping, liu2021test, dessein2024test}, while we study how standardized test scores should be aggregated when testing remains a required component of the admissions process.

\section{Model}
\label{sec:model}

We now introduce a model of standardized test-taking where students' final test scores are influenced endogenously by how students from different economic backgrounds strategically respond to the university's score aggregation policy under test-affordability constraints. The university uses a student's final score as a proxy for their true ability to make admission decisions. In our setting, there is a single admissions cycle during which students choose how to allocate effort across multiple subjects/dimensions and a maximum of two attempts to maximize their chances of admission. 

\subsection*{Student Model}
\paragraph{Population and Ability} 
We consider a population of students indexed by $i \in \{1,...,N\}$. Each student $i$ is characterized by a socioeconomic group label $g_i \in \mathcal{G}$ (observable), wealth level $W_i$ (unobservable), and a two-dimensional type $(\eta_{iM},\eta_{iV})$ (unobservable). A student's type represents their underlying ability, i.e., how efficiently effort translates into academic ability in Mathematics ($\eta_{iM}$) and Verbal (or ``Evidence-Based Reading and Writing'') ($\eta_{iV}$)\footnote{Note that we use "Math" and "Verbal" for ease of exposition.}. Wealth $W_i$ represents the total budget available for test preparation and retakes. We drop the subscript $i$ when clear from context.

\paragraph{Effort and Costs} 
Students exert effort $x_t$ (Math) and $y_t$ (Verbal) in attempt $t \in \{1, 2\}$. A student always takes the exam at $t=1$. They may opt to retake at $t=2$. Letting $e_t=x_t+y_t$ denote total effort in attempt $t$, students incur a variable cost of effort $c(e_t\mid W)$. Students who retake the exam incur an additional fixed cost $C(W)$. This cost depends on a student's wealth, modeling the fact that wealthier students may face lower barriers to entry for another attempt---such as transportation, scheduling flexibility, and the ability to take time off work---while students with fewer resources may find these fixed burdens prohibitive.
 
\begin{assumption}
\label{ass:costs}
We assume variable costs take the form $c(e_t\mid W) = k(W) \cdot e_t^\alpha$, where $k(\cdot)>0$ is strictly decreasing in wealth, and $\alpha \ge 1$ controls cost curvature.
\end{assumption}

This cost function isolates two effects: $e_t^\alpha$ enforces diminishing returns to effort, and $k(W)$ ensures wealthier students face strictly lower marginal costs to exert the same effort. Power law cost functions for effort investment have been used extensively in the political economy literature, particularly in contest theory and rent-seeking games~\citep{corchon2007theory}.

\paragraph{Score Generation} 
A student's exerted effort translates into a noisy score. Investing effort $(x_1,y_1)$ in round 1 obtains:
\begin{align*}
    X_1(x_1) = f(\eta_M x_1) + \delta_{M1}, \; Y_1(y_1) = f(\eta_V y_1)+\delta_{V1}.
\end{align*}
If the student elects to retake the exam and invests additional effort $(x_2,y_2)$ in round 2, their scores are then:
\begin{align*}
&X_2(x_1,x_2) = f(\eta_M (x_1+x_2))+\delta_{M2}, 
\\&Y_2(x_1,x_2) = f(\eta_V (y_1+y_2))+\delta_{V2},
\end{align*}
where $X$ and $Y$ denote Math and Verbal scores respectively, $f$ is a real-valued function, and $\delta_t = (\delta_{Mt}, \delta_{Vt})$ is independent noise drawn from some continuous distribution with $\mathbb{E}[\delta_t]=\textbf{0}, \text{Var}(\delta)=\sigma^2 I_2$. 
During the retake, the student retains all the benefit from effort invested in the first attempt.   

A student's score depends jointly on intrinsic ability and effort invested into preparation. Letting $\eta_j$ be the rate at which a student converts effort into academic proficiency for subject $j \in \{M,V\}$, $\eta_M x$ (respectively $\eta_V y$ for Verbal) represents academic proficiency in Math. Then the function $f$ translates this proficiency into a deterministic score. We make the following assumption on $f$:

\begin{assumption}
The function $f: \mathbb{R}_+ \to \mathbb{R}_+$ satisfies $f(0)=0$, $f'(z)>0$, and $f''(z)<0$, i.e., it is monotonically increasing and strictly concave.
\end{assumption}
The above assumption means that increased proficiency does translate into higher SAT scores, but with eventual diminishing returns on additional effort.

\subsection*{University Decisions}
We consider an institution that may put different weights on the Mathematics and Verbal categories. Let $w_M,w_V \in (0,1)$ represent the weights placed on these subjects, such that $w_M+w_V=1$. The university seeks to admit students with the highest underlying ability $(\eta_M,\eta_V)$. However, because intrinsic ability is unobservable, admissions decisions rely on submitted scores as a proxy. While the university can only observe the submitted scores, they have flexibility in the design of which scores they observe. We consider two dominant scoring rules currently used for aggregating scores:

\begin{itemize}
\item Single-Sitting (SS): Students submit one complete examination per attempt. The submitted composite score equals the better of the complete sittings:
\begin{align*}
    S^{SS} = \max \left(w_M X_1 + w_V Y_1,w_M X_2 + w_V Y_2 \right).
\end{align*}
\item Super-scoring (SC): The institution independently selects the highest \emph{section} score obtained across all attempts:
\begin{align*}
    S^{SC} = w_M \max (X_1,X_2) + w_V\max (Y_1,Y_2).
\end{align*}
\end{itemize}
We denote $S^R((x_1,y_1),(x_2,y_2))$ as the random variable representing the final score of a student under rule $R$. If a student takes the test only once (where round 2 is empty), the score under either rule trivially reduces to a single realization:
\[
    S^R((x_1,y_1), \emptyset) \triangleq w_M X_1 + w_V Y_1.
\]
The university admits applicants whose submitted scores exceed a threshold $S_{admit}^R$. Defining the composite ability of a student as $\eta\equiv w_M\eta_{M} + w_V\eta_{V}$, the institute aims to maximize the expected ability of the admitted class: 
\[
    \max_R \mathbb{E}[\eta \mid S^R \ge S_{admit}^R].
 \]

\subsection*{Student Optimization Problem}
Finally, we study the optimization problem the students face. In particular, we consider strategic students who optimize how much effort they exert as a function of their underlying type, wealth, and the university's choice of scoring rule.

A student must decide how to allocate effort across time steps $t \in \{1, 2\}$\footnote{However, effort at time step 2 can be zero.}. Each student has a quasi-linear utility defined by the final score they achieve minus the total cost of effort and, if applicable, the fixed retake cost. To model time preference, we introduce a discount factor $\beta \in (0, 1]$.
\begin{align*}
    u_1(x_1,y_1) = S^R((x_1,y_1),\emptyset) - c(x_1+y_1 \mid W).
\end{align*}

If a student takes the exam in both rounds, their utility is: 
\begin{align*}
    u_{12}((x_1,y_1),(x_2,y_2)) 
    = S^R((x_1,y_1),(x_2,y_2)) 
    - c(x_1+y_1 \mid W) 
    - \beta c(x_2+y_2 \mid W) - \beta C(W),
\end{align*}
where $C(W)$ is the retake cost\footnote{We drop the dependency in the wealth when clear from context.}$^{,}$\footnote{While the utility of the final admission score remains constant, future effort and financial costs are discounted relative to the present}.

\paragraph{Order of Operations} 
After observing the realized first-round score $S^R((x_1,y_1),\emptyset)$, a student retakes the exam if the expected score increase from a second attempt exceeds the discounted variable and fixed costs; i.e., if and only if
\begin{align*}
    \max_{x_2,y_2}~~~\mathbb{E} 
    \left[S^R((x_1,y_1),(x_2,y_2)) | x_1, y_1 \right] 
     - \beta c(x_2+y_2 \mid W) - \beta C(W) 
     \geq  S^R((x_1,y_1),\emptyset).
\end{align*}
Together, the Stage 2 optimization program is: 
\begin{align} 
\label{eq:V2}
V_2^R(x_1,y_1) = \max_{x_2,y_2 \ge 0} \Big\{ \mathbb{E}_{\delta_2} \left[ S^R((x_1,y_1),(x_2,y_2)) \right]
- \beta c(x_2+y_2 \mid W) - \beta C(W) \Big\}.
\end{align}
Anticipating this optimal second-stage behavior, the ex-ante optimization problem in round 1 is:

\begin{align}
\label{eq:V1}
    V_1^R = \max_{x_1,y_1 \ge 0} \Big\{ \mathbb{E}_{\delta_1} \Big[ \max \Big( S^R((x_1,y_1),\emptyset), V_2^R(x_1,y_1) \Big) \Big] 
    - c(x_1+y_1 \mid W) \Big\}.
\end{align}

We denote $(x_1^*,y_1^*)$ as the optimal effort in round 1 (the solution to Program~\ref{eq:V1}) and $(x_2^*,y_2^*)$ as the optimal effort in round 2 (the solution to Program~\ref{eq:V2}).

 \section{Characterization of Student Behavior}
 \label{sec:student_behavior}

We first analyze student behavior under linear effort costs ($\alpha = 1$) for analytical tractability and simplicity of exposition. Our core structural properties---including the Safety Net Effect (Proposition~\ref{prop:safetynet}(i)), Monotonicity in Retake Costs (Proposition~\ref{prop:monotone}), and Effort Specialization (Proposition~\ref{prop:specialization})---largely extend to strictly convex costs ($\alpha > 1$). Formal extensions for convex costs are provided in Appendix~\ref{app:convex_robustness}.

We make the following assumption throughout:

\begin{assumption}
\label{ass:viable}
We assume $\frac{\beta k}{w_j\eta_j}\le f'(0)$ $\forall j\in\{M,V\}$.
\end{assumption}
We show in Appendix~\ref{app:target_lemma} that this assumption is exactly equivalent to exerting non-trivial (strictly positive) effort.

\paragraph{Monotonicity in Retake Cost} 
Before characterizing how the two rules differ, we note that across both rules, effort decisions are monotone in the fixed retake cost $C$.
\begin{proposition}[Monotonicity in Fixed Retake Costs]
\label{prop:monotone}
Assume an interior Stage 1 optimum. Under either rule $R\in\{SS,SC\}$, an exogenous increase in the fixed retake cost $C$ strictly increases initial effort in both subjects.

\end{proposition}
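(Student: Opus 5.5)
The plan is a monotone comparative-statics argument on the Stage 1 first-order conditions. Write the Stage 1 objective as
\[
\Phi(x_1,y_1;C)=\mathbb{E}_{\delta_1}\big[\max\big(A(x_1,y_1,\delta_1),\,B(x_1,y_1;C)\big)\big]-k(x_1+y_1),
\]
where $A=w_M X_1+w_V Y_1$ is the realized single-attempt score and $B=V_2^R(x_1,y_1)$. Two facts about $B$ drive the proof. First, $C$ enters $B$ only additively, as $-\beta C$. Second, by the envelope theorem, $B$ is differentiable in $(x_1,y_1)$. Its derivative is the marginal value of first-round effort carried into the retake, evaluated at the optimal $(x_2^*,y_2^*)$, which does not depend on $C$ (with $\alpha=1$ and $C$ fixed, the Stage 2 maximizer is independent of $C$).

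For interpretation, $\mathbb{E}[\max(A,B)]=\Pr(A\ge B)\,\mathbb{E}[A\mid A\ge B]+\Pr(B>A)\,B$. Differentiating in $x_1$ gives the first-order condition
\[
\Pr(A\ge B)\,w_M\eta_M f'(\eta_M x_1)+\Pr(B>A)\,\partial_{x_1}B \;=\; k,
\]
and analogously for $y_1$. The kink of the max contributes no boundary term, because the noise distribution is continuous.

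The key comparison is between the two marginal returns. The direct term $w_M\eta_M f'(\eta_M x_1)$ is the marginal value of Stage 1 effort when the student does not retake. The retake-branch term $\partial_{x_1}B$ is the marginal value when the student does retake. In that branch, first-round effort is partly substitutable by second-round effort, whose cost is only $\beta k$.

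I will show that $\partial_{x_1}B\le w_M\eta_M f'(\eta_M x_1)$ strictly, in the relevant region. The argument goes through the target characterization from Appendix~\ref{app:target_lemma}: with linear costs, a retaker tops up to a target cumulative effort in which the marginal return of cumulative effort equals $\beta k$. Hence, locally, $\partial_{x_1}B=\beta k$ when $x_2^*>0$. Under Superscoring, the Stage 1 draw also enters $B$ through $\max(X_1,X_2)$, so there I also need $\partial_{x_1}B<k$ at an interior optimum. Meanwhile the FOC forces the direct term to be at least $k$. So raising $C$ lowers $B$ uniformly and shifts probability mass $\Pr(A\ge B)$ toward the branch with the higher marginal return.

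Formally, I will define $\Delta(x_1)=w_M\eta_M f'(\eta_M x_1)-\partial_{x_1}B\ge 0$. Then
\[
\partial_C\partial_{x_1}\Phi=\Delta\cdot\partial_C\Pr(A\ge B),
\]
and $\partial_C\Pr(A\ge B)=\beta\,\phi_A(B)>0$, where $\phi_A$ is the density of $A$ given $(x_1,y_1)$. So $\Phi$ has increasing differences in $(x_1,C)$, and likewise in $(y_1,C)$.

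It remains to pass to the joint solution. I will apply the implicit function theorem to the $2\times 2$ FOC system: $\partial(x_1^*,y_1^*)/\partial C=-H^{-1}\nabla_C\nabla\Phi$, where $H$ is the Hessian of $\Phi$, negative definite at an interior optimum. The cross-partial $\partial_{x_1}\partial_{y_1}\Phi$ comes only from the shared probability term and from $f$-concavity. I expect to show it is nonnegative (supermodularity), since more verbal effort makes not retaking more likely, which raises the weight on the higher-return branch for math. With $-H^{-1}$ having nonnegative entries in that case, and $\nabla_C\nabla\Phi$ strictly positive, both components increase strictly.

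\textbf{Main obstacle.} The hardest step is establishing the supermodularity and sign structure of the Hessian, together with the strict gap $\Delta>0$ under Superscoring, where $B$ itself depends on the $\delta_1$-realization through the section-wise max. My first attempt would condition on $\delta_1$ and show that the SC retake value's marginal in $x_1$ equals $\beta k$ times the probability that the section-1 score is not the max. Failing that, I would instead invoke Topkis' theorem on the lattice $(x_1,y_1)$ with the parameter $C$.
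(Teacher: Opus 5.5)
Your plan follows the paper's proof. Both write $V_1^R$ as $\mathbb{E}[\max(S_1^R,V_2^R)]$ minus cost, note that $C$ enters $V_2^R$ only as $-\beta C$, and differentiate with no boundary term. Both then show that the retake-branch marginal lies below the stop-branch marginal $w_M\eta_M f'(\eta_M x_1)$, and conclude that raising $C$ shrinks $\mathcal{R}(C)$ and moves mass to the higher-return branch. Your $\Delta\cdot\partial_C\Pr(A\ge B)$ is the $C$-derivative of the paper's term $\mathbb{E}[(\partial_{x_1}V_2^R-\partial_{x_1}S_1^R)\mathbf{1}_{\{\boldsymbol{\delta}_1\in\mathcal{R}(C)\}}]$.

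You genuinely improve on the paper at the last step. The paper applies the scalar formula $-\Phi_{x_1C}/\Phi_{x_1x_1}$ to each coordinate ``by symmetric logic,'' which drops the off-diagonal Hessian entry. In the $2\times 2$ system, the sign of $\partial x_1^*/\partial C$ is that of $-\Phi_{y_1y_1}\Phi_{x_1C}+\Phi_{x_1y_1}\Phi_{y_1C}$, so your supermodularity condition is exactly what the paper omits. It does hold in the certainty-equivalent continuation the paper uses. Under SS, $\Phi_{x_1y_1}=\Delta_x\Delta_y\,\phi_A(B)>0$. Under SC, $B-A$ is pointwise weakly decreasing in $y_1$, and the event on which Math is banked does not depend on $y_1$.

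One step of your Superscoring plan fails as written: you do not need, and cannot get, $\partial_{x_1}B<k$. On the event where a high $X_1$ is banked ($x_2^*=0$), the continuation contains $w_M X_1$ itself. There $\partial_{x_1}B=w_M\eta_M f'(\eta_M x_1)$, which your own FOC argument shows is at least $k$. Your candidate formula is also off. Keeping $\delta_2$ random as in Program~\ref{eq:V2}, the envelope theorem gives $\partial_{x_1}B=\beta k+w_M\eta_M f'(\eta_M x_1)\Pr(X_1>X_2\mid\boldsymbol{\delta}_1)$ when $x_2^*>0$.

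What you actually need is only $\Delta\ge 0$ pointwise in $\boldsymbol{\delta}_1$, and this holds in every case. When $x_2^*>0$ it follows from $\beta k=w_M\eta_M f'(\eta_M(x_1+x_2^*))\Pr(X_2>X_1\mid\boldsymbol{\delta}_1)$ and the fact that $f'$ is decreasing. On the banked set, $\Delta=0$. So under SC your product formula becomes a boundary integral, and strictness comes only from the part of the retake boundary where Math is not banked. The paper's asserted strict gap $\partial_{x_1}V_2^R<\partial_{x_1}S_1^R$ has the same defect.

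Finally, your FOC bound $\Delta\ge(1-\beta)k$ under SS is the honest source of strictness. At $\beta=1$ the SS optimum sits exactly at the target $\eta_M x_1=E_M^*$, so the retake branch has $x_2^*=0$ and $\Delta=0$. In that case $x_1^*$ does not move with $C$ at all, so strict monotonicity requires $\beta<1$.
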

The complete proof is in Appendix~\ref{subsec:proofmonotone}. 
A higher fixed price reduces the student’s reliance on Stage 2 as a safety net. Since $C(W_i)$ strictly decreases in wealth, wealthier populations have the capacity to explicitly budget for retakes.

\paragraph{Divergent Effort Allocation} 
The two rules diverge when a student's initial preparation falls short of the target defined in Lemma \ref{lem:target}, but they receive a favorable testing noise.

\begin{proposition}[Effort Specialization]
\label{prop:specialization}
Suppose a student elects to retake in Round 2. There exist $X^*,~Y^*$ s.t.
\begin{itemize}
\item \textit{Single-Sitting.} Second-round effort remains strictly postive for both subjects $(x_2^*>0 \text{ and } y_2^*>0)$, even if a realized section score from Round 1 exceeds its target (e.g., $X_1 \ge X^*$).
\item \textit{Superscoring.} If the realized score satisfies $X_1 \ge X^*$ for Mathematics (or $Y_1 \ge Y^*$ for Verbal), all effort concentrates in the deficient subject.
\end{itemize}
$X^*$ and $Y^*$ are called ``target scores'' for Mathematics and Verbal and are fully characterized in Appendix~\ref{app:target_lemma}.
\end{proposition}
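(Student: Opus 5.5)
We work with the Stage 2 program~\eqref{eq:V2} under linear costs ($\alpha=1$), with $(x_1,y_1)$ and the Round-1 noise $\delta_1$ fixed. We write $A=\eta_M x_1$, $B=\eta_V y_1$, and $a=A+\eta_M x_2$, $b=B+\eta_V y_2$ for the cumulative proficiencies. The idea is to compare the marginal value of Round-2 effort in each subject under the two rules. The target scores $X^*,Y^*$ will come from the first-order conditions of Lemma~\ref{lem:target}.

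\textbf{Superscoring.} Under SC the objective separates by subject:
\[
w_M\,\mathbb{E}\big[\max(X_1,f(a)+\delta_{M2})\big]+w_V\,\mathbb{E}\big[\max(Y_1,f(b)+\delta_{V2})\big]-\beta k(x_2+y_2).
\]
Differentiating under the expectation, the marginal benefit of $x_2$ is
\[
w_M\eta_M f'(a)\,\Prob\big(f(a)+\delta_{M2}>X_1\big).
\]
This factor is decreasing in $X_1$, and it tends to $0$ as $X_1\to\infty$ because $f'$ is bounded by $f'(0)$. By Assumption~\ref{ass:viable}, at $a=A$ it is at least $\beta k$ when $X_1$ is small. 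So there is a threshold $X^*$ (depending on $A$) at which the marginal benefit at $x_2=0$ equals $\beta k$. For $X_1\ge X^*$, we show the marginal benefit stays below $\beta k$ for every $x_2\ge 0$, which gives $x_2^*=0$. The symmetric argument defines $Y^*$.

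The Verbal term does not depend on $X_1$. So if $Y_1<Y^*$, Verbal effort is governed by its own first-order condition, and all effort goes to Verbal. Under linear cost the two subjects interact only through the separable cost, so no coupling argument is needed.

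The main obstacle is showing that the marginal benefit stays below $\beta k$ for all $x_2\ge 0$, and not just at $x_2=0$. The map $a\mapsto f'(a)\Prob(\cdot)$ need not be monotone: $f'$ falls while the exceedance probability rises. I would first try assuming log-concavity of the noise, or a unimodal/bounded-density condition. Failing that, I would define $X^*$ directly as the smallest $X_1$ for which $\sup_{a\ge A} w_M\eta_M f'(a)\Prob(f(a)+\delta>X_1)\le\beta k$. This supremum is finite, nonincreasing in $X_1$, and tends to $0$, so $X^*$ exists and the claim holds by definition. The appendix characterization would then refine this threshold.

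\textbf{Single-Sitting.} Under SS the objective is $\mathbb{E}[\max(S_1,\,w_Mf(a)+w_Vf(b)+w_M\delta_{M2}+w_V\delta_{V2})]$ minus costs, where $S_1=w_MX_1+w_VY_1$. The marginal benefits are $w_M\eta_Mf'(a)p$ and $w_V\eta_Vf'(b)p$, with the common factor $p=\Prob(\text{Round-2 composite}>S_1)$. Since the student retakes, $p>0$ at the optimum; otherwise the retake would yield no gain and cost $\beta C$.

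Now suppose $x_2^*=0$ while $y_2^*>0$. Then $w_V\eta_Vf'(b)p=\beta k\ge w_M\eta_Mf'(A)p$. Under the paper's target characterization, the optimal Round-1 allocation equalizes the weighted marginal products $w_M\eta_Mf'(A)=w_V\eta_Vf'(B)$; this is the interior Stage-1 balance. Because $f'$ is strictly decreasing and $b>B$, we get $w_V\eta_Vf'(b)<w_V\eta_Vf'(B)=w_M\eta_Mf'(A)$, which contradicts the displayed inequality. The symmetric case gives $y_2^*>0$, and corner solutions with no effort at all are excluded by the retake condition. This conclusion does not depend on the realized $X_1$, because $X_1$ enters only through the common factor $p$, which cancels from the comparison across subjects. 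That is exactly the claimed contrast with Superscoring.
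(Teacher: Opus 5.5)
\textbf{Gap: the zero-effort retake under Single-Sitting.} Your final Single-Sitting step claims that corner solutions with no effort at all are excluded by the retake condition. That claim fails in your formulation. Because you keep the Round-2 noise, a zero-effort retake is a fresh draw worth $\mathbb{E}[(w_Mf(A)+w_Vf(B)+w_M\delta_{M2}+w_V\delta_{V2}-S_1)^+]>0$. The retake condition only asks that this exceed $\beta C$, which holds whenever $C$ is small.

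Even granting your balance $w_M\eta_Mf'(A)=w_V\eta_Vf'(B)=:\mu$, your KKT conditions give a subject positive effort only if $p\mu>\beta k$, and $p$ is common to both subjects. So the two efforts are positive together or zero together, and both are zero whenever $p\mu\le\beta k$ at the optimum. Take a bounded $f$ (e.g.\ the calibrated $1-e^{-z}$). After a sufficiently lucky first sitting, $p$ is tiny for every effort level, so $p\mu<\beta k$ throughout and zero effort is optimal. A student with small $C$ then retakes with $x_2^*=y_2^*=0$, and the SS bullet fails. In your setting it needs an extra hypothesis.

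The paper supplies one explicitly. It works with the deterministic continuation ($\mathbb{E}[\delta_2]=0$), whose SS objective $w_Mf(a)+w_Vf(b)-\beta k(x_2+y_2)-\beta C$ does not involve $(X_1,Y_1)$. It also assumes sub-target Round-1 knowledge, $\eta_Mx_1<E_M^*$ and $\eta_Vy_1<E_V^*$. Lemma~\ref{lem:target} then gives $x_2^*=E_M^*/\eta_M-x_1>0$ and $y_2^*=E_V^*/\eta_V-y_1>0$ directly. The same lottery effect means nothing in your Superscoring argument forces $y_2^*>0$, so there you establish only $x_2^*=0$.

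\textbf{Two further points.} First, the balance $w_M\eta_Mf'(A)=w_V\eta_Vf'(B)$ is not part of Lemma~\ref{lem:target}, which is a Stage-2 first-order condition; the paper states no Round-1 balance. You can derive it in your setting. By the envelope theorem, including the channel through $S_1$, the Stage-1 first-order condition for subject $j$ reads $\mu_j-\mathbb{E}[(p\mu_j-\beta k)^+\mathbf{1}_{\mathcal{R}}]=k$. Here $\mu_j$ is the Round-1 weighted marginal product, and $p$ and $\mathcal{R}$ are the same for both subjects. The left side is strictly increasing in $\mu_j$ because $p<1$. This derivation, however, presupposes an interior Stage-1 optimum, which the statement does not assume.

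Second, your supremum-based $X^*$ on the Superscoring side makes $x_2^*=0$ hold by construction. It depends on $A$ and on the noise law, so it is not the target score $f(E_M^*)$ the statement names. Also, the supremum need not tend to $0$. It eventually drops below $\beta k$ only because $\lim_{a\to\infty}f'(a)<\beta k/(w_M\eta_M)$, which is implicit in the existence of $E_M^*$. The paper's certainty-equivalent kink argument avoids these complications and gives $x_2^*=0$ for $X_1\ge f(E_M^*)$ directly.
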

The derivation of these corner solutions is detailed in Appendix~\ref{subsec:proofspecialization}. Under Single-Sitting, students cannot bank a high section score independently; a retake requires a new attempt, forcing continued effort across both subjects to protect against new test-day noise. Under Superscoring, the section-level max operator preserves favorable noise, generating effort specialization driven by lucky realizations, allowing the student to focus remaining effort on the deficient subject.

\paragraph{The Safety Net Effect}
Because Superscoring preserves previously earned section scores, it reduces the downside of an imperfect first sitting.
\begin{proposition}[Safety Net Effect]
\label{prop:safetynet}
Let $(x_1^{SS},y_1^{SS})$ and $(x_1^{SC},y_1^{SC})$ denote equilibrium first-round effort.
\begin{itemize}
    \item[(i)] For any realized scores $(X_1,Y_1)$, the expected continuation utility of retaking is higher under Superscoring:
\[ 
        V_2^{SC}(X_1,Y_1)\ge V_2^{SS}(X_1,Y_1).
    \]
    \item[(ii)] In equilibrium, this safety net reduces first-round effort:
\[ 
        x_1^{SC}\le x_1^{SS} \quad\text{and}\quad y_1^{SC}\le y_1^{SS}.
\]
\end{itemize}
\end{proposition}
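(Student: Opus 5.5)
Part (i) follows from a pointwise comparison of the two aggregation rules, lifted through the expectation and the maximization in Program~\eqref{eq:V2}. For any fixed realizations $(X_1,Y_1)$ and $(X_2,Y_2)$,
\[
w_M\max(X_1,X_2)+w_V\max(Y_1,Y_2)\;\ge\;\max\bigl(w_MX_1+w_VY_1,\;w_MX_2+w_VY_2\bigr),
\]
because the left-hand side is at least each of the two composites: each $\max$ dominates the corresponding coordinate, and the weights are positive. So $S^{SC}\ge S^{SS}$ pathwise for every second-round effort $(x_2,y_2)$ and every noise draw $\delta_2$.

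Taking $\mathbb{E}_{\delta_2}$ preserves this inequality. The cost terms $\beta c(x_2+y_2\mid W)+\beta C(W)$ do not depend on the rule, so the Stage~2 objective under SC weakly dominates the one under SS at every feasible $(x_2,y_2)$. Maximizing both sides gives $V_2^{SC}(X_1,Y_1)\ge V_2^{SS}(X_1,Y_1)$. I would also note that $S^R((x_1,y_1),\emptyset)$ is the same under both rules, so the option value $\max(S(\cdot,\emptyset),V_2^R)$ is likewise weakly larger under SC.

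For part (ii) I would use the linear-cost setting ($\alpha=1$) and the first-order conditions of Program~\eqref{eq:V1} at an interior optimum, which is the same machinery as in the proof of Proposition~\ref{prop:monotone}. The FOC for $x_1$ is
\[
\frac{\partial}{\partial x_1}\mathbb{E}_{\delta_1}\bigl[\max(S^R(\cdot,\emptyset),V_2^R)\bigr]=k,
\]
and similarly for $y_1$. The plan is to show that, at any fixed $(x_1,y_1)$, the marginal benefit of first-round effort is weakly smaller under SC than under SS. A larger, insurance-like continuation value $V_2^R$ places more weight on the region where the student retakes. In that region the marginal return to $x_1$ flows only through the retained proficiency $f(\eta_M(x_1+x_2))$. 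By the envelope theorem, together with the target structure of Lemma~\ref{lem:target}, this return is pinned down by the second-round effort, so it equals the marginal cost $\beta k$ whenever $x_2^*>0$. That is strictly below $k$ when $\beta<1$, and below the direct marginal $w_M\eta_M f'(\eta_M x_1)$ in the no-retake region. Comparing FOCs, and using concavity of $f$ (so the marginal benefit is decreasing in own effort), a pointwise-lower marginal benefit curve under SC yields $x_1^{SC}\le x_1^{SS}$, and symmetrically $y_1^{SC}\le y_1^{SS}$. This is exactly the argument of Proposition~\ref{prop:monotone}, where raising $C$ shrinks the retake region and raises effort; here switching to SC enlarges the retake region and lowers effort.

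The main obstacle is making this comparative static rigorous. Part (i) shows that SC raises the \emph{level} of $V_2$, but part (ii) needs a statement about \emph{derivatives} in $x_1$: a single-crossing or decreasing-differences property of the objective in $(x_1, \text{rule})$. Two features work against it. Under SC, the kink from the section-level max and the corner solutions of Proposition~\ref{prop:specialization} mean that $V_2^{SC}$ depends on $X_1$ itself and not only on $x_1$. I would handle this by treating the SS-to-SC switch as moving along the retake-cost direction: show that SC is equivalent, at the margin, to a lower effective $C$, meaning it enlarges the retake set $\{V_2^R\ge S(\cdot,\emptyset)\}$ pointwise in $\delta_1$ by part (i). Then I would reuse the monotonicity argument of Proposition~\ref{prop:monotone}. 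If a full supermodularity argument fails, I would fall back on a Topkis-style argument, checking that the difference of objectives $\mathbb{E}[\max(S,V_2^{SC})]-\mathbb{E}[\max(S,V_2^{SS})]$ is nonincreasing in $x_1$ and in $y_1$.
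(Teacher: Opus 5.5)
Part (i) is correct and is the paper's argument. The Stage~2 objective under SC dominates the one under SS at every $(x_2,y_2)$, and maximization preserves this. Your pathwise version, which keeps the first sitting inside $S^{SS}$, is if anything more faithful to the model than the paper's certainty-equivalent write-up.

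\textbf{The decisive step of (ii) is missing.} Your reduction ``SC acts like a lower effective $C$'' does not go through. The proof of Proposition~\ref{prop:monotone} works only because $C$ enters $V_2^R$ additively, so the integrand $\partial V_2^R/\partial x_1$ is unchanged and only the region $\mathcal{R}(C)$ moves. Switching from SS to SC changes the integrand itself. On retake states in which Math is banked (the corner of Proposition~\ref{prop:specialization}), $V_2^{SC}$ contains $w_M X_1$ with $X_1=f(\eta_M x_1)+\delta_{M1}$. So there $\partial V_2^{SC}/\partial x_1=m\equiv w_M\eta_M f'(\eta_M x_1)$, the stopping-state marginal, not $\beta k$. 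Your claim that the retake-state return flows only through $f(\eta_M(x_1+x_2))$ fails exactly there. Moreover, under the genuine $\mathbb{E}_{\delta_2}$ your part (i) uses, even interior SC retakes carry $\beta k+m\Pr(X_1>X_2)$.

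In the certainty-equivalent setting of Lemma~\ref{lem:target}, fix $(x_1,y_1)$ with $\eta_M x_1<E_M^*$. The Math marginal benefit under rule $R$ is then $m-(m-\beta k)\,p^R$, where $p^R$ is the probability of a retake with $x_2^*>0$. So your pointwise comparison holds if and only if $p^{SC}\ge p^{SS}$. The inclusion $\mathcal{R}^{SS}\subseteq\mathcal{R}^{SC}$ does not deliver that, because every SS-retake state in which SC banks Math drops out of the active set. Your Topkis fallback (showing the difference of objectives is nonincreasing in $x_1$) is a restatement of (ii), not a route to it.

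The paper fills this step by asserting, via Proposition~\ref{prop:specialization}, that the corner binds only when $m<\beta k$. That would give $\partial V_2^{SC}/\partial x_1\le\beta k=\partial V_2^{SS}/\partial x_1$ state by state, and combined with $\mathcal{R}^{SS}\subseteq\mathcal{R}^{SC}$ and Lemma~\ref{lem:concavity} it yields (ii). You should not simply import that bound. The corner in Proposition~\ref{prop:specialization} is triggered by the \emph{realized} score $X_1\ge X^*$. A favorable $\delta_{M1}$ produces this even when $\eta_M x_1<E_M^*$, i.e.\ when $m>\beta k$. In fact, at any interior Stage~1 optimum the first-order condition reads $m(1-p^R)+\beta k\,p^R=k$, which for $\beta<1$ forces $m\ge k>\beta k$. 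So banked states raise rather than lower the SC marginal benefit. The obstacle you flagged is therefore genuine. Completing (ii) requires an additional hypothesis that controls the banking channel, such as $p^{SC}\ge p^{SS}$ near the optima; it does not follow from part (i) and the region inclusion alone.
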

The formal proof is provided in Appendix~\ref{subsec:proofsafetynet}. Under linear effort costs, students rationally respond to the Superscoring safety net by reducing precautionary first-round effort. The insurance that lowers the risk of testing also weakens the initial preparation incentive for the entire applicant pool.

\section{Consequences for Measurement}
\label{sec:measurement}
The previous section characterizes student responses to score aggregation rules. We now study how these responses affect the statistical properties of scores observed by universities. 

\subsection{The Impact of Superscoring on Score Inflation and Signal Fidelity}
\label{subsec:inflation}
In this section, we study how Superscoring affects score inflation and bias compared to a Single-Sitting baseline, and how these effects are disparate across students of varying wealth. The score inflation and disparity results in Section~\ref{subsec:inflation} depend primarily on order statistics, holding generally across cost structures ($\alpha \ge 1$). We invoke strict convexity ($\alpha > 1$) in Section~\ref{subsec:institutetradeoff} to analyze participation bounds.

\paragraph{Score Inflation and Signal Degradation} 
We first show how Superscoring leads to higher score distortion and inflation compared to Single-Sitting.

\begin{proposition}[Screening Distortion] 
\label{prop:screening}
Under non-degenerate noise $(\mathrm{Var}(\delta)>0)$ and $N_i\ge 2$ attempts, the expected submitted score under Superscoring strictly exceeds that of Single-Sitting, and both strictly exceed the expected score of a single attempt:
\begin{align*}
    \underbrace{\mathbb{E}\left[w_M\max_k X_k+w_V\max_k Y_k\right]}_{\text{Superscoring}}
     > \underbrace{\mathbb{E}\left[\max_k w_MX_k+w_VY_k \right]}_{\text{Single-Sitting}}
    > \underbrace{\mathbb{E}\left[w_M X_1 + w_V Y_1\right]}_{\text{Single Attempt}}.
\end{align*}
\end{proposition}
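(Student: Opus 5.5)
The plan is to prove both inequalities pointwise, realization by realization, and then take expectations. Strictness will come from showing that each pointwise inequality is strict on an event of positive probability, which the non-degenerate noise guarantees. For the comparison, I fix the effort profile; this means comparing the two aggregation operators on the same attempt scores $(X_k,Y_k)_{k=1}^{N_i}$. The statement is written with identical random variables $X_k,Y_k$ inside all three expectations, so this is the reading the proof adopts.

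\textbf{Superscoring versus Single-Sitting.} For every $k$ we have $X_k\le \max_j X_j$ and $Y_k\le\max_j Y_j$. Since $w_M,w_V>0$, this gives $w_MX_k+w_VY_k\le w_M\max_j X_j+w_V\max_j Y_j$. Taking the maximum over $k$ yields $S^{SS}\le S^{SC}$ pointwise. Equality holds iff some single attempt $k$ attains both section maxima at once. The inequality is therefore strict on the event that the argmax of $X$ and the argmax of $Y$ differ, for instance $\{X_1>X_2,\ Y_2>Y_1\}$ when $N_i=2$.

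I need this event to have positive probability. Because $\delta_{M1},\delta_{M2},\delta_{V1},\delta_{V2}$ are independent, it factors as $\Prob(X_1>X_2)\Prob(Y_2>Y_1)$. Each factor is positive because $\delta_{Mt}$ has a continuous distribution with positive variance, so $\delta_{M1}-\delta_{M2}$ is a non-degenerate continuous variable with mean zero. Such a variable exceeds any fixed constant, in particular the deterministic difference $f(\eta_M(x_1+x_2))-f(\eta_M x_1)$, with positive probability. If that constant is large, I choose the order of the argmax events so that the needed shift goes the easy way, or I appeal to unbounded support. This is the main obstacle: strictness needs the noise to overcome the deterministic effort gap. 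I would first try to settle it with a mild support assumption, e.g.\ full support or support wide relative to score gains, which the paper implicitly uses. Failing that, I would restrict to the symmetric case of equal effort across attempts, where the mean-zero non-degenerate difference is positive with positive probability. Since $S^{SC}-S^{SS}\ge0$ everywhere and is $>0$ on a positive-probability set, its expectation is strictly positive.

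\textbf{Single-Sitting versus a single attempt.} Pointwise, $\max_k(w_MX_k+w_VY_k)\ge w_MX_1+w_VY_1$, with strictness iff some later composite exceeds the first. Let $Z_k=w_MX_k+w_VY_k$. Then $Z_2-Z_1$ equals a nonnegative deterministic term, by monotonicity of $f$ and $x_2,y_2\ge0$, plus the noise difference $w_M(\delta_{M2}-\delta_{M1})+w_V(\delta_{V2}-\delta_{V1})$. This noise difference has mean zero and positive variance, so it is positive with positive probability. Hence $\Prob(Z_2>Z_1)>0$, and here there is no obstacle, since the deterministic shift helps. Equivalently, $\mathbb{E}[\max(Z_1,Z_2)]-\mathbb{E}[Z_1]=\mathbb{E}[(Z_2-Z_1)^+]>0$. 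Combining the two strict inequalities gives the chain in the statement, and for $N_i>2$ the same argument applies with attempts $1$ and $2$ as witnesses.
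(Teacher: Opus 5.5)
Your argument follows the paper's proof. Both use pointwise dominance $S^{SS}\le S^{SC}$ from the max operator, strictness on the event that the two section maxima come from different sittings (factored via independence of the Math and Verbal shocks), and $\max(Z_1,Z_2)\ge Z_1$ with strictness on $\{Z_2>Z_1\}$. Your handling of the second inequality is complete, including for $N_i>2$: the deterministic part of $Z_2-Z_1$ is nonnegative and the noise part has mean zero and positive variance.

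The obstacle you flag for the first inequality is real, and the paper does not resolve it either. The paper asserts that $\delta_{M1}-\delta_{M2}>f(\eta_M(x_1+x_2))-f(\eta_M x_1)$ has positive probability merely because the noise is continuous, symmetric and non-degenerate, and that is false in general. Take $\delta_{jt}$ uniform on $[-\epsilon,\epsilon]$, and let both effort gaps $a=f(\eta_M(x_1+x_2))-f(\eta_M x_1)$ and $b=f(\eta_V(y_1+y_2))-f(\eta_V y_1)$ exceed $2\epsilon$. Then $X_2>X_1$ and $Y_2>Y_1$ surely, so $S^{SC}=S^{SS}$ and the strict inequality fails. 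A full-support (or sufficiently wide support) hypothesis is therefore genuinely needed. Adding it, as you propose, is the right fix, and it holds for the Gaussian noise used in the simulations. Your equal-effort fallback ($a=b=0$) would only prove a special case.

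Three intermediate remarks should be dropped or corrected:
\begin{enumerate}
\item The claim that a mean-zero non-degenerate continuous variable exceeds any fixed constant with positive probability holds only with unbounded upper support.
\item Swapping the order of the argmax events does not avoid the obstacle. $\{X_1>X_2,\,Y_2>Y_1\}$ requires the Math noise to beat $a$, while $\{X_2>X_1,\,Y_1>Y_2\}$ requires the Verbal noise to beat $b$. The best you get is a requirement of $\min(a,b)$, which vanishes only if one subject receives no second-round effort.
\item For $N_i>2$ the witness event must make the \emph{overall} argmaxes differ, e.g.\ $\{X_1>\max_{k\ne 1}X_k\}\cap\{Y_2>\max_{k\ne 2}Y_k\}$. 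This has positive probability under full support by independence; $\{X_1>X_2,\,Y_2>Y_1\}$ alone does not suffice.
\end{enumerate}
With that correction, your direct argument is cleaner than the paper's induction. The induction only shows that each rule's expected score increases in $N_i$ and never re-establishes the Superscoring versus Single-Sitting gap at $n+1$.
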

The proof is provided in Appendix~\ref{subsec:propscreening}. Both multiple-attempt policies inflate expected scores above what a student would achieve in a single attempt (which serves as the baseline measurement of objective ability). However, Superscoring inflates the score strictly more. While Single-Sitting takes the maximum over complete attempts, Superscoring takes the maximum per section. This decouples noise across subjects, allowing a composite score to exceed any single day's actual performance. Consequently, higher scores under Superscoring do not strictly imply higher underlying knowledge.  

\begin{proposition}[Effort Reduction and Score Bias] 
\label{prop:scorebias}
Let $e^R$ denote total expected effort across both rounds under scoring rule $R$. Under Superscoring with $N_i\ge 2$ attempts, due to the Safety Net Effect (Proposition \ref{prop:safetynet}) and Effort Specialization (Proposition \ref{prop:specialization}), expected total effort falls relative to Single-Sitting whenever the expected mechanical score boost exceeds the expected score loss from reduced effort:
\begin{align*} 
    \mathbb{E}[e^{SC}] < \mathbb{E}[e^{SS}], 
\end{align*}
 \end{proposition}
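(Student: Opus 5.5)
We decompose total expected effort as $\mathbb{E}[e^R]=x_1^R+y_1^R+\mathbb{E}\big[\mathbf{1}\{\text{retake}\}\,(x_2^{R*}+y_2^{R*})\big]$, where the expectation is over first-round noise $\delta_1$. We then compare the two rules term by term, with $\alpha=1$ as in Section~\ref{sec:student_behavior}. The first-round term is handled directly by Proposition~\ref{prop:safetynet}(ii): $x_1^{SC}+y_1^{SC}\le x_1^{SS}+y_1^{SS}$. The remaining work is to control the second-round term, and here the stated hypothesis (mechanical boost exceeds the loss from reduced effort) is what we will use.

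For the second round, we fix a realization $(X_1,Y_1)$ and compare the Stage~2 programs \eqref{eq:V2} under the two rules. With linear costs, the first-order conditions in each subject equate the marginal expected score gain to $\beta k$. Under Superscoring, the marginal gain in Math is $w_M\eta_M f'(\eta_M(x_1+x_2))\,\Prob(X_2>X_1)$. The factor $\Prob(X_2>X_1)<1$ strictly dampens this gain relative to a fresh full attempt. In particular, by Proposition~\ref{prop:specialization}, it drives $x_2^{SC}=0$ whenever $X_1\ge X^*$, and symmetrically $y_2^{SC}=0$ whenever $Y_1\ge Y^*$. Under Single-Sitting, the analogous factor is $\Prob(w_MX_2+w_VY_2>w_MX_1+w_VY_1)$, applied jointly to both subjects. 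Proposition~\ref{prop:specialization} already guarantees $x_2^{SS},y_2^{SS}>0$ on the same events. On the event where at least one section clears its target, we therefore get $x_2^{SC}+y_2^{SC}\le x_2^{SS}+y_2^{SS}$ pointwise, strictly so on a positive-probability set, because the noise is continuous.

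On the complementary event, where neither section clears its target, the Superscoring student may retake more often, since $V_2^{SC}\ge V_2^{SS}$ by Proposition~\ref{prop:safetynet}(i). This also lets us write the comparison in terms of score production. Using the target-score characterization of Appendix~\ref{app:target_lemma}, the total effort each rule induces equals the effort needed to reach the expected submitted score minus the noise-driven component. Under SC, Proposition~\ref{prop:screening} says the order-statistic boost $\mathbb{E}[S^{SC}]-\mathbb{E}[S^{SS}]>0$ covers part of the score the student would otherwise buy with effort. Since $f$ is concave and increasing, we can invert: effort saved is at least (boost)/(marginal product), while the extra retake effort on the bad event is at most the score loss divided by the same marginal product. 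The hypothesis that the expected mechanical boost exceeds the expected score loss from reduced effort then yields a strict net reduction, $\mathbb{E}[e^{SC}]<\mathbb{E}[e^{SS}]$.

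The main obstacle is the bad-noise event where both sections miss their targets. There, the retake probability is higher under SC, and per-subject second-round effort need not be smaller. This is precisely why the statement is conditional. I would handle it by expressing both the boost and the effort loss in score units through the strictly positive derivative $f'$, which is bounded away from zero on the relevant range by Assumption~\ref{ass:viable}. This turns the conditional hypothesis directly into an effort inequality, rather than attempting a pointwise comparison there.
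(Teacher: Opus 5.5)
There is a genuine gap in your final step, where the hypothesis is supposed to deliver the effort inequality on the bad-noise event. Write $\Delta e=\mathbb{E}[e^{SS}]-\mathbb{E}[e^{SC}]$ for the effort saved under Superscoring, $B>0$ for the mechanical boost, and $L$ for the score loss from reduced effort. Linearizing through the marginal product $m>0$, as you propose, gives $L\approx m\,\Delta e$. The hypothesis $B>L$ then reads $\Delta e<B/m$. That is an \emph{upper} bound on the effort reduction, not the lower bound ``effort saved is at least boost over marginal product'' that you assert. Indeed $B>L$ is exactly the score comparison $\mathbb{E}[S^{SC}]>\mathbb{E}[S^{SS}]$. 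It is compatible with $\Delta e\le 0$: if effort rises in both subjects, then $L\le 0<B$. So it cannot exclude the case you need to exclude.

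The inversion also assumes a student buys a fixed expected score with effort, so that a free order-statistic boost substitutes for effort one-for-one. Under linear costs, however, Stage-2 effort is pinned down by the marginal condition $f'(E_j^*)=\beta k/(w_j\eta_j)$ of Lemma~\ref{lem:target}, which does not involve the noise at all. In addition, Proposition~\ref{prop:screening} is a fixed-effort comparison and says nothing about equilibrium scores.

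In the paper, the ``whenever'' clause is not used for the effort claim. The proof asserts $\mathbb{E}[e^{SC}]<\mathbb{E}[e^{SS}]$ from the Safety Net and the banking corner alone. It reserves the condition (noise variance large relative to $k$) for the companion conclusion that submitted scores nonetheless rise, $\mathbb{E}[S^{SC}]>\mathbb{E}[S^{SS}]$.

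Your second-round comparison on the good event also fails term by term. At a given realization, lower first-round effort under Superscoring raises the Stage-2 effort needed in the deficient subject: $y_2^{SC}=E_V^*/\eta_V-y_1^{SC}\ge E_V^*/\eta_V-y_1^{SS}=y_2^{SS}$. So with Math banked, $x_2^{SC}+y_2^{SC}\le x_2^{SS}+y_2^{SS}$ would require $x_2^{SS}\ge y_1^{SS}-y_1^{SC}$, which you have not shown. The comparison that works, and that the paper's argument implicitly uses, is in cumulative effort. Conditional on retaking, Single-Sitting brings both subjects to their targets, for a total of $E_M^*/\eta_M+E_V^*/\eta_V$. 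Superscoring with Math banked stops at $x_1^{SC}+E_V^*/\eta_V$, which is smaller whenever $\eta_M x_1^{SC}<E_M^*$.

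You are right that the enlarged retake region $\mathcal{R}^{SS}\subseteq\mathcal{R}^{SC}$ is the real obstacle, since each additional retake without banking lifts cumulative effort up to the targets. The paper's proof does not resolve this either. Closing it would need an extra assumption that directly weighs the additional retake probability (times the gap to the targets) against the banking savings and the first-round reduction. The score-based hypothesis cannot do that job.
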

 The proof is provided in Appendix~\ref{subsec:propscorebias}. Combining Proposition \ref{prop:screening} and \ref{prop:scorebias} reveals a distortion: under Superscoring, expected submitted scores rise even as total expected learning falls. This introduces a systematic upward bias where the observed score artificially overstates true ability, driven entirely by harvesting favorable noise across multiple attempts. Because the ability to secure another attempt is constrained by the wealth-dependent fixed cost $C(W)$, this algorithmic inflation acts as a structural advantage that disproportionately benefits wealthy applicants. We formalize this result in Appendix~\ref{app:corsbr} by introducing the Score Bias Ratio (SBR), which measures this upward bias relative to the expected score under Single-Sitting.

\paragraph{Wealth Amplification}
The distortions above arise mechanically from the scoring rule, but the opportunity to exploit them is not equally distributed. Wealth does not directly increase latent ability; it reduces the effective cost $C(W_i)$ of additional attempts, entering the admissions process indirectly through strategic score generation.

\begin{proposition}[Access Premium]
\label{prop:access}
Consider two students with identical latent ability $(\eta_M,\eta_V)$. Student A is unconstrained $(C^A<\infty)$ and optimally retakes. Student B faces prohibitive barriers $(C^B\to\infty)$, restricting them to one attempt. Under linear effort costs and non-degenerate noise, Student A achieves a strictly higher expected final score:
\begin{align*}
            \mathbb{E}[S^A] > \mathbb{E}[S^B].
\end{align*}
We call the difference $\mathbb{E}[S^A]-\mathbb{E}[S^B]$ the ``Access Premium''.
\end{proposition}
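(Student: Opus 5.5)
The plan is a revealed-preference argument. Student A is free to mimic Student B's strategy, and the retake option only adds value on top of that. Write $(x_1^B,y_1^B)$ for B's optimal single-attempt effort. Since $C^B\to\infty$, B never retakes, and B's problem collapses to the one-shot program $\max_{x_1,y_1}\, w_M f(\eta_M x_1)+w_V f(\eta_V y_1) - k(x_1+y_1)$. This gives
\[
\mathbb{E}[S^B]=w_M f(\eta_M x_1^B)+w_V f(\eta_V y_1^B).
\]
Because the two students share the same ability, and (under linear costs) the same per-unit effort cost, the first-round score technology is identical for both. The comparison therefore reduces to two questions. What does A's submitted score look like on the event that A retakes? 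And how does A's first-round effort compare with B's?

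\textbf{Step 1 (score conditional on A's first-round effort).} Fix any first-round effort $(x_1,y_1)$ for A. Under either rule, A's final score is the maximum over attempts, or a section-wise maximum, of $X_1,Y_1$ and $X_2,Y_2$. In the second attempt the deterministic parts are $f(\eta_M(x_1+x_2))\ge f(\eta_M x_1)$ and $f(\eta_V(y_1+y_2))\ge f(\eta_V y_1)$, because $f$ is increasing and $x_2,y_2\ge 0$. So, pointwise,
\[
S^A\ge w_MX_1+w_VY_1,
\]
and on the retake event $S^A\ge \max(w_MX_1+w_VY_1,\;w_MX_2'+w_VY_2')$, where $X_2',Y_2'$ are the scores a fresh draw would give at effort $(x_1,y_1)$. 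Proposition~\ref{prop:screening}'s order-statistic argument then gives strict inflation. The maximum of two i.i.d. non-degenerate continuous draws has strictly larger mean than a single draw, so
\[
\mathbb{E}[S^A\mid x_1,y_1] > w_M f(\eta_M x_1)+w_V f(\eta_V y_1),
\]
provided the retake event has positive probability. The hypothesis that A ``optimally retakes'' supplies this positive probability.

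\textbf{Step 2 (handling the effort reduction).} The main obstacle is that Proposition~\ref{prop:monotone} shows a finite $C^A$ lowers first-round effort relative to the $C\to\infty$ limit. So $x_1^A\le x_1^B$ and $y_1^A\le y_1^B$, which could in principle offset the inflation in Step 1. I would not compare efforts directly. Instead I would apply the envelope/revealed-preference inequality to utilities and then convert it to scores using the structure of linear costs.

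First, $V_1^A\ge$ A's value from playing $(x_1^B,y_1^B)$ and then optimally choosing whether to retake. That value is strictly greater than $V_1^B$, because retaking is chosen with positive probability and strictly improves on stopping there. Second, I decompose
\[
V_1^A=\mathbb{E}[S^A]-k\,\mathbb{E}[e_1^A+\beta e_2^A]-\beta C^A\Pr(\text{retake}),
\]
and note that B's value is $\mathbb{E}[S^B]-k e_1^B$. This yields
\[
\mathbb{E}[S^A]-\mathbb{E}[S^B]> k\big(\mathbb{E}[e^A_{\text{disc}}]-e_1^B\big)+\beta C^A\Pr(\text{retake}).
\]
The last term is nonnegative. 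The remaining task is the effort term. Here I would use the linear-cost target characterization from Appendix~\ref{app:target_lemma}. With $\alpha=1$, the first-order conditions pin each section's deterministic level to a target whose marginal value equals $k/(w_j\eta_j)$ or $\beta k/(w_j\eta_j)$. A's cumulative effort at the time of the final attempt therefore reaches the Stage 2 target, which is at least B's one-shot target up to the $\beta$ discount. That makes A's cumulative deterministic score at least B's. Adding the strict noise-harvesting gain from Step 1 gives $\mathbb{E}[S^A]>\mathbb{E}[S^B]$.

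If the target comparison fails in the case $\beta<1$, where A's Stage 2 target can be lower, I would fall back on Step 1 alone. On the no-retake event, A's effort already satisfies the Stage 1 first-order condition, so A's score there is at least B's. On the retake event, the order-statistic gain is strict. Combining the two events gives the claim.
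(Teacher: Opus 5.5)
Your Step 1 and the utility inequality
\[
\mathbb{E}[S^A]-\mathbb{E}[S^B] > k\big(\mathbb{E}[e^A_{\text{disc}}]-e_1^B\big)+\beta C^A\Pr(\text{retake})
\]
are essentially fine. Step 1 alone already finishes the proof whenever $x_1^A\ge x_1^B$ and $y_1^A\ge y_1^B$. The whole difficulty is the first-round effort reduction you cite from Proposition~\ref{prop:monotone}, and neither of your two ways of signing the effort term handles it.

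The target characterization of Appendix~\ref{app:target_lemma} controls A's deterministic score only on the \emph{retake} event. On the \emph{no-retake} event, A's submitted score comes from $(x_1^A,y_1^A)$ alone. Coupling the two students on the same $\boldsymbol{\delta}_1$, A's score there is weakly \emph{below} B's, because $x_1^A\le x_1^B$, $y_1^A\le y_1^B$, and $f$ is increasing. Your fallback sentence (``A's score there is at least B's'') therefore asserts the opposite of what your own Step 2 premise implies. A's Stage 1 first-order condition contains the option-value term, and that term is exactly what pulls A's effort below B's one-shot optimum.

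Your worry about $\beta<1$ also points at the wrong event. Since $f'(E_j^*)=\beta k/(w_j\eta_j)\le k/(w_j\eta_j)$, and the right-hand side is the marginal condition defining B's one-shot level, the Stage 2 target is weakly \emph{above} B's. The retake event is never the problem. What you actually need is that the strict order-statistic gain on the retake event outweighs the loss
\[
\mathbb{E}\Big[\big(w_M(f(\eta_M x_1^B)-f(\eta_M x_1^A))+w_V(f(\eta_V y_1^B)-f(\eta_V y_1^A))\big)\mathbf{1}_{\{\text{stop}\}}\Big].
\]
Nothing in the proposal bounds this loss. Likewise, nothing you have shown rules out $\mathbb{E}[e^A_{\text{disc}}]<e_1^B$, so the utility inequality by itself gives no sign.

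The paper does not derive this comparison from primitives either. Its proof uses a simpler mimicking strategy: B's first-round effort, zero second-round effort, and the section-wise maximum of two draws. That gives a strict boost $\Omega>0$. It then rules out an offsetting effort reduction only by importing the parameter condition from Proposition~\ref{prop:scorebias}, namely that $\mathrm{Var}(\delta)$ is large enough relative to $k$ for the mechanical boost to dominate any rational effort-reduction penalty. To close your argument you need one of two things. Either adopt that same dispersion hypothesis, stated explicitly since it is not in the proposition. Or supply a genuine quantitative bound showing the stopping-event loss above is smaller than the retake-event gain.
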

(Proof in Appendix~\ref{subsec:propaccess}.)
Because Student A has the resources to take multiple attempts, their feasible choice set strictly contains Student B's. Superscoring therefore rewards both academic ability and access to repeated testing. As formalized in Appendix~\ref{app:abilitygap}, closing this Access Premium requires the constrained student to possess strictly higher latent ability to overcome their lack of retake opportunities.


\begin{proposition}[Superscoring as an Inequality Accelerant]
\label{prop:accelerant}
Suppose two demographic groups $H$ (high-wealth) and $L$ (low-wealth) differ only in effective retake costs, yielding attempt frequencies $N_H > N_L \ge 1$. 
Under Superscoring, the expected submitted-score gap between high- and low-wealth groups strictly exceeds the gap under Single-Sitting:
\begin{align*}
\mathbb{E}[S_H^{SC}] - \mathbb{E}[S_L^{SC}] > \mathbb{E}[S_H^{SS}] - \mathbb{E}[S_L^{SS}].\end{align*}
\end{proposition}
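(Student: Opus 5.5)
The plan is to reduce the comparison of gaps to a comparison of per-group \emph{inflation premia} relative to a common baseline. Since the groups differ only in retake costs, I will hold the effort profile (hence the deterministic proficiency components $\mu_M=f(\eta_M\cdot)$, $\mu_V=f(\eta_V\cdot)$) and the noise law fixed across $H$ and $L$ and across rules. I will treat the number of attempts $N_g$ as the only group-specific primitive, as the statement does. For a group with $N$ attempts, write the Superscoring and Single-Sitting submitted scores as
\[
\mathbb{E}[S^{SC}(N)]=w_M\,\mathbb{E}\big[\max_{k\le N} X_k\big]+w_V\,\mathbb{E}\big[\max_{k\le N} Y_k\big],\qquad \mathbb{E}[S^{SS}(N)]=\mathbb{E}\big[\max_{k\le N}(w_MX_k+w_VY_k)\big].
\]
Define the wedge $D(N)=\mathbb{E}[S^{SC}(N)]-\mathbb{E}[S^{SS}(N)]$. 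The claimed inequality is then equivalent to
\[
D(N_H)>D(N_L).
\]

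The key steps, in order, are as follows. First, $D(1)=0$, since with one attempt both rules reduce to $w_MX_1+w_VY_1$. Second, $D(N)>0$ for $N\ge2$, which is exactly the strict inequality of Proposition~\ref{prop:screening}. So if $N_L=1$, we are done immediately. Third, for general $N_H>N_L\ge1$, I will show that $D$ is strictly increasing in $N$. The natural route is a coupling: realize attempts $1,\dots,N_H$ on a common probability space and compare with the first $N_L$ attempts. Adding an attempt $N+1$ changes the Superscoring score by
\[
w_M(X_{N+1}-M^X_N)^+ + w_V(Y_{N+1}-M^Y_N)^+,
\]
where $M^X_N,M^Y_N$ are the running section maxima. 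It changes the Single-Sitting score by $(Z_{N+1}-M^Z_N)^+$, with $Z=w_MX+w_VY$. I must show the first increment strictly dominates the second in expectation.

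The main obstacle is this increment comparison. I would try the pointwise bound
\[
(Z_{N+1}-M^Z_N)^+\le w_M(X_{N+1}-M^X_N)^+ + w_V(Y_{N+1}-M^Y_N)^+.
\]
This holds because $M^Z_N\ge w_M X_{j}+w_VY_{j}$ for every $j$, yet $M^Z_N\le w_MM^X_N+w_VM^Y_N$ points the wrong way, so the pointwise bound may fail. If so, I would instead bound the \emph{cumulative} difference directly. Namely, $D(N)=\mathbb{E}[w_MM^X_N+w_VM^Y_N-M^Z_N]$, and the integrand $G_N=w_MM^X_N+w_VM^Y_N-M^Z_N\ge0$ is itself an exchangeable functional. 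I would aim to show $G_{N+1}\ge G_N$ fails only on events that are compensated, using exchangeability of the i.i.d. noise. Concretely, I would write $\mathbb{E}[G_{N+1}]-\mathbb{E}[G_N]$ via the symmetric ``which attempt is the argmax'' decomposition. I would then use the independence of the $M$ and $V$ noise, so that the probability that the section maxima come from different attempts strictly increases with $N$.

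If this general monotonicity proves intractable, the fallback is to use the paper's two-attempt setting, where $N_H=2$ and $N_L=1$. In that case steps one and two already give the result. Under this fallback I would also note that any wealth-induced effort differences would enter both rules identically, through the same $\mu$'s, and hence cancel in the difference of gaps.
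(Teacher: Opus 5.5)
Your reduction to $D(N_H)>D(N_L)$, with $D(1)=0$ and $D(N)>0$ for $N\ge 2$, is the same as the paper's. Your doubt about the pointwise increment bound is also well founded. That bound is exactly the step the paper's proof relies on, and it is invalid there too. The paper argues that because $S_{MAX}\le w_MX_{MAX}+w_VY_{MAX}$, substituting the larger quantity into $\max(0,Z_{new}-S_{MAX})$ gives an upper bound on $\mathrm{Gain}^{SS}$. But subtracting something larger gives a \emph{lower} bound. Take $w_M=w_V=\tfrac12$, past attempts $(10,0)$ and $(0,10)$, and a new attempt $(9,9)$. Then $\mathrm{Gain}^{SS}=9-5=4$ while $\mathrm{Gain}^{SC}=0$, and by continuity this configuration has positive probability under full-support noise. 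The bound is valid only at $N=1$, where $S_{MAX}=w_MX_{MAX}+w_VY_{MAX}$.

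Your exchangeability route cannot close the gap either, because $D$ need not be increasing at all. The probability that the two section maxima come from different attempts does rise with $N$ (it equals $1-1/N$ for i.i.d.\ continuous attempts). However, the size of the wedge on that event can shrink. For bounded-support noise, $\max_kX_k$, $\max_kY_k$ and $\max_kZ_k$ converge to $\bar x$, $\bar y$ and $w_M\bar x+w_V\bar y$ respectively, the last being the top of $Z$'s support by independence. Hence $D(N)\to 0$ while $D(2)>0$. Concretely, for $\delta\sim\mathrm{Unif}[-1,1]$ with equal weights and identically distributed attempts, $D(N)=-\tfrac{2}{N+1}+\tfrac{1}{2^N(2N+1)}+\int_0^1(1-u^2/2)^N\,du$. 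This gives $D(2)=0.1$, $D(10)\approx 0.20$ and $D(100)\approx 0.105$, so taking $N_L=10$, $N_H=100$ reverses the claimed inequality. The statement is therefore false for general $N_H>N_L\ge1$ without further distributional assumptions. It does hold, for example, for Gaussian noise with identically distributed attempts, where $D(N)=\sigma m_N\bigl(1-\sqrt{w_M^2+w_V^2}\bigr)$ and $m_N$ is the expected maximum of $N$ standard normals.

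Your fallback is the argument that actually works. The model caps attempts at two, which forces $N_H=2$, $N_L=1$, and then $D(2)>0=D(1)$ by Proposition~\ref{prop:screening}. Strictness needs the split-argmax event to have positive probability, which is automatic for full-support noise. Keep in mind that this argument, like the paper's, holds effort fixed across rules. With the rule-dependent effort of Propositions~\ref{prop:safetynet} and~\ref{prop:scorebias}, the $H$-group wedge also contains the score loss from reduced effort, and its sign is no longer automatic. Your remark that effort differences ``cancel'' is valid only under that fixed-effort reduction.
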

(Proof in Appendix~\ref{subsec:propaccelerant}.)
Under Single-Sitting, the score gap primarily reflects the behavioral differences induced by heterogeneous retake costs. Under Superscoring, this gap widens because wealthier students can afford more attempts, gaining larger artificial score boosts. The admissions policy itself then amplifies existing resource differences.

\subsection{Institutional Tradeoff}
\label{subsec:institutetradeoff}
From the student's perspective, the aggregation policy distorts effort and retake behavior. From the university's perspective, these responses generate systematic score inflation and wealth-based access premiums. We first evaluate how Superscoring can positively impact student participation, then formally characterize the resulting institutional trade-off.

\paragraph{Superscoring Changes Student Participation}
We formally define participation as the decision to enter the applicant pool and take the first-round exam. A student participates ($A_i = 1$) if their ex-ante expected utility is non-negative ($V_1^R \ge 0$). If $V_1^R < 0$, increasing effort makes the student strictly worse off, and they exit the pool entirely ($A_i = 0$). Note that this initial participation decision is distinct from the decision to retake, which is governed by the Stage 2 continuation value ($V_2^R$). To study the limits of participation, we invoke strictly convex variable costs ($\alpha > 1$).

\begin{proposition}[Participation Effect]
\label{prop:participation}
Under convex multidimensional effort costs $c(x,y)=\frac{k}{\alpha}(x+y)^\alpha$ with $\alpha>1$, there exist students for whom the ex-ante expected utility is strictly positive under Superscoring but negative under Single-Sitting ($V_1^{SC} \ge 0 > V_1^{SS}$). Superscoring therefore induces participation $(A_i(SC)=1)$ for constrained students who would otherwise exit the applicant pool entirely $(A_i(SS)=0)$. This result requires $\alpha > 1$ and does not hold under linear costs.
\end{proposition}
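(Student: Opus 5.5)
The plan is to prove the existence claim by an explicit continuity/intermediate-value construction on a one-parameter family of students, and to prove the linear-cost negative part by a scaling argument. I will use the key comparison $V_1^{SC}(\theta) > V_1^{SS}(\theta)$ for the relevant types $\theta$. By Proposition~\ref{prop:safetynet}(i), $V_2^{SC}(x_1,y_1) \ge V_2^{SS}(x_1,y_1)$ pointwise. Therefore, for any fixed first-round effort, the integrand $\max(S(\cdot,\emptyset), V_2^R)$ in Program~\ref{eq:V1} is weakly larger under Superscoring.

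The first step is to make this strict on a set of positive probability. I fix the Stage 1 optimizer $(x_1^{SS},y_1^{SS})$ and consider realizations $\delta_1$ under which retaking is optimal under SS. On that event, the Superscoring continuation value strictly exceeds the SS continuation value. The reason is that plugging the SS-optimal $(x_2,y_2)$ into SC gives
\begin{align*}
\mathbb{E}\left[w_M\max(X_1,X_2)+w_V\max(Y_1,Y_2)\right] > \mathbb{E}\left[\max(w_MX_1+w_VY_1,\,w_MX_2+w_VY_2)\right],
\end{align*}
which is strict by the order-statistic argument behind Proposition~\ref{prop:screening}, since noise is continuous and non-degenerate. The integrand gap is continuous in $\delta_1$, so evaluating $V_1^{SC}$ at the SS-optimal effort yields $V_1^{SC} \ge V_1^{SS} + \varepsilon$ for some $\varepsilon>0$. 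I also have to rule out the case where no student ever retakes under SS. This is handled by choosing the fixed cost $C$ small enough that retaking is optimal for a nonnegligible set of first-round outcomes.

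The second step is the construction. I parametrize students by the cost scale $k$, or equivalently by wealth $W$, holding $(\eta_M,\eta_V)$, $C$, and $\sigma$ fixed. Each $V_1^R(k)$ is a value function of a parametric maximization, so by Berge's maximum theorem it is continuous in $k$ and weakly decreasing. As $k\to 0$ it is positive, and as $k\to\infty$ optimal effort vanishes and $V_1^R\to \mathbb{E}[\max(w_M\delta_{M1}+w_V\delta_{V1},\,V_2)]$ net of costs. To get genuinely negative values, I would follow the paper's participation notion and measure utility relative to an outside option $\bar u>0$, or subtract the first-sitting fee. Then the two continuous curves $V_1^{SS}(k)$ and $V_1^{SC}(k)$ each cross zero, and $V_1^{SC}>V_1^{SS}$ strictly. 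I define $k^{SS}=\inf\{k: V_1^{SS}(k)<0\}$. Continuity and the gap $\varepsilon$, which I need uniformly on a neighborhood, give an interval of $k$ just beyond $k^{SS}$ where $V_1^{SC}\ge 0>V_1^{SS}$.

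The third step is the linear case. When $\alpha=1$, the objective in Program~\ref{eq:V1} is positively homogeneous under the interior structure of Assumption~\ref{ass:viable}: effort enters linearly in cost, and at the optimum the student pins scores to the targets $X^*,Y^*$ of Lemma~\ref{lem:target}, whose marginal conditions $f'=\beta k/(w_j\eta_j)$ do not depend on the first-round split. I would show that $V_1^R$ then reduces to a target score minus a cost term that is identical across rules, except for an insurance term that vanishes at the participation margin. As a result the sign of $V_1^R$ is rule-independent: either both are nonnegative or the zero-effort outside option binds for both. I expect this negative part, and making the uniform strict gap $\varepsilon$ rigorous without an explicit form of $V_2^R$, to be the main obstacle. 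If the homogeneity argument fails, I would instead verify it in the paper's narrower sense, namely that the corner/target structure leaves no marginal type strictly separated.
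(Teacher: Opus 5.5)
Your Steps 1--2 take a different route from the paper's. The paper fixes a target $(\bar x,\bar y)$ and compares a Superscoring split plan, $(\bar x,0)$ in Round 1 and $(0,\bar y)$ in Round 2 at cost $\frac{k}{\alpha}\bar x^\alpha+\beta\frac{k}{\alpha}\bar y^\alpha$, against delivering $(\bar x,\bar y)$ in a single Single-Sitting round at cost $\frac{k}{\alpha}(\bar x+\bar y)^\alpha$. Superadditivity of $z^\alpha$ then gives $V^{SC*}>V^{SS*}$, and the sign crossing is simply asserted ``by continuity.''

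You instead get the strict gap from pointwise dominance of continuation values, plus order-statistic strictness on the Single-Sitting retake event. You also actually carry out the crossing argument; for that, the gap at $k^{SS}$ plus continuity of $V_1^{SC}$ suffices, so no uniformity is needed. You are right that a positive outside option or entry fee is indispensable. Zero effort with no retake gives $V_1^R\ge\mathbb{E}[S_1]=0$, since $f(0)=0$ and the noise is mean zero. So $V_1^{SS}<0$ is impossible in the model as written, which the paper's proof passes over.

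The genuine gap is Step 3, and your own Steps 1--2 refute it, because nothing there uses $\alpha>1$. At $\alpha=1$ all the ingredients survive: the continuation values are still pointwise ordered, retaking under Single-Sitting still has positive probability for small $C$, the section-wise maximum still gives a strict gap on that event, and $V_1^R(k)$ is still continuous and decreasing. So the same construction produces linear-cost types with $V_1^{SC}\ge\bar u>V_1^{SS}$.

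Correspondingly, two claims in Step 3 fail. The insurance term does not vanish at the participation margin: the noise-harvesting boost is strictly positive whenever retaking has positive probability. The homogeneity claim also fails, because strict concavity of $f$ makes the objective non-homogeneous in effort even with linear costs. Hence the strong negative statement is false once you adopt the outside option that Step 2 needs.

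What can be shown is only the paper's narrow claim. At $\alpha=1$, $(\bar x+\bar y)^\alpha=\bar x^\alpha+\bar y^\alpha$, so the split plan earns no cost discount and only adds $\beta C$; hence Superscoring cannot induce participation through the convexity channel. Your fallback should be stated in exactly that form.

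There is a further reason convexity is not the separating force. Because effort accumulates across sittings, the split plan is also feasible under Single-Sitting at the same cost, with $\mathbb{E}[S^{SS}]\ge\mathbb{E}[w_MX_2+w_VY_2]=w_Mf(\eta_M\bar x)+w_Vf(\eta_V\bar y)$. This confirms that the max-operator gap you isolate in Step 1, not convexity, is what separates the two rules.
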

The proof is provided in Appendix~\ref{subsec:propparticipation}. Since $(x + y)^\alpha > x^\alpha + y^\alpha$ for $\alpha>1$, simultaneous preparation costs under Single-Sitting are superadditive. Superscoring avoids this compound penalty by allowing constrained students to target one subject per attempt. Single-Sitting therefore excludes a pool of high-ability students not for lack of merit, but because they cannot afford the simultaneous effort penalty.

\paragraph{Evaluating the Policy Trade-off } 
To formalize the trade-off between expanding participation and degrading the statistical signal, we evaluate a stylized binary setting. Let $V_{uni}(R)$ denote the expected latent ability of the admitted class (the top $K$ students) under rule $R$.

Consider an applicant pool with two groups: a fraction $\gamma\in(0,1)$ are constrained Type A students with high ability $\eta_A$ who participate only under Superscoring. The remaining $(1-\gamma)$ are unconstrained Type B students with lower ability $\eta_B<\eta_A$ who participate under both policies. We assume $\eta_A>\bar\eta_B$, where $\bar\eta_B$ is the expected ability of the top-$K$ Type B students under Single-Sitting.

\begin{proposition}[Institutional Trade-Off]
\label{prop:tradeoff}
Let $V_{uni}(R)$ denote the expected true ability of the top $K$ admitted students under scoring rule $R$. Under Single-Sitting, because constrained Type A students exit the pool, the admitted class is drawn entirely from Type B applicants: $V_{uni}(SS)=\bar\eta_B$. Under Superscoring, both groups participate, and the expected admitted class contains a probabilistic mixture of high-ability Type A students and artificially inflated Type B students. Because $V_{uni}(SC,\gamma)$ is strictly increasing and continuous in $\gamma$, there exists a unique threshold $\gamma^*\in(0,1)$ such that:
\begin{align*}
            V_{uni}(SC) > V_{uni}(SS) \iff \gamma > \gamma^*.
\end{align*}
\end{proposition}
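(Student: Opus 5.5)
The plan is to parametrize the superscoring outcome as a mixture and prove the claim by an intermediate-value argument. I will then add a strict-monotonicity step, which is where the actual work lies.

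\textbf{Step 1: the Single-Sitting value.} Under Single-Sitting, Proposition~\ref{prop:participation} gives $A_i(SS)=0$ for Type A students. The pool is therefore only Type B, and $V_{uni}(SS)=\bar\eta_B$ by definition; this does not depend on $\gamma$.

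\textbf{Step 2: writing $V_{uni}(SC,\gamma)$.} Let $p(\gamma)$ be the expected fraction of the top-$K$ admitted class that is Type A. Since every Type A student has ability $\eta_A$,
\[
V_{uni}(SC,\gamma)=p(\gamma)\,\eta_A+\bigl(1-p(\gamma)\bigr)\,\bar\eta_B^{SC}(\gamma),
\]
where $\bar\eta_B^{SC}(\gamma)$ is the expected ability of the admitted Type B students under Superscoring. I will model scores as ability plus noise, with the noise inflated for Type B by Propositions~\ref{prop:screening} and~\ref{prop:access}. The admitted set is then determined by the order statistics of a two-component mixture. As $\gamma$ increases, the number of Type A competitors rises and the number of Type B competitors falls, so the admission threshold $S^{SC}_{admit}$ shifts continuously. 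This gives continuity of $p$ and of $V_{uni}$ in $\gamma$, using the continuous noise distribution.

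\textbf{Step 3: the endpoints.} At $\gamma\to 0$, the pool is all Type B. The admitted class is selected on the noisier superscore, and selection on a noisier signal gives weakly lower expected ability. So $V_{uni}(SC,0)\le\bar\eta_B$, with strict inequality if Superscoring adds noise that is not perfectly correlated with ability, which is the signal-degradation content of Proposition~\ref{prop:screening}. At $\gamma\to1$, essentially all admits are Type A, so $V_{uni}(SC,1)=\eta_A>\bar\eta_B$ by assumption. By the intermediate value theorem there is a crossing point $\gamma^*\in(0,1)$.

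\textbf{Step 4: strict monotonicity, the main obstacle.} Uniqueness of $\gamma^*$, and hence the ``iff'', follows once $V_{uni}(SC,\gamma)$ is shown to be strictly increasing. The statement asserts this, but I would justify it by a coupling argument. Raising $\gamma$ replaces some Type B applicants by Type A applicants, whose score distribution stochastically dominates Type B's in the relevant range. This raises $p(\gamma)$ strictly. The composition effect then contributes $p'(\gamma)\bigl(\eta_A-\bar\eta_B^{SC}\bigr)>0$.

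The delicate point is the within-group term $\bar\eta_B^{SC}(\gamma)$. It can move as the threshold rises, but it stays at most $\eta_A$. If Type B abilities are heterogeneous, it actually weakly increases, since a higher threshold selects more able Type B students. I would handle this case by an explicit stochastic-dominance or first-order-statistics computation. If that computation becomes too messy, I would fall back to assuming homogeneous $\eta_B$ within each type, which is the stylized binary setting. In that case the within-group term is constant, and monotonicity reduces to $p'(\gamma)>0$.

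Combining Steps 1--4, $V_{uni}(SC,\gamma)>V_{uni}(SS)$ holds exactly when $\gamma>\gamma^*$.
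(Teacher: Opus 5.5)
Your architecture matches the paper's proof. It uses the same mixture decomposition $V_{uni}(SC,\gamma)=P_A(\gamma)\eta_A+(1-P_A(\gamma))\eta_{noisy}(\gamma)$, the same endpoint comparison, the intermediate value theorem, and the same split of $dV_{uni}/d\gamma$ into a composition term and a within-Type-B selection term. The genuine gap is in how you propose to close Step~4. The homogeneous-$\eta_B$ fallback does not prove the statement; it contradicts it. If every Type~B student has ability $\eta_B$, then $\bar\eta_B=\eta_B$, and every admitted Type~B student has ability $\eta_B$ under either rule. Hence $V_{uni}(SC,\gamma)=p(\gamma)\eta_A+(1-p(\gamma))\eta_B>\eta_B=V_{uni}(SS)$ for every $\gamma>0$, since continuous full-support noise gives $p(\gamma)>0$. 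Your strict inequality at $\gamma\to0$ in Step~3 becomes an equality, and the crossing point is $\gamma^*=0\notin(0,1)$. The separate assumption $\eta_A>\bar\eta_B$ would also be redundant there. Heterogeneity in Type~B ability is what creates an interior threshold, so the within-group term cannot be avoided.

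In the heterogeneous case, your claim that ``a higher threshold selects more able Type~B students'' needs two ingredients you have not supplied. First, (a), the cutoff $s^*(\gamma)$ must rise with $\gamma$. From the capacity constraint, this holds when $\Prob(S_A\ge s^*)>\Prob(S_B\ge s^*)$ at the cutoff; the paper's restriction to the case where Type~A keeps the score advantage is meant to deliver this. Second, (b), $s\mapsto\mathbb{E}[\eta_B\mid S_B\ge s]$ must be nondecreasing. The paper obtains (b) by assuming log-concave noise, i.e.\ the monotone likelihood ratio property. Without such an assumption (b) can fail: with heavy-tailed noise, $\mathbb{E}[\eta_B\mid S_B\ge s]$ falls back toward the unconditional mean as $s\to\infty$.

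Separately, the strictness in Step~3 cannot come from Proposition~\ref{prop:screening}. That result is about expected inflation, which is common to all Type~B applicants and so does not change their ranking at $\gamma\to0$. What you need is that the superscore is a strictly less informative signal of $\eta_B$ than the Single-Sitting score, which Type~B students also obtain from multiple attempts. The paper asserts $\eta_{noisy}<\bar\eta_B$ without deriving it, so you should state it explicitly as a hypothesis rather than cite it as a consequence.
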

The proof is provided in Appendix~\ref{subsec:proptradeoff}. Superscoring is preferred by the university if and only if the recovered constrained population is large enough to mathematically offset bias introduced among the unconstrained population.


\section{Experimental Results and Corrections}
\label{sec:experiments_and_corrections}
Neither existing rule simultaneously achieves accurate measurement and broad participation. While Superscoring preserves applicant participation, it introduces noise-driven score inflation. To retain this participation benefit while neutralizing inflation, we propose three algorithmic interventions that correct for strategic gaming under varying information constraints. Specifically, we introduce the Variance Tax and the Full-Information Score Correction (FISC) as structural scoring rules, while the Contextual Correction acts as a post-processing intervention applied directly to preserve the benefits of Superscoring.

\subsection{Alternative Scoring Rule Design} 
\label{subsec:alternative_rules}
To mitigate the distortions of uncorrected Superscoring, we propose three alternative mechanisms tailored to different institutional information constraints. We distinguish these rules by whether they act as structural scoring rules—which change how students study and retake the exam—or as post-processing interventions applied after scores are submitted.

\paragraph{FISC: Full-Information Score Correction} If the university observes a student's exact attempt count $N_i$, it can apply an additive correction of the form $S_i^{FISC} = S_i^{SC} - g(N_i)$. In Appendix~\ref{app:mechanisms}, we characterize the optimal Information-Constrained Correction (Theorem~\ref{thm:optimal}). We prove that the mean-squared-error minimizing penalty is the expected noise premium: $g^*(N_i)=\mathbb{E}[\Omega(N_i)\mid N_i]$. When deployed as a structural rule, FISC changes student behavior. By taxing multiple attempts, it deters students from retaking the exam to harvest noise, while retaining the max operator's participation benefits to correct a deficient subject.

\paragraph{TAX: Variance Tax} If the university mandates the submission of all attempts, it can penalize variance directly through a convex combination of scores: $\widehat S_{VT}(\lambda) = \lambda\max(S_1,S_2)+(1-\lambda)\min(S_1,S_2)$. As shown in Appendix~\ref{app:vartax}, setting $\lambda \in (0.5, 1)$ compresses the right tail of the score distribution without requiring the university to estimate the expected noise premium. Like FISC, the Variance Tax is a structural rule that reduces the expected value of retaking solely to harvest noise, endogenously shifting student effort.

\paragraph{CTX: Contextual Correction} Because demographic-blind statistical corrections can exclude high-ability, low-wealth applicants from the right tail, the Contextual Correction instead targets Equal Opportunity. If the university observes applicant wealth cohorts $q\in\mathcal{G}$, it applies a group-specific penalty $c^*(q)$ to equalize False Negative Rates across wealth groups for equally qualified applicants (Appendix~\ref{app:mechanisms}). Unlike FISC and the Variance Tax, the Contextual Correction is a post-processing intervention. Because the penalty $c^*(q)$ is constant for a given demographic group, it drops out of the student's marginal utility maximization. Consequently, it preserves individual effort allocations; students optimize exactly as they would under standard Superscoring.

\subsection{Experimental Results}\label{subsec:results} 
We computationally simulate the student dynamic program over $M=200$ Monte Carlo draws for $N=10,000$ students. For the full experimental setup, including the strictly convex effort costs ($\alpha=2$) and exact parameter calibration to the \textit{2025 Total Group SAT Suite of Assessments Annual Report} \cite{collegeboard2025}, refer to Appendix~\ref{app:experimental_setup}. Our simulation assumes students optimize their effort and retake decisions against the active structural scoring rule (Single-Sitting, Superscoring, or the Variance Tax). Because the Contextual Correction is a post-processing mechanism, it mathematically inherits the broad participation and effort distributions generated under Superscoring.


We now present simulation results for the empirically calibrated 2025 applicant pool, where students face wealth-dependent effort costs. To isolate the effects of the aggregation rules from the empirical skew of the 2025 demographics, we also run controlled baseline simulations using a perfectly balanced population (where each wealth quintile is about 20\% of the applicant pool) in Appendix~\ref{app:uniformpop}.

\paragraph{Endogenous Effort and Participation}
Figure \ref{fig:effort_decomp_empirical} demonstrates that there is a significant contrast in effort capacity across wealth quintiles. Financial and temporal frictions force low-wealth students (Q1) to exert lower effort in both preparation rounds---especially the second round---compared to their wealthier peers (Q5) regardless of the scoring rule. Figure \ref{fig:retake_rates_empirical} further demonstrates that the impact is also seen in terms of retake attempts, where financial and temporal costs prevent low-income students from fully utilizing the mechanism: Q5 students retake at a 95.3\% rate under Superscoring, while Q1 students retake at only 47.4\%.



\begin{figure}[ht] 
\centering 
\includegraphics[width=\columnwidth]{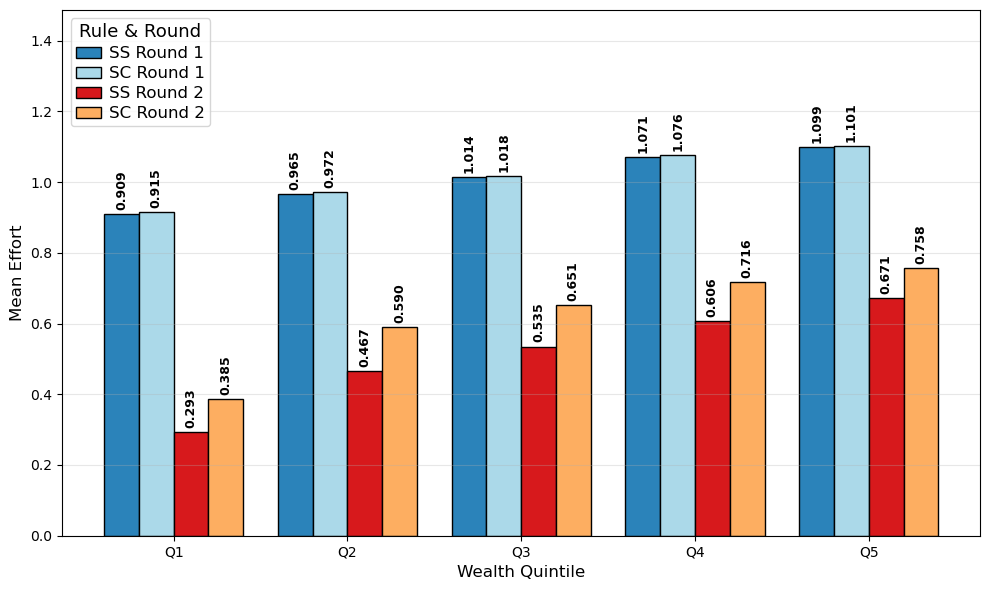}  
\caption{\footnotesize Effort Exerted by Round and Wealth Quintile under Single-Sitting in Round 1 (blue, with SS on the left and SC on the right) and Round 2 (red, with SS on the left and SC on the right).}
\label{fig:effort_decomp_empirical} 
\vspace{-3mm}
\end{figure}

\begin{figure}[ht] 
\centering 
\includegraphics[width=0.9\columnwidth]{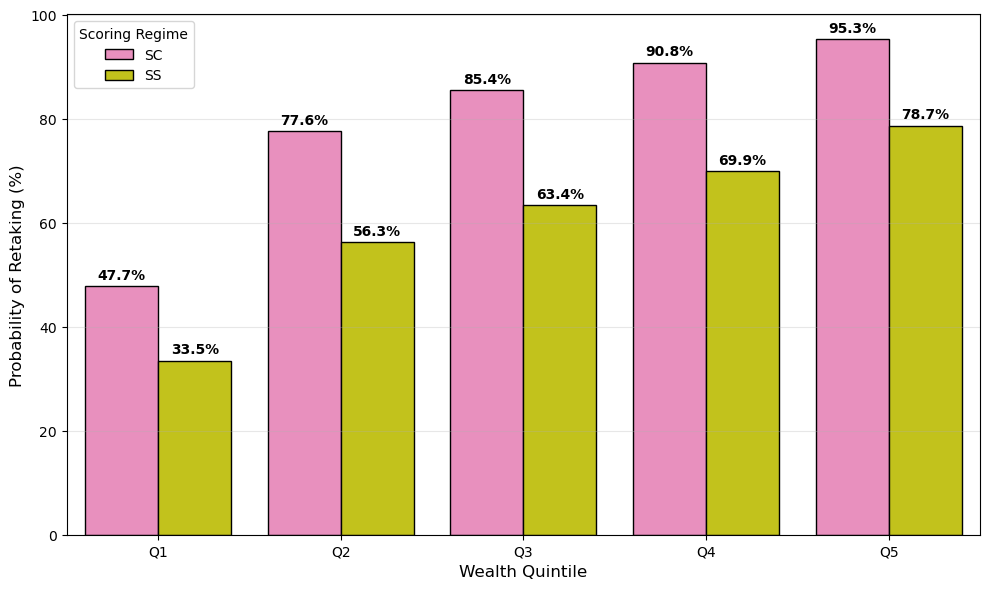} 
\caption{\footnotesize $\%$ of students who retake under Superscoring (pink) vs under Single-Sitting (yellow) for varying wealth quintiles.
}
\label{fig:retake_rates_empirical} 
\end{figure} 
\paragraph{Score Inflation and Algorithmic Premiums}
This algorithmic advantage for higher wealth populations persists even when holding latent ability constant. Figure \ref{fig:access_premium} illustrates the Access Premium across true ability quintiles. We see that students with above-median wealth extract higher final scores than their equally qualified peers with below-median wealth.
\begin{figure}[ht]
\centering
\includegraphics[width=0.9\columnwidth]{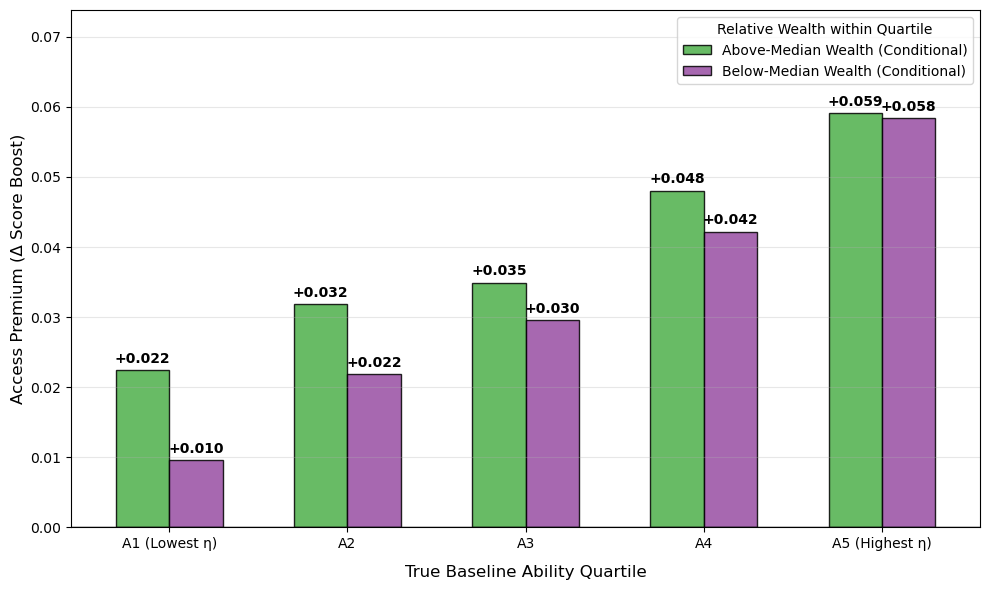} 
    \caption{\footnotesize Conditionally Balanced Access Premium by True Ability Quintile for students with above-median wealth in green, and students with below-median wealth in purple.}
\label{fig:access_premium}
\end{figure}

\paragraph{Algorithmic Correction and Institutional Fairness}
Figure \ref{fig:algorithmic_premium} further studies the algorithmic score premium and the impact of our interventions on said premium. While uncorrected Superscoring inflates scores across the board, FISC and the Variance Tax penalize this unearned variance, impacting the highest wealth quintiles more aggressively. Under these corrections, lower wealth quintiles obtain a better access premium as opposed to Superscoring. 
\begin{figure}[ht]
\centering
\includegraphics[width=0.9\columnwidth]{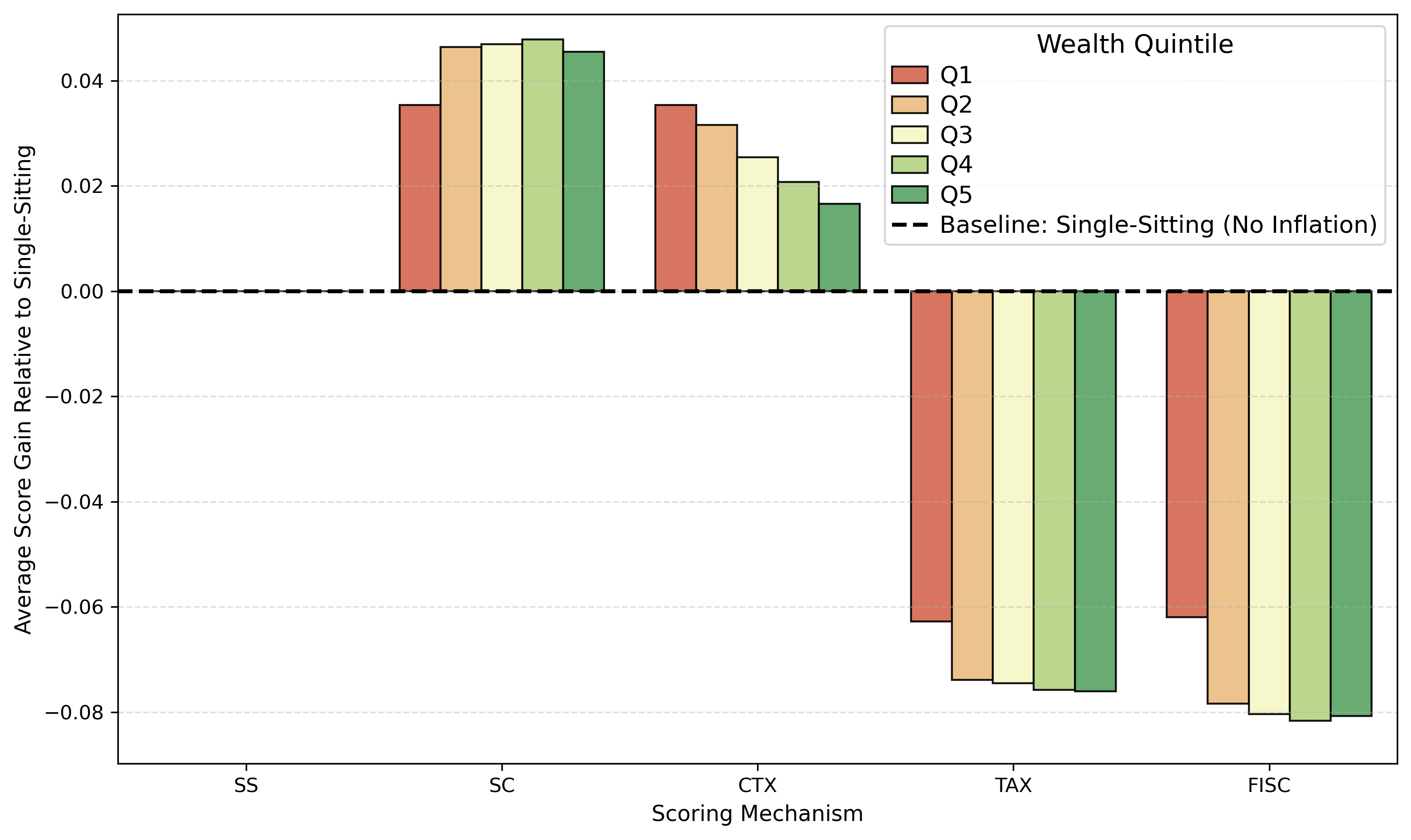} 
\caption{\footnotesize Algorithmic Score Premium relative to Single-Sitting. For each scoring rule (SS, SC, CTX, TAX, FISC), the grouped bars represent the five wealth quintiles sequentially, from Q1 to Q5.}
\label{fig:algorithmic_premium}
\end{figure}
The institutional consequences of uncorrected Superscoring are severe. Table \ref{tab:error_rates_fnr} shows how the sheer volume of Q5 applicants harvesting noise drives the aggregate admission threshold upward, pushing the False Negative Rate (FNR) for qualified Q1 students to 76.0\% (Table \ref{tab:error_rates_fnr}). The insights of Table \ref{tab:error_rates_fnr} are shown to be robust to changes in parameter calibration in Appendix~\ref{app:grid_robustness}. Finally, we note here that while False Positive Rates (FPR) also vary by policy (see Appendix~\ref{subsec:fpr}), they remain structurally low across all groups; thus, we focus on FNR as the main differentiator between our rules. 
\begin{table}[h]
\centering
\begin{tabular}{lccccc}
\toprule
\textbf{Rule} & \textbf{Q1} & \textbf{Q2} & \textbf{Q3} & \textbf{Q4} & \textbf{Q5} \\
\midrule
SS         & 76.0\%&   24.5\%&   12.3\%&    11.0\%& 8.2\%\\
SC         & 76.0\%& 28.3\%& 14.8\%& 9.2\%& 8.6\%\\
TAX        & 64.0\%& 13.2\%& 14.8\%& 9.2\%& 5.6\%\\
FISC       & 52.0\%& 18.9\%& 16.0\%& 11.0\%& 9.5\%\\
CTX& 16.0\%& 11.3\%& 12.3\%& 12.8\%& 15.1\%\\
\bottomrule
\end{tabular}
\caption{\footnotesize False Negative Rate (FNR) by scoring rule for each wealth quintile. The rules are labeled as Single-Sitting (SS), Superscoring (SC), Variance Tax (TAX), Full-Information Score Correction (FISC), and Contextual Correction (CTX).}\label{tab:error_rates_fnr}
\vspace{-3mm}
\end{table}

The Variance Tax (TAX) penalizes unearned variance, reducing the Q1 FNR to 64.0\%. However, TAX is strictly a measurement correction; it remains blind to underlying structural inequalities in effort costs. Consequently, it only returns admission rates to near the Single-Sitting baseline—leaving Q1 with 1.8\% of the admitted class, less than half of their true share of top latent ability (5.0\%) (Table~\ref{tab:admitted_demographics}). Similarly, by explicitly taxing multiple attempts, FISC targets the noise premium directly. Because students endogenously respond to this structural penalty, FISC successfully deters speculative retaking, further dropping the Q1 FNR to 52.0\%. However, while FISC neutralizes the algorithmic advantage of noise harvesting, it cannot resolve the unequal financial and temporal barriers to initial preparation. As a result, demographic-blind measurement corrections hit a structural ceiling: under FISC, Q1 remains underrepresented (2.8\% admitted versus 5.0\% True Ability), while the Q5 cohort remains overrepresented (51.0\% admitted versus 46.4\% True Ability). 

\begin{table}[h]
\centering
\begin{tabular}{lcccccc}
\toprule
\textbf{Q}& \textbf{TA}& \textbf{SS} & \textbf{SC} & \textbf{TAX} & \textbf{FISC} & \textbf{CTX}\\
\midrule
Q1& 5.0\%& 1.2\%& 1.4\%& 1.8\%& 2.8\%& 6.0\%\\
Q2          & 10.6\%& 8.0\%& 7.8\%& 9.2\%& 8.6\%& 10.6\%\\
Q3          & 16.2\%& 14.6\%& 14.2\%& 14.4\%& 14.4\%& 14.8\%\\
Q4          & 21.8\%& 23.4\%& 24.2\%& 22.0\%& 23.2\%& 22.2\%\\
Q5& 46.4\%& 52.8\%& 52.4\%& 52.6\%& 51.0\%& 46.4\%\\
\bottomrule
\end{tabular}
\caption{\footnotesize Admitted Class Composition for the top 5\% of applicants. Columns denote Wealth Quintile (Q), True Ability baseline (TA) for SS, SC, TAX, FISC, and CTX.}
\label{tab:admitted_demographics}
\end{table}

CTX, by contrast, utilizes relative structural evaluation over absolute measurement. By explicitly targeting Equal Opportunity, it sets group-specific penalties based on the conditional score CDF, evaluating Q1 students against the constraints of their own socioeconomic environment rather than the globally advantaged Q5 population. CTX successfully stabilizes error rates across the population, compressing the FNR spread to a tight, equitable band (11.3\% to 16.0\%). Because Q5 constitutes 46.4\% of the true top 5\% in latent ability, a sufficient volume of high-ability Q5 students clear the adjusted cutoff to maintain their proportional share. CTX is the only mechanism that actively redistributes admitted seats to approximate the True Ability distribution for each wealth quintile, successfully increasing Q1's share of the admitted class from 1.4\% under SC to 6.0\%.

\section*{Acknowledgements}
The authors acknowledge support from the US NSF under grants IIS-2504990 and IIS-2336236. Any opinions and findings expressed in this material are those of the authors and do not reflect the views of their funding agencies.

\clearpage
\bibliography{references}

\clearpage
\appendix

\section{Limitations and Future Work}
\label{app:futurework}
Our framework relies on several assumptions to isolate the mechanics of score aggregation. First, while we capture disparities in test preparation through wealth-dependent effort costs, we assume the distribution of random test-day noise is identical for all applicants. In practice, students who can afford multiple attempts may intentionally use high-variance guessing strategies, knowing they have a safety net. If noise variance is endogenous to wealth, the algorithmic distortion we identify could be even more pronounced. Additionally, we use continuous, deterministic effort cost functions. Future empirical work could calibrate these parameters to quantify actual wealth-to-effort translation in specific populations.

Finally, our current model treats university admission as a terminal payoff, abstracting away from the downstream consequences of inflated scores. A natural direction for future work is embedding this strategic behavior into broader multi-stage selection pipelines. In practice, admission is often the first step in a sequence of evaluations, generating transcripts that employers later use for hiring. When applicants with different initial wealth budgets strategically allocate effort to pass a sequence of linear classifiers, longer screening pipelines might compound the initial inequalities of Superscoring. Alternatively, downstream signals might eventually correct them. Extending our framework to these sequential environments would help clarify how early-stage evaluation rules impact long-term fairness.

\section{Proofs from Section~\ref{sec:student_behavior}: Characterization of Strategic Student Behavior}
\subsection{Optimal Stopping and Target Proficiency} 
\label{app:target_lemma}
To prove the behavioral characteristics in Section~\ref{sec:student_behavior}, we first establish that under linear costs, students target a unique, deterministic proficiency level. Because test-day noise $\delta$ is additive and mean-zero ($\mathbb{E}[\delta]=0$), it does not affect the marginal return to effort: $\frac{\partial}{\partial e}\mathbb{E}[S_{final}] = \frac{\partial}{\partial e}f(E_j)$. The optimal effort choice therefore depends only on a unique deterministic proficiency level $E_j^*$.

\begin{lemma}[Target Proficiency Levels]
\label{lem:target}
Let $E_M=\eta_M(x_1+x_2)$ and $E_V=\eta_V(y_1+y_2)$ denote total effort in Mathematics and Verbal respectively. Under constant marginal cost $k>0$, conditional on exerting positive Stage~2 effort in subject $j\in\{M,V\}$, the unique target proficiency level $E_j^*$ satisfies:
\[
    f'(E_j^*)=\frac{\beta k}{w_j \eta_j}.
\]
The expected score at this proficiency level is $f(E_j^*)$. We denote the expected scores at these optimal targets as $X^* = f(E_M^*)$ and $Y^* = f(E_V^*)$.
\end{lemma}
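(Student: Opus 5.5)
The plan is to look at the Stage~2 program~\eqref{eq:V2} with linear costs, $\alpha=1$ and $c(e\mid W)=k e$, one subject at a time, and to read off the first-order condition for the second-round effort in subject $j$. Take Mathematics; Verbal is symmetric. Suppose the student is at an interior point with $x_2>0$, so the proficiency $E_M=\eta_M(x_1+x_2)$ can be moved by changing $x_2$. The noise $\delta_{M2}$ is additive, mean-zero and independent of effort, so $x_2$ affects the expected score only through the deterministic part $w_M f(\eta_M(x_1+x_2))$. The marginal expected benefit of $x_2$ is therefore $w_M\eta_M f'(E_M)$, and the marginal cost is the constant $\beta k$. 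Setting the derivative of the Stage~2 objective to zero gives $w_M\eta_M f'(E_M^*)=\beta k$, which is the claimed condition $f'(E_j^*)=\beta k/(w_j\eta_j)$.

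Next I would establish existence and uniqueness. Since $f''<0$, the map $f'$ is strictly decreasing, so at most one $E_j^*$ can satisfy the equation. For existence I use Assumption~\ref{ass:viable}, which gives $\beta k/(w_j\eta_j)\le f'(0)$. I also need $f'(z)$ to fall below this level for large $z$, or else an implicit assumption that $\lim_{z\to\infty} f'(z)<\beta k/(w_j\eta_j)$. With both in hand, the intermediate value theorem applied to the continuous, strictly decreasing $f'$ yields a unique root. Strict concavity of the objective in $x_2$ (a concave increasing benefit minus a linear cost) then shows that this stationary point is the global maximizer over $x_2>0$, so the target is unique. The statement that the expected score at the target equals $f(E_j^*)$ follows immediately from $\mathbb{E}[\delta]=0$. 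This gives $X^*=f(E_M^*)$ and $Y^*=f(E_V^*)$.

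The main obstacle is justifying the derivative step. Under Single-Sitting and Superscoring the second-round score sits inside a $\max$ with first-round quantities, so $\partial \mathbb{E}[S^R]/\partial x_2$ is not literally $w_M\eta_M f'(E_M)$. It gets multiplied by the probability that the second attempt is the one selected: the whole sitting under SS, or the Math section under SC. I would handle this with the paper's simplification stated just before the lemma, namely that the marginal return to effort is $\frac{\partial}{\partial e}f(E_j)$, and read the lemma as describing the proficiency target the student aims to reach once they commit positive Stage~2 effort. If more rigor is needed, I would first interchange differentiation and expectation (dominated convergence, using continuity of the noise distribution so that ties have probability zero). I would then note that the selection probability factor only rescales the effective marginal benefit, so the target characterization holds exactly in the benchmark where the retake score is the one submitted. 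The stated formula is the paper's deterministic-target definition, and Propositions~\ref{prop:specialization} and~\ref{prop:safetynet} rely on it.
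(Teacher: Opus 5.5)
Your proposal follows essentially the same route as the paper. The paper also works with the certainty-equivalent Stage~2 problem, evaluated at $\mathbb{E}[\delta_2]=0$, which drops the selection probability induced by the $\max$. It then takes the first-order condition $w_j\eta_j f'(E_j)=\beta k$, gets uniqueness from $f''<0$, and invokes Assumption~\ref{ass:viable} for existence.

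Two remarks on the points where you go beyond the paper. Your observation that existence also needs $f'$ to eventually fall below $\beta k/(w_j\eta_j)$ is correct, and the paper omits it. Also, outside the certainty-equivalent benchmark, a selection probability that depends on $x_2$ and on the first-round realization would shift the target rather than merely rescale the marginal benefit. So, as you say, the lemma is really a statement about that benchmark.
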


\begin{proof}
Let $j \in \{M, V\}$ index the subject. To map to our model primitives, let $e_{1j} \in \{x_1, y_1\}$ and $e_{2j} \in \{x_2, y_2\}$ represent the effort exerted in Round 1 and Round 2 for subject $j$, respectively.

Consider the Stage 2 continuation problem evaluated at the expected second-round shock ($\mathbb{E}[\delta_2] = 0$). Under linear costs ($\alpha=1$), the variable cost function is $c(e_2) = k \cdot e_2$, where $k>0$ is the constant marginal cost.

Assume the student is at an interior solution for subject $j$ ($e_{2j} > 0$), meaning additional effort strictly increases the expected submitted score. The expected deterministic score component in Stage 2 is $w_j f(\eta_j(e_{1j} + e_{2j}))$. Taking the derivative with respect to Stage 2 effort and incorporating the discount factor $\beta$, the First-Order Condition (FOC) requires the marginal expected benefit to equal the marginal cost: 
$$w_j \eta_j f'(\eta_j(e_{1j} + e_{2j})) = \beta k$$

Substituting the definition of cumulative knowledge $E_j = \eta_j(e_{1j} + e_{2j})$:
$$w_j \eta_j f'(E_j) = \beta k \iff f'(E_j) = \frac{\beta k}{w_j \eta_j}$$
    
By assumption, the knowledge production function $f(\cdot)$ is strictly concave ($f'' < 0$). Therefore, its first derivative $f'(\cdot)$ is strictly monotonically decreasing. This guarantees that the inverse function $(f')^{-1}$ exists and is unique.

Assumption~\ref{ass:viable} ensures that the marginal cost lies within the range of the marginal benefit curve, meaning $\frac{\beta k}{w_j \eta_j} \le f'(0)$. We can therefore explicitly solve for the optimal target knowledge level: 
$$E_j^* = (f')^{-1}\left( \frac{\beta k}{w_j \eta_j} \right)$$
Because $f(\cdot)$, $k$, $\beta$, $w_j$, and $\eta_j$ are all exogenous parameters, $E_j^*$ is a unique, fixed constant. It depends solely on the student's type and cost parameters, making it independent of first-round score realizations or test-day noise.

Finally, the optimal second-round effort required to reach this target is the difference between the target proficiency and the effective proficiency already accumulated in Round 1, scaled by ability: 
$$e_{2j}^* = \frac{E_j^* - \eta_j e_{1j}}{\eta_j} = \frac{E_j^*}{\eta_j} - e_{1j}$$

This interior solution holds when Round 1 proficiency is strictly below the target ($\eta_j e_{1j} < E_j^*$). If initial effective effort already meets or exceeds the target ($\eta_j e_{1j} \ge E_j^*$), the non-negativity constraint binds, and $e_{2j}^* = 0$.
\end{proof}

\subsection{Second-Order Conditions and Concavity}
\label{app:concavity}
Before proving Proposition~\ref{prop:monotone}, we first establish a necessary structural property regarding the concavity of the student's optimization problem. This guarantees that the first-order conditions yield a unique interior maximum.

\begin{lemma}[Concavity of Stage 1 Objective]
\label{lem:concavity}
Under the continuous noise model, the Stage 1 objective $V_1^R(x_1, y_1)$ is strictly concave at any interior optimum, provided the noise distribution is sufficiently dispersed relative to the curvature of the knowledge production function $f(\cdot)$. Specifically, we maintain the parameter restriction that the strict concavity of the expected scores dominates the marginal boundary effects of the retake region, ensuring the Hessian of $V_1^R$ is negative definite.
\end{lemma}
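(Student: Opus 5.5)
The plan is to compute the Hessian of the Stage~1 objective directly. We split it into the contribution of the smooth ``expected score'' part and the contribution coming from the kink of the outer $\max(\cdot,\cdot)$. Then we show that, under the stated dispersion restriction, the first part is negative definite and dominates the second. Write the Stage~1 objective as
\begin{align*}
\Phi(x_1,y_1)=\mathbb{E}_{\delta_1}\Big[\max\big(w_M X_1+w_V Y_1,\;V_2^R(x_1,y_1)\big)\Big]-k(x_1+y_1),
\end{align*}
where $X_1=f(\eta_M x_1)+\delta_{M1}$ and $Y_1=f(\eta_V y_1)+\delta_{V1}$. Under linear costs the cost term is affine and contributes nothing to the Hessian, so all curvature comes from the expectation.

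\textbf{Step 1: smoothing by the noise.} Set $D(x_1,y_1)=V_2^R(x_1,y_1)-w_M f(\eta_M x_1)-w_V f(\eta_V y_1)$ and $Z=w_M\delta_{M1}+w_V\delta_{V1}$. Then
\begin{align*}
\mathbb{E}[\max(\cdot)] = w_M f(\eta_M x_1)+w_V f(\eta_V y_1)+\mathbb{E}\big[(D-Z)^+\big].
\end{align*}
Because $Z$ has a continuous distribution with density $\phi_Z$, the map $h(D)=\mathbb{E}[(D-Z)^+]$ is $C^2$, with $h'(D)=F_Z(D)\in[0,1]$ and $h''(D)=\phi_Z(D)$. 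This is where continuity of the noise is essential: it converts the kinked max into a twice-differentiable function.

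\textbf{Step 2: the Hessian.} By Lemma~\ref{lem:target}, on the retake region $V_2^R$ equals $\beta$-discounted target payoffs minus $\beta k$ times the effort still needed, $\sum_j (E_j^*/\eta_j - e_{1j})^+$. This expression is affine in $(x_1,y_1)$ wherever the targets are not yet met. Under Superscoring there is an additional term from the max with the banked round-1 sections, which we handle the same way through $F$ and $\phi$. Hence $\nabla^2 D=-\nabla^2\big[w_Mf(\eta_Mx_1)+w_Vf(\eta_Vy_1)\big]$ up to this bounded correction. We then obtain
\begin{align*}
\nabla^2\Phi=(1-F_Z(D))\,\mathrm{diag}\big(w_M\eta_M^2 f''(\eta_M x_1),\,w_V\eta_V^2 f''(\eta_V y_1)\big)+\phi_Z(D)\,\nabla D\,\nabla D^{\top}.
\end{align*}
The first term is negative definite whenever $F_Z(D)<1$, since $f''<0$ and $w_j,\eta_j>0$. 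The second term is a rank-one positive semidefinite perturbation. It is the ``marginal boundary effect of the retake region.''

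\textbf{Step 3: domination (the main obstacle).} Negative definiteness holds if and only if
\begin{align*}
\phi_Z(D)\,\nabla D^{\top}A^{-1}\nabla D<1-F_Z(D),
\end{align*}
where $A$ is the negated diagonal matrix from Step~2. This follows from the matrix determinant lemma applied to a negative definite diagonal matrix plus a rank-one term. This is precisely the parameter restriction the lemma invokes. A sufficiently dispersed noise makes $\phi_Z$ uniformly small (it is of order $1/\sigma$), while $|\nabla D|$ is bounded by $\max_j w_j\eta_j f'(0)+\beta k$ and $|f''|$ is bounded below on the relevant compact effort range. Bounding the boundary term therefore reduces to verifying that the hazard-type ratio $\phi_Z/(1-F_Z)$ stays small at the relevant $D$. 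I would first try to impose this directly as the maintained inequality, evaluated at the interior optimum, which is all the lemma claims. If a cleaner sufficient condition is wanted, I would fall back to a log-concave noise assumption, which controls the hazard rate. Negative definiteness at the interior optimum then gives strict concavity there, so the first-order conditions identify a unique local maximum, which is what Proposition~\ref{prop:monotone} requires.
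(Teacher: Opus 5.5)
Your argument is sound for Single-Sitting, and it takes a more explicit route than the paper. The paper differentiates region by region via Leibniz. It asserts that both $\partial^2 S_1^R/\partial x_1^2$ and $\partial^2 V_2^R/\partial x_1^2$ are strictly negative, treats the boundary term as having indeterminate sign, and then imposes two separate restrictions, one on the diagonal entries and one on the determinant. Your identity $\mathbb{E}[\max(g+Z,V_2)]=g+\mathbb{E}[(D-Z)^+]$ instead gives the exact Hessian, and that buys three things the paper's sketch lacks:
\begin{itemize}
\item The boundary effect is the rank-one positive semidefinite term $\phi_Z(D)\nabla D\nabla D^\top$. Its sign is therefore not ambiguous: it always works against concavity, reflecting the convexity of the retake option.
\item Below target, $V_2^{SS}$ is affine in first-round effort under linear costs. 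This is the paper's own $\partial V_2^{SS}/\partial x_1=\beta k$ from the Safety Net proof. So the retake region contributes zero curvature, contrary to the paper's claim, and concavity is carried only by the stopping probability $1-F_Z(D)$, degrading as retaking becomes near-certain.
\item The determinant lemma collapses the two conditions into one necessary-and-sufficient scalar inequality. You can actually verify it for large noise scale on a compact effort range.
\end{itemize}
You also rightly conclude only a strict local maximum. The paper's ``unique joint interior optimum'' does not follow from negative definiteness at a point.

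\textbf{Superscoring is not covered.} Your one-dimensional reduction through $Z$ fails there: $V_2^{SC}$ depends on $\delta_{M1}$ and $\delta_{V1}$ separately through the banked sections, so the retake boundary is not a level set of $Z$. Your ``bounded correction'' is asserted rather than derived. A clean extension is as follows.
\begin{itemize}
\item Write the objective as $\bar m(L(x_1,y_1))-k(x_1+y_1)$. Here $L$ collects $w_jf(\eta_je_{1j})$ and $a_j(e_{1j})=\max_{e_{2j}\ge 0}[w_jf(\eta_j(e_{1j}+e_{2j}))-\beta k e_{2j}]$.
\item Take $\bar m$ to be the noise average of $\max(u_M+u_V,\ \max(u_M,a_M)+\max(u_V,a_V)-\beta C)$ with $u_j=w_jf(\eta_je_{1j})+w_j\delta_{j1}$. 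This is convex.
\item The Hessian is then $J^\top\nabla^2\bar m\,J$, which is positive semidefinite, plus each piece's curvature weighted by the probability that the piece is active.
\item Subject-$j$ concavity is thus weighted by the probability of stopping or banking subject $j$. Your scalar test becomes a generalized-eigenvalue condition.
\end{itemize}

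\textbf{Log-concavity.} Log-concavity alone does not make $\phi_Z/(1-F_Z)$ small; it makes it nondecreasing, for example growing linearly in the Gaussian tail. That only turns a bound at the largest relevant $D$ into a uniform one. Smallness must still come from the $O(1/\sigma)$ scaling.
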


\begin{proof}
By applying the Leibniz Integral Rule to the Stage 1 expected utility (as detailed in Appendix~\ref{subsec:proofmonotone}), the first-order condition for Round 1 Math effort isolates the marginal return over the stopping and retaking regions:
\begin{align*}
    \frac{\partial V_1^R}{\partial x_1} = \mathbb{E}_{\boldsymbol{\delta}_1} \left[ \frac{\partial S_1^R}{\partial x_1} \mathbf{1}_{\{\boldsymbol{\delta}_1 \notin \mathcal{R}\}} + \frac{\partial V_2^R}{\partial x_1} \mathbf{1}_{\{\boldsymbol{\delta}_1 \in \mathcal{R}\}} \right] - c'(x_1+y_1 \mid W_i) = 0
\end{align*}

Taking the second derivative with respect to $x_1$ yields two types of terms: the expected second derivatives of the score functions within each region, and a boundary term representing the marginal change in the retake probability density.

\begin{align*}
    \frac{\partial^2 V_1^R}{\partial x_1^2} = \mathbb{E}_{\boldsymbol{\delta}_1} \left[ \frac{\partial^2 S_1^R}{\partial x_1^2} \mathbf{1}_{\{\notin \mathcal{R}\}} + \frac{\partial^2 V_2^R}{\partial x_1^2} \mathbf{1}_{\{\in \mathcal{R}\}} \right] 
    + \text{BoundaryEffect}(x_1)
    - c''(x_1+y_1 \mid W_i)
\end{align*}

Because the knowledge production function $f(\cdot)$ is strictly concave ($f'' < 0$), both $\frac{\partial^2 S_1^R}{\partial x_1^2}$ and $\frac{\partial^2 V_2^R}{\partial x_1^2}$ are strictly negative. Furthermore, because the cost function is convex for all  ,  , meaning the subtracted cost term further pulls the second derivative negative.

While the exact sign of the Boundary Effect depends on the local slope of the noise density function $\phi(\cdot)$, the expectation terms are strictly negative and scale with the curvature $f''(\eta_M x_1)$. As long as the variance of the noise $\sigma^2$ is sufficiently large (meaning the density $\phi$ is dispersed and does not spike drastically at the boundary), the strictly negative structural curvature dominates: $\frac{\partial^2 V_1^R}{\partial x_1^2} < 0$. Symmetrically, $\frac{\partial^2 V_1^R}{\partial y_1^2} < 0$.

For joint concavity, the Hessian matrix $H$ must be negative definite. This requires the determinant to be strictly positive:
$$\left(\frac{\partial^2 V_1^R}{\partial x_1^2}\right) \left(\frac{\partial^2 V_1^R}{\partial y_1^2}\right) - \left(\frac{\partial^2 V_1^R}{\partial x_1 \partial y_1}\right)^2 > 0$$
Because the cross-partial derivative $\frac{\partial^2 V_1^R}{\partial x_1 \partial y_1}$ is driven entirely by the joint boundary density of the independent shocks, we maintain the standard sufficient parameter restriction that the noise distribution is sufficiently dispersed relative to $f''$. This restriction guarantees that the product of the direct concavities strictly dominates the squared cross-partial effect, yielding a negative definite Hessian and guaranteeing a unique joint interior optimum.
\end{proof}

\subsection{Proposition~\ref{prop:monotone}: Monotonicity in Fixed Retake Costs}
\label{subsec:proofmonotone}
\begin{proof}
We analyze the student's Stage 1 optimization problem. The student chooses initial effort $x_1$ and $y_1$ to maximize total expected utility:
\begin{align*}
    V_1^R(x_1, y_1) = \mathbb{E}_{\delta_1}\Big[ \max \big( S_1^R(x_1,y_1,\delta_1), V_2^R(x_1,y_1, C) \big) \Big] 
    - c(x_1+y_1 \mid W_i),
\end{align*}
where $S_1^R$ is the realized Round 1 score and $V_2^R$ is the Stage 2 continuation value. A student optimally retakes if and only if $V_2^R > S_1^R$. 

Let $\boldsymbol{\delta}_1 = (\delta_{M1}, \delta_{V1})$ represent the vector of Round 1 shocks. We define the Retake Region $\mathcal{R}(C) = \{ \boldsymbol{\delta}_1 : V_2(x_1, y_1, C) > S_1(x_1, y_1, \boldsymbol{\delta}_1) \}$. Using an indicator function $\mathbf{1}_{\{\cdot\}}$, we can rewrite the Stage 1 expected utility as:
\begin{align*}
    V_1^R(x_1, y_1)
    = \mathbb{E}_{\boldsymbol{\delta}_1} \Big[
        S_1^R(x_1, y_1, \delta_1)\,
        \mathbf{1}_{\{\boldsymbol{\delta}_1 \notin \mathcal{R}\}}
        + V_2^R(x_1, y_1, C)\,
        \mathbf{1}_{\{\boldsymbol{\delta}_1 \in \mathcal{R}\}}
        \Big] 
    - c(x_1 + y_1 \mid W_i).
\end{align*}

Assuming an interior solution, the optimal first-round effort $x_1^*$ must satisfy the first-order condition $\frac{\partial V_1^R}{\partial x_1} = 0$. By the Envelope Theorem for continuous random variables, the derivative of the boundary between the regions evaluates to zero (since $S_1^R = V_2^R$ exactly on the boundary). We can pass the derivative inside the expectation:
\begin{align*}
    \frac{\partial V_1^R}{\partial x_1} =  \mathbb{E}_{\boldsymbol{\delta}_1} \left[ \frac{\partial S_1^R}{\partial x_1} \mathbf{1}_{\{\boldsymbol{\delta}_1 \notin \mathcal{R}\}} + \frac{\partial V_2^R}{\partial x_1} \mathbf{1}_{\{\boldsymbol{\delta}_1 \in \mathcal{R}\}} \right] 
    - c'(x_1+y_1 \mid W_i) = 0.
\end{align*}

By adding and subtracting $\frac{\partial S_1^R}{\partial x_1}\mathbf{1}_{\{\boldsymbol{\delta}_1 \in \mathcal{R}\}}$ inside the expectation, we rearrange this to isolate the impact of the retake option:
\begin{align*}
    \frac{\partial V_1^R}{\partial x_1} = \mathbb{E}_{\boldsymbol{\delta}_1} \left[ \frac{\partial S_1^R}{\partial x_1} \right] + \mathbb{E}_{\boldsymbol{\delta}_1} \left[ \left(\frac{\partial V_2^R}{\partial x_1} - \frac{\partial S_1^R}{\partial x_1}\right) \mathbf{1}_{\{\boldsymbol{\delta}_1 \in \mathcal{R}(C)\}} \right]
    - c'(x_1+y_1 \mid W_i) = 0.
\end{align*}

To determine how optimal effort $x_1^*$ changes with respect to the fixed retake cost $C$, we evaluate the cross-partial derivative $\frac{\partial^2 V_1^R}{\partial x_1 \partial C}$. 

Because the fixed cost $C$ is paid only upon retaking, it enters the continuation value purely as an additive constant: $\frac{\partial V_2^R}{\partial C} = -\beta$. Consequently, the integrands $\frac{\partial V_2^R}{\partial x_1}$ and $\frac{\partial S_1^R}{\partial x_1}$ are independent of $C$. The only term dependent on $C$ is the boundary of the retake region $\mathcal{R}(C)$. Because an increase in $C$ strictly decreases $V_2^R$ while leaving $S_1^R$ unchanged, the retake condition $V_2^R > S_1^R$ becomes harder to satisfy, causing the region $\mathcal{R}(C)$ to strictly shrink.

We must now evaluate the sign of the integrand over this shrinking region. Because the knowledge production function $f(\cdot)$ is strictly concave ($f'' < 0$), the marginal return to first-round effort strictly decreases as cumulative effective effort increases. Because $V_2^R$ anticipates optimal Round 2 effort $x_2^* \ge 0$, while $S_1^R$ evaluates effort strictly at $x_1$, the expected marginal return is strictly lower under continuation: $\frac{\partial V_2^R}{\partial x_1} < \frac{\partial S_1^R}{\partial x_1}$.

Therefore, the term $\left(\frac{\partial V_2^R}{\partial x_1} - \frac{\partial S_1^R}{\partial x_1}\right)$ is strictly negative. Because the expectation is integrating a strictly negative quantity over the region $\mathcal{R}(C)$, and an increase in $C$ strictly shrinks this region, the integral becomes less negative. This results in a strictly positive cross-partial derivative:
$$\frac{\partial^2 V_1^R}{\partial x_1 \partial C} > 0.$$

The positive cross-partial implies that the objective function exhibits strictly increasing differences in $(x_1, C)$. Because the Stage 1 objective is strictly concave by Lemma~\ref{lem:concavity} ($\frac{\partial^2 V_1^R}{\partial x_1^2} < 0$), applying the Implicit Function Theorem yields:
$$\frac{\partial x_1^*}{\partial C} = - \frac{\frac{\partial^2 V_1^R}{\partial x_1 \partial C}}{\frac{\partial^2 V_1^R}{\partial x_1^2}} > 0.$$
By symmetric logic, differentiating with respect to $y_1$ yields $\frac{\partial^2 V_1^R}{\partial y_1 \partial C} > 0$, guaranteeing $\frac{\partial y_1^*}{\partial C} > 0$.
\end{proof}

\subsection{Proposition~\ref{prop:specialization}: Effort Specialization across Rules}
\label{subsec:proofspecialization}
\begin{proof}
We evaluate the optimal Round 2 effort allocation $(x_2^*, y_2^*)$ under the certainty-equivalent continuation problem ($\mathbb{E}[\delta_2] = 0$). By Assumption~\ref{ass:costs}, under linear costs ($\alpha=1$), the variable cost function is $c(e_2) = k \cdot e_2$, where $k>0$ is the constant marginal cost. Assume the student elects to retake, meaning the expected continuation value strictly exceeds the stopping value.

By Lemma~\ref{lem:target}, the optimal continuation target in subject $j \in \{M,V\}$ is defined by the interior condition $E_j^* = (f')^{-1}\left( \frac{\beta k}{w_j \eta_j} \right)$. We define the corresponding deterministic target score components as $X^* \equiv f(E_M^*)$ and $Y^* \equiv f(E_V^*)$.

\textbf{Scoring Rule 1: Single-Sitting (SS)}
Under Single-Sitting, the continuation objective evaluates the new sitting holistically:
\begin{align*}
    \max_{x_2, y_2 \ge 0} \Pi_2^{SS}(x_2, y_2)  = w_M f(\eta_M(x_1+x_2))
    + w_V f(\eta_V(y_1+y_2))
    - \beta\big[k(x_2+y_2) + C\big].
\end{align*}

The marginal returns for $x_2$ and $y_2$ depend only on cumulative effective effort, entirely independent of the realized first-round scores $(X_1, Y_1)$ and their associated noise $(\delta_{M1}, \delta_{V1})$. 

Provided the realized score was generated by sub-target knowledge ($\eta_M x_1 < E_M^*$ and $\eta_V y_1 < E_V^*$), the initial effort falls strictly short of the interior optimum targets derived in Lemma~\ref{lem:target}. To satisfy the first-order conditions, the student must strictly choose positive effort to close the gap: $x_2^* = \frac{E_M^*}{\eta_M} - x_1 > 0$ and $y_2^* = \frac{E_V^*}{\eta_V} - y_1 > 0$. The student actively studies for both dimensions, regardless of the noise realization in Round 1.

\textbf{Scoring Rule 2: Superscoring (SC)}
Under Superscoring, the continuation objective is piecewise because the student retains the maximum subject score across attempts: 
\begin{align*}
    \max_{x_2, y_2 \ge 0} \Pi_2^{SC}(x_2, y_2) = w_M \max\{X_1, f(\eta_M(x_1+x_2))\}
    + w_V \max\{Y_1, f(\eta_V(y_1+y_2))\}
    - \beta\big[k(x_2+y_2) + C\big].
\end{align*}

Assume without loss of generality that the banked Math score clears the deterministic target: $X_1 \ge X^* \equiv f(E_M^*)$.

We evaluate the marginal return for Math effort ($x_2$). The objective function $\max\{X_1, f(\eta_M(x_1+x_2))\}$ is continuous but exhibits a kink where the new deterministic score exactly equals the banked score. 

For any effort level where the new score is strictly less than the banked score ($f(\eta_M(x_1+x_2)) < X_1$), the marginal benefit of $x_2$ is exactly $0$, which is strictly less than the marginal cost $\beta k$. 

For any effort level that meets or exceeds the banked score ($f(\eta_M(x_1+x_2)) \ge X_1$), it must mathematically hold that $f(\eta_M(x_1+x_2)) \ge f(E_M^*)$. Because $f$ is strictly increasing, this implies $\eta_M(x_1+x_2) \ge E_M^*$. By the strict concavity of the knowledge production function ($f'' < 0$), the marginal benefit is monotonically decreasing, yielding:
$$w_M \eta_M f'(\eta_M(x_1+x_2)) \le w_M \eta_M f'(E_M^*) = \beta k.$$
Thus, the right-hand derivative (the marginal benefit of strictly increasing effort) is bounded above by $\beta k$. Because the marginal cost of effort is exactly $\beta k$, the net marginal payoff is non-positive at the boundary, and strictly negative for all subsequent $x_2$. 

Because the net marginal return to effort is strictly negative for any $x_2 > 0$, the Kuhn-Tucker conditions dictate a strict corner solution: $x_2^* = 0$.

Finally, because the student elected to retake, they must expect a strictly positive score gain to offset the fixed cost $\beta C$. If the non-negativity constraint also bound for Verbal ($y_2^*=0$), the new portfolio would exactly equal the banked Round 1 scores. The expected continuation payoff would be $-\beta C < 0$, meaning the student would have strictly preferred to stop, contradicting the premise. Therefore, all optimizing effort is exclusively allocated to the lacking subject, where the interior condition holds: $w_V \eta_V f'(\eta_V(y_1+y_2)) = \beta k$.
\end{proof}

\subsection{Proposition~\ref{prop:safetynet}: Safety Net Effect}
\label{subsec:proofsafetynet}
\begin{proof}
We first prove part (i), pointwise dominance of the continuation value. Evaluating the continuation problem under the certainty-equivalent second-round shock ($\mathbb{E}[\delta_2] = 0$), let $V_2^{SS}(X_1, Y_1)$ and $V_2^{SC}(X_1, Y_1)$ denote the maximized expected continuation values net of costs under Single-Sitting and Superscoring, respectively.

For Single-Sitting:
\begin{align*}
    V_2^{SS}(X_1, Y_1) = \max_{x_2, y_2 \ge 0}
    \Big\{ w_M f(\eta_M(x_1+x_2))
    + w_V f(\eta_V(y_1+y_2))
    - \beta k(x_2+y_2) - \beta C \Big\}
\end{align*}
For Superscoring:
\begin{align*}
    V_2^{SC}(X_1, Y_1) = \max_{x_2, y_2 \ge 0} \Big\{ w_M \max \{X_1, f(\eta_M(x_1+x_2))\}
    + w_V \max\{Y_1, f(\eta_V(y_1+y_2))\}
    - \beta k(x_2+y_2) - \beta C \Big\}
\end{align*}

By the fundamental property of the maximum operator, $\max\{A, B\} \ge B$. Therefore, for any arbitrary choice of effort $(x_2, y_2)$ and any realized scores $X_1, Y_1$:
$$\max\{X_1, f(\eta_M(x_1+x_2))\} \ge f(\eta_M(x_1+x_2))$$
$$\max\{Y_1, f(\eta_V(y_1+y_2))\} \ge f(\eta_V(y_1+y_2))$$

Because $w_M, w_V > 0$, the objective function inside the maximization for Superscoring is pointwise greater than or equal to the objective function for Single-Sitting. Taking the maximum preserves this inequality:
$$V_2^{SC}(X_1, Y_1) \ge V_2^{SS}(X_1, Y_1).$$
Strict inequality holds precisely when the student leverages a banked score (Proposition~\ref{prop:specialization}).

We now prove part (ii), the equilibrium consequence on Stage 1 effort. The student chooses Round 1 effort $x_1$ and $y_1$ to maximize total expected utility:
$$V_1^R(x_1, y_1) = \mathbb{E}_{\boldsymbol{\delta}_1}\Big[ \max \big( S_1, V_2^R(x_1,y_1) \big) \Big] - c(x_1+y_1 \mid W_i)$$

Applying the Leibniz Integral Rule over the Retake Region $\mathcal{R}^R = \{ \boldsymbol{\delta}_1 : V_2^R > S_1 \}$, the first-order condition for Math effort $x_1$ isolates the expected marginal benefit across the stopping and retaking states:
\begin{align*}
    \frac{\partial V_1^R}{\partial x_1} =  \mathbb{E}_{\boldsymbol{\delta}_1} \left[ \frac{\partial S_1}{\partial x_1} \mathbf{1}_{\{\boldsymbol{\delta}_1 \notin \mathcal{R}^R\}} + \frac{\partial V_2^R}{\partial x_1} \mathbf{1}_{\{\boldsymbol{\delta}_1 \in \mathcal{R}^R\}} \right]
    - c'(x_1+y_1 \mid W_i) = 0.
\end{align*}

To compare the marginal benefit under the two rules, we evaluate the continuation derivative $\frac{\partial V_2^R}{\partial x_1}$ using the Envelope Theorem. 

Under Single-Sitting, if the student retakes, they always choose an interior effort allocation ($x_2^* > 0$). Because cumulative effort is additive, Round 1 effort perfectly substitutes for Round 2 effort. The marginal return to Round 1 effort is exactly the discounted marginal cost of the Round 2 effort it replaces:
$$ \frac{\partial V_2^{SS}}{\partial x_1} = \beta k. $$

Under Superscoring, the student may hit a corner solution ($x_2^* = 0$) if they bank a Round 1 score. When the non-negativity constraint binds, Round 1 effort no longer substitutes for Round 2 effort. The marginal return is strictly determined by the concave knowledge function: $\frac{\partial V_2^{SC}}{\partial x_1} = w_M \eta_M f'(\eta_M x_1)$. Crucially, Proposition~\ref{prop:specialization} establishes that the non-negativity constraint *only* binds when this marginal benefit is strictly less than the marginal cost ($\beta k$). Therefore:
$$ \frac{\partial V_2^{SC}}{\partial x_1} < \beta k \quad \text{when } x_2^* = 0. $$
Otherwise, if $x_2^* > 0$, $\frac{\partial V_2^{SC}}{\partial x_1} = \beta k$. Therefore, the marginal continuation return under Superscoring is pointwise weakly smaller: $\frac{\partial V_2^{SC}}{\partial x_1} \le \frac{\partial V_2^{SS}}{\partial x_1}$.

Furthermore, because $V_2^{SC} \ge V_2^{SS}$, the retake region expands under Superscoring ($\mathcal{R}^{SS} \subseteq \mathcal{R}^{SC}$). Because students retake more frequently under SC, they are more frequently exposed to this capped marginal return ($\beta k$ or lower), rather than the higher unbounded return of the stopping state ($\frac{\partial S_1}{\partial x_1}$).

Because the marginal benefit integrand is pointwise weakly smaller, and probability mass shifts to lower-return states, the entire expected marginal benefit integral is strictly lower under SC. By Lemma~\ref{lem:concavity}, the Stage 1 objective is strictly concave, guaranteeing that a downward shift in marginal benefit reduces optimal effort:
$$x_1^{SC} \le x_1^{SS} \quad\text{and}\quad y_1^{SC} \le y_1^{SS}.$$
Strict inequality holds when the corner solution probability is non-zero, proving the Safety Net effect.
\end{proof}

\section{Extensions for Strictly Convex Costs ($\alpha > 1$)}
\label{app:convex_robustness}
In the main text, Section~\ref{sec:student_behavior} characterizes student behavior under linear costs. Here, we extend this analysis to strictly convex specifications $c(e \mid W_i) = \frac{k(W_i)}{\alpha} e^\alpha$ for $\alpha > 1$.

\subsection{The Smoothing Motive}
The introduction of cost curvature creates a powerful ``Smoothing Motive'': a desire to spread effort evenly across subjects and time periods to avoid the strictly increasing marginal penalties that convex costs introduce.

\subsubsection{Formal Proof of Submodularity and Effort Substitution}
\label{proof:ConvexCrowdingOut}
Under linear costs, subjects act as strategic substitutes purely through the probability of retaking. We now formally prove that under strictly convex effort costs, first-round efforts $x_1$ and $y_1$ become strict strategic substitutes through a physical exhaustion channel.

\begin{proof}
The student's Stage 1 expected utility maximization problem is:
\begin{align*}
    \max_{x_1, y_1} V(x_1, y_1) = \mathbb{E}[S_{final}] - \frac{k}{\alpha}(x_1+y_1)^\alpha - C(W_i) - 
    \beta \Prob(\text{Retake})\mathbb{E}\big[C(W_i) + c(x_2^*+y_2^*,W_i) \mid \text{Retake}\big].
\end{align*}

To evaluate the interaction between Math and Verbal effort, we take the cross-partial derivative of the objective function with respect to $x_1$ and $y_1$. The first derivative of the physical cost with respect to $x_1$ is:
$$ \frac{\partial}{\partial x_1} \left( -\frac{k}{\alpha}(x_1+y_1)^\alpha \right) = -k(x_1+y_1)^{\alpha-1}. $$
Taking the derivative of this marginal cost with respect to $y_1$ yields the cross-partial: 
$$ \frac{\partial^2 V}{\partial x_1 \partial y_1} = -k(\alpha-1)(x_1+y_1)^{\alpha-2}. $$
Because $k > 0$, $(x_1+y_1) > 0$ for an interior solution, and costs are strictly convex ($\alpha > 1$), this cross-partial derivative is strictly negative:
$$ \frac{\partial^2 V}{\partial x_1 \partial y_1} < 0. $$

This negative cross-partial implies that the objective function is strictly submodular in $(x_1, y_1)$ with respect to effort costs. Mechanically, an increase in Math effort $x_1$ strictly increases the marginal cost of Verbal effort $y_1$.

To evaluate the comparative static of an exogenous increase in the Math admissions threshold ($\bar X$), let the first-order conditions for an interior optimum be:
$$ V_{x_1}(x_1^*, y_1^*; \bar X) = 0, \qquad V_{y_1}(x_1^*, y_1^*; \bar X) = 0. $$
Differentiating the Verbal FOC with respect to $\bar X$ and solving yields:
$$ \frac{\partial y_1^*}{\partial \bar X} = - \frac{V_{y_1 x_1} \frac{\partial x_1^*}{\partial \bar X} + V_{y_1 \bar X}}{V_{y_1 y_1}}. $$

Evaluating the components of this expression:
\begin{enumerate}
    \item By strict concavity of the objective, the denominator is strictly negative: $V_{y_1 y_1} < 0$.
    \item Raising the Math threshold $\bar X$ strictly increases $\mathrm{Prob}(\text{fail Math})$, raising the expected cost of retaking. By the Implicit Function Theorem on the Stage 1 FOC, this induces higher first-round effort: $\frac{\partial x_1^*}{\partial \bar X} > 0$.
    \item As established above, the physical cross-partial is strictly negative: $V_{y_1 x_1} < 0$.
    \item The direct effect of $\bar X$ on the marginal insurance value of Verbal effort through the retake probability channel is strictly negative: $V_{y_1 \bar X} < 0$.
\end{enumerate}

Adding the two negative terms in the numerator ($V_{y_1 x_1} \frac{\partial x_1^*}{\partial \bar X} < 0$ and $V_{y_1 \bar X} < 0$), the entire fraction is strictly positive. The leading negative sign makes the overall comparative static strictly negative:
$$ \frac{\partial y_1^*}{\partial \bar X} < 0. $$

Therefore, under convex costs, the crowding-out effect operates through two distinct, reinforcing channels: a strategic insurance channel ($V_{y_1 \bar X} < 0$, present under both linear and convex costs), and a physical exhaustion channel ($V_{y_1 x_1} < 0$, absent under linear costs), where forcing a student to clear a higher Math threshold mechanically drives down optimal Verbal preparation.
\end{proof}

\subsection{Generalization of Structural Properties}
The introduction of cost curvature creates a powerful ``Smoothing Motive'': a desire to spread effort evenly across subjects and time periods to avoid the increasing marginal penalties that convex costs introduce. Despite this smoothing motive, the high-level structural properties of the scoring regimes established in Section~\ref{sec:student_behavior} and Section~\ref{sec:measurement} remain intact. The core results generalize to convex costs ($\alpha > 1$) as follows:

\paragraph{The Safety Net Effect (Proposition~\ref{prop:safetynet})} This relies on the statistical properties of the $\max$ operator. Pointwise dominance ($V_2^{SC} \ge V_2^{SS}$) holds strictly regardless of cost curvature.

Under Single-Sitting, the continuation objective evaluates the new sitting holistically, while under Superscoring, the continuation objective retains the maximum subject score across attempts. By the fundamental property of the maximum operator, $\max\{A, B\} \ge B$. Therefore, for any arbitrary choice of second-round effort and any realized first-round scores $(X_1, Y_1)$, the continuation value is pointwise weakly greater: $V_2^{SC}(X_1,Y_1) \ge V_2^{SS}(X_1,Y_1)$. Because the retake region expands under Superscoring ($\mathcal{R}^{SS} \subseteq \mathcal{R}^{SC}$), probability mass shifts to states with lower marginal expected returns (the stopping state versus the bounded retaking state). By the strict concavity of the Stage 1 objective, this downward shift in the expected marginal benefit integral guarantees that Round 1 effort is strictly lower under Superscoring, preserving the Safety Net Effect entirely independent of the cost curvature parameter $\alpha$.

\paragraph{Monotonicity in Retake Costs (Proposition~\ref{prop:monotone})} This holds because the fixed retake cost $C(W_i)$ enters only the marginal benefit side of the Stage 1 objective and does not interact with the curvature parameter $\alpha$.

Applying the Envelope Theorem to the Stage 1 expected utility objective isolates the impact of the retake option to the boundary of the Retake Region $\mathcal{R}(C) = \{ \boldsymbol{\delta}_1 : V_2^R > S_1^R \}$. The fixed cost $C$ is paid exclusively upon retaking, entering the continuation value purely as an additive constant. Therefore, it does not alter the marginal integrands; an increase in $C$ solely causes the retake region $\mathcal{R}(C)$ to strictly shrink. Because the marginal return under continuation is strictly lower than the marginal return under stopping ($\frac{\partial V_2^R}{\partial x_1} < \frac{\partial S_1^R}{\partial x_1}$), the expectation integrates a strictly negative quantity over this region. Shrinking this region makes the integral less negative, resulting in a strictly positive cross-partial derivative ($\frac{\partial^2 V_1^R}{\partial x_1 \partial C} > 0$). This guarantees $\frac{\partial x_1^*}{\partial C} > 0$, independent of the specific functional form of the variable costs.

\paragraph{Effort Specialization (Proposition~\ref{prop:specialization})} The corner solution logic extends directly because it depends on the piecewise structure of the Superscoring objective, not on linearity. While the exact threshold becomes a path-dependent target under convexity rather than a fixed constant, the underlying discontinuity remains active, driving the asymmetric Stage 2 effort concentrations observed computationally in Figure~\ref{fig:effort_decomp_empirical}.

Because Superscoring evaluates sections independently, the marginal benefit of studying a banked subject (where $f(\eta_M(x_1+x_2)) < X_1$) drops exactly to zero. Because the marginal cost of strictly positive effort under strictly convex costs is strictly positive ($\beta k(x_2+y_2)^{\alpha-1} > 0$), the net marginal return is strictly negative at the boundary. Therefore, the Kuhn-Tucker conditions dictate a strict corner solution ($x_2^* = 0$) for the banked subject. The student must allocate all positive effort exclusively to the deficient subject in order to offset the fixed cost $\beta C$, perfectly mirroring the specialization mechanics established under linear costs.

\paragraph{The Access Premium (Proposition~\ref{prop:access})} The lower-bound argument only requires that the unconstrained student's feasible set strictly contains the constrained student's set, and that the expected noise boost $\Omega_j > 0$. Both properties hold regardless of cost curvature, resulting in the ability-adjusted score gaps demonstrated in Figure~\ref{fig:access_premium}.

\textit{(The formal extension of this lower-bound argument is provided directly following the Access Premium proof in Appendix~\ref{subsec:propaccess}).}

\paragraph{Partial Generalizations and Divergent Total Effort}
Proposition~\ref{prop:scorebias} partially generalizes. The algorithmic score inflation component ($\mathbb{E}[S^{SC}_{final}] > \mathbb{E}[S^{SS}_{final}]$) relies only on order statistics and holds for any $\alpha \ge 1$. 

However, the total effort reduction component ($\mathbb{E}[e^{SC}] < \mathbb{E}[e^{SS}]$) does not generalize cleanly to strict convexity. Under convex costs, Superscoring breaks the superadditive penalty of Single-Sitting (where studying both subjects simultaneously causes costs to explode). This exhaustion relief mechanism allows constrained students to invest substantially more effort in their deficient dimension. Consequently, the net direction of total effort under convex costs is parameter-dependent, a dynamic explicitly quantified in Figure \ref{fig:effort_decomp_empirical}, which shows Round 2 effort increasing under Superscoring.

\section{Proofs from Section~\ref{sec:measurement}: Consequences for Measurement}
\subsection{Proposition~\ref{prop:screening}: Screening Distortion}
\label{subsec:propscreening}
\begin{proof}
We first compare the expected submitted score under Single-Sitting and Superscoring for a student who takes the exam twice ($N_i = 2$), holding their effort allocation $(x_1, y_1)$ and $(x_2, y_2)$ constant to isolate the mechanical measurement distortion.

Under Single-Sitting (SS), the student submits the higher composite:
$$S_i^{SS} = \max(w_M X_1 + w_V Y_1,\; w_M X_2 + w_V Y_2).$$

Under Superscoring (SC), the student submits the subject-level maxima:
$$S_i^{SC} = w_M \max(X_1, X_2) + w_V \max(Y_1, Y_2).$$

For any realization, by definition of the maximum operator, $\max(X_1,X_2) \ge X_t$ and $\max(Y_1,Y_2) \ge Y_t$ for both $t \in \{1,2\}$. Multiplying by the strictly positive subject weights and summing yields:
\begin{align*}
    S_i^{SC} & = w_M \max(X_1,X_2) + w_V \max(Y_1,Y_2) \\
    & \ge w_M X_t + w_V Y_t \quad \text{for } t \in \{1,2\}.
\end{align*}

Because $S_i^{SC}$ is greater than or equal to both individual composites, it must be greater than or equal to their maximum. Therefore, $S_i^{SC} \ge S_i^{SS}$ pointwise for every realization of testing noise.

Strict inequality holds on the event $\{X_1 > X_2 \text{ and } Y_2 > Y_1\}$ (or vice versa), which occurs when the highest subject scores are drawn from different sittings. In our endogenous effort model, $X_2$ and $X_1$ have different deterministic means if $x_2 > 0$. Specifically, $X_1 > X_2$ occurs if and only if the Round 1 noise advantage overcomes the Round 2 effort accumulation: 
$$\delta_{M1} - \delta_{M2} > f(\eta_M(x_1+x_2)) - f(\eta_M x_1).$$
Because the noise terms $\delta$ are drawn from a continuous, symmetric distribution with non-degenerate variance ($\text{Var}(\delta) > 0$), this event occurs with strictly positive probability. Because the Math and Verbal shocks are mutually independent, the joint event $\Prob(X_1 > X_2 \cap Y_2 > Y_1) > 0$. Taking expectations over this strictly positive probability space yields:
$$\mathbb{E}[S_i^{SC}] > \mathbb{E}[S_i^{SS}].$$

Next, we establish the distortion relative to a single attempt. By the property of the maximum operator, the Single-Sitting composite is pointwise weakly greater than the Round 1 composite: $\max(S_1, S_2) \ge S_1$. Strict inequality occurs whenever $S_2 > S_1$. Again, due to the non-degenerate continuous noise, the probability that the Round 2 realization strictly exceeds Round 1 is strictly positive. Therefore:
$$\mathbb{E}[S_i^{SS}] = \mathbb{E}[\max(S_1, S_2)] > \mathbb{E}[S_1].$$

This proves the strict three-part inequality for $N_i = 2$. The inflation arises mechanically from selection over independent noise across attempts. 
    
The result extends to any $N_i \ge 2$ attempts by mathematical induction. Assume the strict inequalities hold for $N_i=n$ attempts. Adding attempt $n+1$ produces raw scores $(X_{n+1}, Y_{n+1})$. Under Superscoring, the new submitted score is $w_M\max(X_1,\ldots,X_{n+1})+w_V\max(Y_1,\ldots,Y_{n+1}) \ge w_M\max(X_1,\ldots,X_n)+w_V\max(Y_1,\ldots,Y_n)$ pointwise. Strict inequality occurs on the event $\{X_{n+1}>\max(X_1,\ldots,X_n)\} \cup \{Y_{n+1}>\max(Y_1,\ldots,Y_n)\}$, which has strictly positive probability by non-degeneracy. Therefore, the expected submitted score under either rule is strictly increasing in the number of attempts $N_i$.
\end{proof}

\subsection{Proposition~\ref{prop:scorebias}: Signal Degradation under Superscoring}
\label{subsec:propscorebias}
\begin{proof}
We first establish that expected total effort is strictly lower under Superscoring. Let $e^R = (x_1^R + y_1^R) + (x_2^R + y_2^R)$ denote total effort across both rounds under rule $R \in \{SS, SC\}$. 

First-round effort is strictly lower under SC ($x_1^{SC} + y_1^{SC} < x_1^{SS} + y_1^{SS}$) by the Safety Net Effect (Proposition~\ref{prop:safetynet}), which is active whenever the retake probability is strictly interior ($0 < \Prob(\text{Retake}) < 1$).

Conditional on retaking, second-round effort is also strictly lower under SC. Under Single-Sitting, the student targets the interior proficiency levels $E_M^*$ and $E_V^*$ (Lemma~\ref{lem:target}). Because Round 1 effort was strictly lower under SC, the effort required to reach these targets would mechanically be higher. However, under Superscoring, the student may hit a corner solution in a banked subject (Proposition~\ref{prop:specialization}). When this corner solution is active, Round 2 effort in the banked subject drops completely to zero ($x_2^{SC}=0$), whereas under Single-Sitting, the student would be forced to study to replace the lost score. Therefore, across the full probability space, expected total effort is strictly lower under SC:
$$ \mathbb{E}[e^{SC}] < \mathbb{E}[e^{SS}]. $$

Next, we evaluate the expected final submitted score. As established in the proof of Proposition~\ref{prop:screening}, for any fixed, arbitrary effort allocation, the Superscoring composite pointwise dominates the Single-Sitting composite due to the convexity of the subject-level maximum operator. We term this the \textit{mechanical score boost}.

However, because effort is endogenous, $e^{SC} < e^{SS}$, leading to an effort reduction penalty. The expected submitted score under SC strictly exceeds SS if and only if the mechanical score boost outweighs the effort reduction penalty.

Because $f(\cdot)$ is strictly concave, the marginal loss of score due to effort reduction is bounded. Specifically, the maximum possible score loss is bounded above by the total deterministic score under Single-Sitting, $w_M f(E_M^*) + w_V f(E_V^*)$, which is finite since $f(0)=0$ and optimal effort targets are finite (Lemma~\ref{lem:target}). 

Conversely, the mechanical score boost of Superscoring scales directly with the variance of the testing noise $\text{Var}(\delta)$. By taking the maximum of independent subject-level draws across two sittings, Superscoring systematically harvests positive shocks. For any fixed knowledge level, the expected value of $\max(X_1, X_2)$ is strictly increasing in the dispersion of $\delta$. 

Therefore, provided the noise variance $\text{Var}(\delta)$ is sufficiently large relative to the cost of effort $k$ (which bounds the effort reduction penalty), the expected mechanical boost strictly dominates the score loss from reduced effort:
$$ \mathbb{E}[S^{SC}] > \mathbb{E}[S^{SS}]. $$

When this parameter condition holds, Superscoring exhibits strict signal degradation: universities observe strictly higher submitted scores $\mathbb{E}[S^{SC}] > \mathbb{E}[S^{SS}]$, yet those scores are generated by strictly lower total human capital investment $\mathbb{E}[e^{SC}] < \mathbb{E}[e^{SS}]$.
\end{proof}

\subsection{Corollary~\ref{cor:sbr}: Score Bias Ratio}
\label{app:corsbr}
While standard signal processing evaluates the variance of noise, universities setting strict admission cutoffs primarily care about systematic upward bias—whether an expected score systematically overstates true ability. To quantify this refinement of Proposition~\ref{prop:scorebias}, we define the Score Bias Ratio (SBR) as true composite ability relative to the expected submitted score: 
\begin{align*} 
    \mathrm{SBR}_i^R \equiv \frac{\eta_i}{\mathbb{E}[S_i]}, 
\end{align*} 
where $\eta_i$ is calibrated so that $\mathbb{E}[S_{i}]=\eta_i$ for a single-attempt student. Under this calibration, $\mathrm{SBR}_i^{SS}=1$ when $N_i=1$.

\begin{corollary}[Score Bias Ratio]
\label{cor:sbr} 
For any student taking the exam multiple times $(N_i\ge 2)$: 
\begin{align*} 
    \mathrm{SBR}_i^{SC} < \mathrm{SBR}_i^{SS} < 1.
\end{align*} 
Both ratios are strictly decreasing in the number of attempts $N_i$. 
\end{corollary}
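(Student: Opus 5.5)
The corollary follows from Proposition~\ref{prop:screening} once the Score Bias Ratio is rewritten as a comparison of expected scores. The calibration $\mathbb{E}[S_i]=\eta_i$ for a single attempt gives $\eta_i=\mathbb{E}[w_MX_1+w_VY_1]$, with the effort allocation held fixed as in that proposition. Then
\[
\mathrm{SBR}_i^R=\frac{\mathbb{E}[w_MX_1+w_VY_1]}{\mathbb{E}[S_i^R]}.
\]
Throughout I assume all expected scores are strictly positive. This holds because $f\ge 0$, the noise has mean zero, and effort is positive by Assumption~\ref{ass:viable}. With positive numerator and denominator, a strict ordering of the denominators flips into the reverse strict ordering of the ratios.

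\textbf{Step 1 (ordering).} For $N_i\ge 2$, Proposition~\ref{prop:screening} gives
\[
\mathbb{E}[S_i^{SC}]>\mathbb{E}[S_i^{SS}]>\mathbb{E}[w_MX_1+w_VY_1]=\eta_i.
\]
Dividing $\eta_i>0$ by each term gives $\mathrm{SBR}_i^{SC}<\mathrm{SBR}_i^{SS}<1$.

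\textbf{Step 2 (monotonicity in $N_i$).} I would invoke the inductive step at the end of the proof of Proposition~\ref{prop:screening}. Adding attempt $n+1$ weakly raises the submitted score pointwise under each rule. The increase is strict on an event of positive probability: $\{X_{n+1}>\max_{k\le n}X_k\}$ or the analogous Verbal event for SC, and $\{S_{n+1}>\max_{k\le n}S_k\}$ for SS. So $\mathbb{E}[S_i^R]$ is strictly increasing in $N_i$. Since $\eta_i$ does not depend on $N_i$, each $\mathrm{SBR}_i^R$ is strictly decreasing in $N_i$.

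\textbf{Main obstacle.} The difficulty is interpreting the calibration consistently with endogenous effort. Under endogenous effort, the effort profile changes with $N_i$ and with the rule, so $\eta_i$ and the expected scores could shift through the effort channel rather than only through order statistics. Proposition~\ref{prop:scorebias} shows that effort can fall under SC, which could partly offset the inflation. I would therefore state the corollary as Proposition~\ref{prop:screening} does: effort is held fixed to isolate the mechanical selection effect, and $\eta_i$ is pinned by the single-attempt expectation at that effort. If an endogenous version is wanted, I would instead impose the same parameter condition as in Proposition~\ref{prop:scorebias}: noise variance large enough that the mechanical boost dominates the effort loss. Under that condition the inequalities on $\mathbb{E}[S_i^R]$, and hence on the ratios, survive.
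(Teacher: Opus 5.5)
Your proposal is correct and follows the paper's proof essentially verbatim: both read off $\mathbb{E}[S_i^{SC}]>\mathbb{E}[S_i^{SS}]>\mathbb{E}[S_1]=\eta_i$ from Proposition~\ref{prop:screening}, take the strict growth of expected submitted scores in $N_i$ from its inductive step, and then invert the ordering with $\eta_i$ held fixed. Your explicit positivity check (needed to flip the inequalities) and your caveat that effort must be held fixed, as in Proposition~\ref{prop:screening}, are refinements the paper leaves implicit rather than a different route.
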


\begin{proof}
By the calibration assumption, a single-attempt student ($N_i=1$) submits an expected score equal to their true ability: $\mathbb{E}[S_1] = \eta_i$, yielding $\mathrm{SBR}_i^R = 1$.

Proposition~\ref{prop:screening} (Screening Distortion) establishes that for any student taking the exam $N_i \ge 2$ times, the expected submitted score strictly exceeds the single-attempt expected score, and the Superscoring composite strictly exceeds the Single-Sitting composite:
$$ \mathbb{E}[S_i^{SC}] > \mathbb{E}[S_i^{SS}] > \mathbb{E}[S_1] = \eta_i. $$

Dividing the numerator $\eta_i$ by these strictly ordered expectations directly yields the bias inequalities:
$$ \mathrm{SBR}_i^{SC} < \mathrm{SBR}_i^{SS} < 1. $$

Furthermore, Proposition~\ref{prop:screening} proves by induction that the expected submitted score strictly increases with the number of attempts $N_i$. Because the denominator strictly increases with $N_i$, the ratio $\mathrm{SBR}_i^R$ is strictly decreasing in $N_i$ under both rules.

We now evaluate the disparate impact across wealth $\mathcal{W}_i$. In our model, the fixed cost of retaking $C(\mathcal{W}_i)$ is strictly decreasing in wealth. As established in the proof of Proposition~\ref{prop:monotone}, a decrease in fixed cost $C$ strictly expands the retake region $\mathcal{R}$ in the noise space. Therefore, wealthier students have a strictly higher equilibrium probability of retaking the exam, leading to a higher expected number of attempts. 

Because the Score Bias Ratio is strictly decreasing in retake probability, the bias is disparate across wealth: wealthier students systematically achieve lower $\mathrm{SBR}$ values (i.e., higher artificial score inflation relative to true ability). 

Finally, because the mechanical score boost of the subject-level maximum operator is strictly larger than the composite-level operator (Proposition~\ref{prop:screening}), the inflation harvested per retake is strictly greater under Superscoring. Therefore, the wealth-driven bias divergence is strictly magnified under Superscoring, causing the greatest relative harm to lower-wealth populations who are constrained by cost to a single attempt.
\end{proof}

\subsection{Proposition~\ref{prop:access}: Access Premium}
\label{subsec:propaccess}
\begin{proof}
Let Student B be the constrained student ($C^B \to \infty$). Restricted to a single attempt ($N^B=1$), they optimally choose effort $(x_1^B, y_1^B)$ and Stage 2 effort $(0,0)$. Because testing noise is mean-zero ($\mathbb{E}[\delta_j] = 0$), their expected final score is deterministic:
$$ \mathbb{E}[S^B] = w_M f(\eta_M x_1^B) + w_V f(\eta_V y_1^B). $$

Let Student A be the unconstrained student ($C^A < \infty$) with identical ability $(\eta_M, \eta_V)$. Because Student A faces a lower fixed cost, their feasible choice set strictly contains the single-attempt strategy of Student B.

Consider a lower-bound "mimicking" strategy: Student A chooses the exact Round 1 effort as Student B $(x_1^B, y_1^B)$, but exerts zero effort in Round 2, relying purely on Superscoring to harvest the maximum of the two independent noise draws. The expected score of this feasible strategy strictly exceeds Student B's score due to the convexity of the maximum operator:
\begin{align*}
     \mathbb{E}[S^{LB}] & = w_M \mathbb{E}[\max(X_{M1}, X_{M2})]
     +  w_V \mathbb{E}[\max(X_{V1}, X_{V2})] 
     > \mathbb{E}[S^B]. 
\end{align*}
Let the difference be the pure noise boost: $\Omega = \mathbb{E}[S^{LB}] - \mathbb{E}[S^B] > 0$.

Because Student A optimally chooses effort across both rounds to maximize expected utility (Score - Costs), they will only deviate from this mimicking strategy if the deviation yields strictly higher expected utility.

If Student A's optimal expected score was weakly lower than Student B's ($\mathbb{E}[S^A] \le \mathbb{E}[S^B]$), they would be sacrificing at least $\Omega$ in expected score compared to the mimicking strategy. Because the cost function is strictly increasing, such a massive score sacrifice is only rational if it generates immense effort cost savings. However, as established in Proposition~\ref{prop:scorebias}, provided the variance of the testing noise $\text{Var}(\delta)$ is sufficiently large relative to the marginal cost of effort, the mechanical score boost of taking the maximum over multiple draws dominates any rational effort reduction penalty.

Therefore, under sufficiently dispersed noise, Student A's utility-maximizing strategy must yield a strictly higher expected score:
$$ \mathbb{E}[S^A] > \mathbb{E}[S^B]. $$
This confirms a strictly positive Access Premium.
\end{proof}

\subsection{The Compensatory Ability Gap}
\label{app:abilitygap}
Building upon the Access Premium established in Proposition~\ref{prop:access}, we formally quantify the disparity in latent ability required for a constrained student to mathematically match the expected score of an unconstrained student.

\begin{corollary}[The Ability Gap]
\label{cor:abilitygap}
To close the Access Premium, the constrained Student B must possess strictly higher latent ability $\tilde{\eta}^B > \eta^A$. Assuming identical baseline Verbal ability, the required compensatory Mathematics ability is:
\begin{align*}
    \tilde{\eta}_M^B = \frac{1}{x_1^B} f^{-1}\Bigl( f(\eta_M^A x_1^B) + \Omega_M \Bigr),
\end{align*}
where $\Omega_M$ is the effective unweighted Math score boost harvested by Student A. A symmetric condition holds for Verbal.
\end{corollary}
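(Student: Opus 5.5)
I will treat the Ability Gap as an indifference calculation in the Math dimension. Student B's Verbal term $w_V f(\eta_V y_1^B)$ is held fixed, so the comparison reduces to the Math component alone. As in the proof of Proposition~\ref{prop:access}, I decompose Student A's expected score into two parts: the deterministic single-attempt component $w_M f(\eta_M^A x_1^B)+w_V f(\eta_V y_1^B)$, evaluated along the mimicking strategy, plus the noise boost $\Omega$. By linearity of the Superscoring composite, $\Omega$ splits as $w_M\Omega_M + w_V\Omega_V$, where
\[
\Omega_M=\mathbb{E}[\max(X_{1},X_{2})]-f(\eta_M^A x_1^B)
\]
is the unweighted Math boost. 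It is strictly positive by the convexity argument already used in Proposition~\ref{prop:screening}. Under the hypothesis of identical baseline Verbal ability, Student B's Verbal term matches A's deterministic Verbal term. Closing the premium in Math alone (taking $\Omega_V$ to be absorbed symmetrically, or stating the Verbal case separately) therefore requires
\[
f(\tilde\eta_M^B x_1^B)=f(\eta_M^A x_1^B)+\Omega_M .
\]

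Next I invert $f$. By the standing assumption that $f$ is continuous and strictly increasing, $f^{-1}$ exists on the range of $f$. Applying it and dividing by $x_1^B>0$ gives the displayed formula. Here $x_1^B>0$ holds because an interior Round 1 choice follows from Assumption~\ref{ass:viable} and Lemma~\ref{lem:target}. Strict monotonicity of $f^{-1}$ together with $\Omega_M>0$ then yields $\tilde\eta_M^B x_1^B>\eta_M^A x_1^B$, and hence $\tilde\eta_M^B>\eta_M^A$. This gives the strict ability gap $\tilde\eta^B>\eta^A$ in composite terms, since $w_M>0$. The Verbal case follows by swapping the roles of the two subjects.

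The main obstacle is well-posedness of the inversion. First, $f(\eta_M^A x_1^B)+\Omega_M$ must lie in the range of $f$. If $f$ is bounded (for example, saturating at the SAT cap), a large noise boost could exceed $\sup f$, and then no finite compensating ability exists. I would handle this by assuming $f$ is unbounded, or else by restricting to $\Omega_M<\sup f - f(\eta_M^A x_1^B)$ and noting that otherwise the gap cannot be closed at all. Second, the formula holds Student B's effort $x_1^B$ fixed while raising $\eta_M$. Strictly, B's optimal effort responds to $\eta_M$. I would present the result as holding B's effort at its chosen level, mirroring the mimicking lower bound in Proposition~\ref{prop:access}, and remark that, by the target characterization in Lemma~\ref{lem:target}, the endogenous adjustment changes only the exact threshold, not the strict inequality.
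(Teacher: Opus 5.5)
There is a genuine gap in how you identify $\Omega_M$. You define $\Omega_M=\mathbb{E}[\max(X_1,X_2)]-f(\eta_M^A x_1^B)$, the per-subject Math boost along the mimicking strategy, and then assert that closing the premium in Math requires $f(\tilde\eta_M^B x_1^B)=f(\eta_M^A x_1^B)+\Omega_M$. That does not follow.

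Closing the Access Premium means B's new single-attempt expected score equals $\mathbb{E}[S^A]$. With B's Verbal ability and effort unchanged, this condition reads
\[
w_M\bigl[f(\tilde\eta_M^B x_1^B)-f(\eta_M^A x_1^B)\bigr]=\mathbb{E}[S^A]-\mathbb{E}[S^B]\equiv\Delta S .
\]
Suppose A's score were exactly the mimicking score. Then in your notation $\Delta S=w_M\Omega_M+w_V\Omega_V$, so the required unweighted Math increment is $\Omega_M+(w_V/w_M)\Omega_V$. This is strictly larger than your $\Omega_M$. A student B with your $\tilde\eta_M^B$ would still trail A by $w_V\Omega_V>0$.

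The Verbal shortfall cannot be ``absorbed symmetrically'' or ``stated separately.'' The hypothesis of identical Verbal ability forces it to be made up in Math. The statement's symmetric Verbal condition is the alternative case in which all compensation occurs in Verbal with Math ability held fixed. It is not a simultaneous second adjustment.

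There is a second problem. $\mathbb{E}[S^A]$ is A's \emph{optimal} expected score, not the mimicking score. Proposition~\ref{prop:access} uses the mimicking strategy only as a lower-bound comparison device. So your decomposition of $\mathbb{E}[S^A]$ into a deterministic part plus $\Omega$ is unjustified even before the Verbal issue arises.

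The fix is what the paper does: define $\Omega_M\equiv\Delta S/w_M$ directly from the premium. This is the entire Access Premium rescaled to Math units, which is what ``effective'' refers to. Positivity is then immediate from Proposition~\ref{prop:access}, and no mimicking decomposition is needed. Your inversion, division by $x_1^B$, and strict-monotonicity steps then go through unchanged.

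Your well-posedness caveats are fair and go beyond the paper. The paper silently assumes $f(\eta_M^A x_1^B)+\Omega_M$ lies in the range of $f$, which is a real issue for bounded $f$ such as the $1-e^{-z}$ used in the simulations. The paper also holds $x_1^B$ fixed.

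However, Assumption~\ref{ass:viable} and Lemma~\ref{lem:target} do not quite give you $x_1^B>0$. They concern a retaker's Stage 2 target with marginal cost $\beta k$, whereas B's single-attempt first-order condition involves $k$. So positivity of $x_1^B$ is not guaranteed when $\beta<1$.
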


\begin{proof}
By Proposition~\ref{prop:access}, the unconstrained Student A achieves a strictly higher expected score $\mathbb{E}[S^A]$ than the constrained Student B ($\mathbb{E}[S^B]$) when both possess identical latent ability $\eta^A = \eta^B$. The difference is the Access Premium: $\Delta S = \mathbb{E}[S^A] - \mathbb{E}[S^B] > 0$.

Because Student B is restricted to a single attempt, their expected score is purely deterministic based on their Round 1 effort: 
$$\mathbb{E}[S^B] = w_M f(\eta_M^A x_1^B) + w_V f(\eta_V^A y_1^B).$$

For Student B to perfectly match Student A's expected submitted score ($\mathbb{E}[S^B_{new}] = \mathbb{E}[S^A]$) using only a single attempt, they must possess a compensatory latent ability $\tilde{\eta}_M^B$. Assuming the compensation occurs entirely in Mathematics, we set their new expected score equal to Student A's expected score:
$$ w_M f(\tilde{\eta}_M^B x_1^B) + w_V f(\eta_V^A y_1^B) = \mathbb{E}[S^B] + \Delta S $$
$$ w_M f(\tilde{\eta}_M^B x_1^B) = w_M f(\eta_M^A x_1^B) + \Delta S $$

Dividing by the subject weight $w_M$ isolates the unweighted Math points. We define this term as $\Omega_M \equiv \frac{\Delta S}{w_M} > 0$, representing the effective score boost harvested by Student A, normalized to the Mathematics scale. Substituting $\Omega_M$ yields:
$$ f(\tilde{\eta}_M^B x_1^B) = f(\eta_M^A x_1^B) + \Omega_M $$

Applying the inverse knowledge production function $f^{-1}(\cdot)$ (which exists and is strictly increasing because $f'>0$) yields:
$$ \tilde{\eta}_M^B x_1^B = f^{-1}\Bigl( f(\eta_M^A x_1^B) + \Omega_M \Bigr) $$

Dividing by the effort $x_1^B$ isolates the required ability:
$$ \tilde{\eta}_M^B = \frac{1}{x_1^B} f^{-1}\Bigl( f(\eta_M^A x_1^B) + \Omega_M \Bigr). $$

Because $f(\cdot)$ is strictly increasing, its inverse $f^{-1}(\cdot)$ is also strictly increasing. Since the effective boost is strictly positive ($\Omega_M > 0$), the argument inside the inverse function is strictly larger than $f(\eta_M^A x_1^B)$. 

Therefore, it mathematically must hold that $\tilde{\eta}_M^B > \eta_M^A$. Under Superscoring, universities enforcing strict score cutoffs systematically screen out higher-ability constrained applicants in favor of lower-ability unconstrained applicants.
\end{proof}

\subsection{Proposition~\ref{prop:accelerant}: Superscoring as an Inequality Accelerant}
\label{subsec:propaccelerant}
\begin{proof}
To isolate the algorithmic effect of the scoring rules from underlying human capital, we evaluate two demographic groups $H$ and $L$ with identical latent ability. Because the fixed cost of retaking $C(W_i)$ is strictly decreasing in wealth, the High Wealth group faces a strictly lower cost ($C_H < C_L$). By the Monotonicity condition (Proposition~\ref{prop:monotone}), the High Wealth group optimally chooses a strictly higher attempt frequency: $N_H > N_L \ge 1$.

Let $\Delta(N) \equiv \mathbb{E}[S^{SC}(N)] - \mathbb{E}[S^{SS}(N)]$ denote the expected algorithmic inflation—the difference in expected submitted score between Superscoring and Single-Sitting—for a student who takes the exam $N$ times. 

We want to prove that the submitted-score gap under SC is strictly greater than the gap under SS:
$$ \mathbb{E}[S^{SC}(N_H)] - \mathbb{E}[S^{SC}(N_L)] > \mathbb{E}[S^{SS}(N_H)] - \mathbb{E}[S^{SS}(N_L)]. $$
Rearranging the terms, this inequality is mathematically equivalent to proving that the algorithmic inflation strictly expands with the number of attempts:
\begin{align*}
    & \mathbb{E}[S^{SC}(N_H)] - \mathbb{E}[S^{SS}(N_H)] > \mathbb{E}[S^{SC}(N_L)] - \mathbb{E}[S^{SS}(N_L)], \\
    & \text{or} \quad \Delta(N_H) > \Delta(N_L).
\end{align*}

Because $N_H > N_L$, this holds if and only if $\Delta(N)$ is strictly increasing in $N$. We proceed by showing that the expected marginal gain of an additional attempt $(N+1)$ is strictly larger under Superscoring than Single-Sitting. 

Let $X_{MAX}$ and $Y_{MAX}$ be the highest individual subject scores achieved over the first $N$ attempts. Let $S_{MAX}$ be the highest Single-Sitting composite achieved over the first $N$ attempts. From Proposition~\ref{prop:screening}, we established pointwise dominance for any $N$: 
$$S_{MAX} \le w_M X_{MAX} + w_V Y_{MAX}.$$

Now consider the $(N+1)$-th attempt, yielding new scores $(X_{new}, Y_{new})$. 
Under Superscoring, the marginal score gain isolates improvement in either subject independently:
\begin{align*}
    \text{Gain}^{SC} & = w_M \max(0, X_{new} - X_{MAX}) 
    + w_V \max(0, Y_{new} - Y_{MAX}).
\end{align*}

Under Single-Sitting, the marginal score gain requires the new composite to beat the old composite maximum:
$$ \text{Gain}^{SS} = \max\big(0, (w_M X_{new} + w_V Y_{new}) - S_{MAX}\big). $$

Because $S_{MAX} \le w_M X_{MAX} + w_V Y_{MAX}$, replacing $S_{MAX}$ with the larger Superscoring components weakly decreases the term inside the maximum:
\begin{align*}
     \text{Gain}^{SS} & \le \max\big(0, (w_M X_{new} + w_V Y_{new})
     - (w_M X_{MAX} + w_V Y_{MAX})\big). 
\end{align*}
Rearranging the terms algebraically:
\begin{align*}
    \text{Gain}^{SS} & \le \max\big(0, w_M (X_{new} - X_{MAX})
    + w_V (Y_{new} - Y_{MAX})\big).
\end{align*}

By the subadditivity of the maximum operator, $\max(0, A + B) \le \max(0, A) + \max(0, B)$. Applying this to the weighted score differences yields:
\begin{align*}
    \text{Gain}^{SS} &  \le w_M \max(0, X_{new} - X_{MAX}) 
    + w_V \max(0, Y_{new} - Y_{MAX}) 
    = \text{Gain}^{SC}.
\end{align*}

Therefore, the marginal gain of every additional attempt is pointwise weakly greater under Superscoring. 

Strict inequality holds in expectation because the testing shocks $\delta_M$ and $\delta_V$ are independent and continuous. There is a strictly positive probability that the student receives a highly positive shock in Mathematics but a highly negative shock in Verbal (i.e., $X_{new} > X_{MAX}$ but $Y_{new} \ll Y_{MAX}$). In this event, the Single-Sitting gain is exactly $0$ because the bad Verbal score drags the composite down. However, the Superscoring gain is strictly positive because it banks the Math score and discards the Verbal score.

Taking expectations over this strictly positive probability space yields:
$$ \mathbb{E}[\text{Gain}^{SC}] > \mathbb{E}[\text{Gain}^{SS}]. $$

Because the expected marginal gain of every additional attempt is strictly larger under Superscoring, the algorithmic inflation $\Delta(N)$ is strictly increasing in $N$. Since $N_H > N_L$, the inflation harvested by the High Wealth group strictly exceeds the inflation harvested by the Low Wealth group ($\Delta(N_H) > \Delta(N_L)$), acting as a mathematical accelerant on the pre-existing wealth inequality.
\end{proof}

\subsection{Proposition~\ref{prop:participation}: Participation Effect}
\label{subsec:propparticipation}
\begin{proof}
We compare the maximum expected utility of participating under Single-Sitting ($V^{SS*}$) and Superscoring ($V^{SC*}$), assuming the outside option of exiting the pool yields a utility of $0$. A student participates under regime $\mathcal{R} \in \{SS,SC\}$ if and only if:
$$V^{\mathcal{R}*} \ge 0.$$
To isolate the pure convexity mechanism of the multidimensional effort penalty, we evaluate a marginal student for whom the fixed cost of testing is negligible ($C \to 0$). All participation differences therefore arise solely from the effort cost function $c(x,y) = \frac{k}{\alpha}(x+y)^\alpha$.

Consider an arbitrary target effort allocation $(\bar{x}, \bar{y})$ where $\bar{x} > 0$ and $\bar{y} > 0$. Let $S_{target}$ denote the deterministic baseline score generated by this effort: 
$$S_{target} = w_M f(\eta_M \bar{x}) + w_V f(\eta_V \bar{y}).$$

Under Single-Sitting (SS), the student must deliver this effort in a single round. The total expected utility of this strategy is the expected score minus the simultaneous effort cost:
$$V^{SS}(\bar{x}, \bar{y}) = \mathbb{E}[S^{SS}] - \frac{k}{\alpha}(\bar{x} + \bar{y})^\alpha.$$
By definition, $\mathbb{E}[S^{SS}] \ge S_{target}$.

Under Superscoring (SC), the student can allocate this same effort across two rounds using a "split strategy":
\begin{enumerate}
    \item Round 1: Exert effort $(x_1,y_1) = (\bar{x},0)$.
    \item Round 2: Exert effort $(x_2,y_2) = (0,\bar{y})$.
\end{enumerate}

Under this strategy, the Round 1 Verbal score is $Y_1 = f(\eta_V \cdot 0) + \delta_{V1} = \delta_{V1}$, which is pure mean-zero noise. The student intentionally forgoes Round 1 Verbal preparation, relying on Superscoring to bank the Round 2 Verbal score $Y_2 = f(\eta_V \bar{y}) + \delta_{V2}$. The expected submitted Verbal score is $\mathbb{E}[\max(\delta_{V1}, f(\eta_V \bar{y}) + \delta_{V2})] \ge f(\eta_V \bar{y})$. Because this logic is symmetric for Mathematics, the expected final submitted score under this split strategy strictly bounds $S_{target}$ from below: $\mathbb{E}[S^{SC}_{split}] \ge S_{target}$.

The total cost of this split strategy under Superscoring is:
$$C^{SC}_{split} = \frac{k}{\alpha}\bar{x}^\alpha + \beta \frac{k}{\alpha}\bar{y}^\alpha.$$

Because the effort cost function is strictly convex ($\alpha > 1$), the function $z^\alpha$ is strictly superadditive:
$$(\bar{x} + \bar{y})^\alpha > \bar{x}^\alpha + \bar{y}^\alpha \quad \text{for all } \bar{x},\bar{y} > 0.$$

Since the discount factor $\beta \le 1$, the cost of simultaneous preparation strictly exceeds the cost of sequential preparation:
$$ \frac{k}{\alpha}(\bar{x} + \bar{y})^\alpha > \frac{k}{\alpha}\bar{x}^\alpha + \beta \frac{k}{\alpha}\bar{y}^\alpha \implies C^{SS} > C^{SC}_{split}. $$

Because the split strategy generates a weakly higher expected score at a strictly lower cost, its expected utility strictly dominates the Single-Sitting strategy:
$$V^{SC}_{split} > V^{SS}(\bar{x}, \bar{y}).$$

Let $V^{SC*}$ and $V^{SS*}$ denote the globally maximized expected utilities under each rule. Because $V^{SC*} \ge V^{SC}_{split}$, it follows that $V^{SC*} > V^{SS*}$. 

By continuity, there exists a non-empty parameter space of cost multipliers ($k$) or base abilities ($\eta$) for which the utility of Single-Sitting is strictly negative, but the utility of Superscoring is strictly positive:
$$V^{SC*} > 0 > V^{SS*}.$$
For these constrained students, the simultaneous effort penalty of Single-Sitting forces them to exit the pool ($A_i(SS)=0$). Superscoring breaks this superadditive penalty, inducing participation ($A_i(SC)=1$).

Finally, under linear costs ($\alpha=1$), $(x+y)^1 = x^1 + y^1$. The strict cost advantage of splitting disappears. Because any Round 2 attempt would strictly incur the fixed cost $\beta C > 0$ without providing any physical effort discount, sequential splitting is strictly cost-dominated, meaning Superscoring cannot induce participation through this mechanism when $\alpha=1$.
\end{proof}

\subsection{Proposition~\ref{prop:tradeoff}: Institutional Trade-Off}
\label{subsec:proptradeoff}
\begin{proof}
Let $V_{uni}(R)$ denote the expected ability of the top $K$ admitted students under regime $R \in \{SS, SC\}$.

Under SS, the simultaneous effort penalty forces all Type A students to exit the pool ($A_i(SS) = 0$). The university admits the top $K$ students from the Type B pool, yielding an expected ability $V_{uni}(SS) = \bar{\eta}_B$.

Under SC, both types participate. Type A students submit a single-attempt score $S_A = \eta_A + \delta_1$. Unconstrained Type B students optimize by taking the exam twice, submitting $S_B = \eta_B + \max(\delta_1, \delta_2)$.

We evaluate the non-trivial case where $\mathbb{E}[S_A] > \mathbb{E}[S_B]$; if $\mathbb{E}[S_B] > \mathbb{E}[S_A]$, Type B noise harvesters saturate the upper tail, and SC is strictly dominated for all $\gamma$. Because Type A retains an expected score advantage, the top $K$ seats are filled by a mixture of both types. Let $P_A(\gamma) \in [0,1]$ be the expected proportion of the seats captured by Type A students, and $s^*(\gamma)$ be the endogenous admission cutoff satisfying the capacity constraint.

The university's expected payoff is:
$$V_{uni}(SC, \gamma) = P_A(\gamma)\eta_{A} + (1 - P_A(\gamma))\eta_{noisy}(\gamma),$$
where $\eta_{noisy}(\gamma) \equiv \mathbb{E}[\eta_B \mid S_B \ge s^*(\gamma)]$.

Because $\max(\delta_1, \delta_2)$ first-order stochastically dominates $\delta_1$, conditional on clearing a high threshold, a larger fraction of a Superscored composite is attributable to noise rather than true ability. The signal fidelity is strictly degraded, yielding $\eta_{noisy}(\gamma) < \bar{\eta}_B$.

We establish strict monotonicity with respect to $\gamma$. The total derivative is:
\[
    \frac{d V_{uni}}{d\gamma} = \frac{d P_A}{d \gamma} \big(\eta_A - \eta_{noisy}(\gamma)\big) + \big(1-P_A(\gamma)\big) \frac{d \eta_{noisy}}{d \gamma}.
\]
Adding Type A mass strictly increases the expected proportion of the top-$K$ order statistics drawn from the Type A distribution, so $\frac{d P_A}{d \gamma} > 0$. Since $\eta_A > \bar{\eta}_B > \eta_{noisy}(\gamma)$, the first term is strictly positive.
For the second term, as $\gamma$ increases, the cutoff $s^*(\gamma)$ tightens. Assuming standard log-concave testing noise (satisfying the Monotone Likelihood Ratio Property), a strictly higher score cutoff screens more stringently on the true ability $\eta_B$ within the noisy distribution, ensuring $\frac{d \eta_{noisy}}{d \gamma} \ge 0$. Therefore, both terms are non-negative, and the first is strictly positive, establishing $\frac{d V_{uni}}{d \gamma} > 0$.

We evaluate the limits:
\begin{enumerate}
        \item As $\gamma \to 0$, $P_A \to 0$, yielding $V_{uni}(SC) \to \eta_{noisy} < V_{uni}(SS)$.
        \item As $\gamma \to 1$ (assuming capacity $K$ is small relative to the total population), $P_A \to 1$. $V_{uni}(SC) \to \eta_{A} > V_{uni}(SS)$.
\end{enumerate}

Because $V_{uni}(SC,\gamma)$ is continuous, strictly monotonic, and bounded across $V_{uni}(SS)$, the Intermediate Value Theorem guarantees a unique threshold $\gamma^* \in (0,1)$ such that $V_{uni}(SC, \gamma^*) = V_{uni}(SS)$. It follows that $V_{uni}(SC) > V_{uni}(SS)$ if and only if $\gamma > \gamma^*$.
\end{proof}

\section{Properties of Alternative Mechanisms}
\label{app:mechanisms}
This section provides the formal theoretical characterization of the alternative scoring mechanisms introduced in Section~\ref{subsec:alternative_rules}. 

Consider an additive admissions correction: $S_i^{R} = S_i^{SC} - g(\mathcal{I}_i)$, where $\mathcal{I}_i$ is the information the university observes and $g$ is any measurable penalty function. The institution designs $g^*$ to minimize the mean squared error (MSE) between the corrected score and latent ability $\eta_i$. Let $\Delta_i \equiv S_i^{SC} - \eta_i$ denote the uncorrected mechanical distortion.

\begin{theorem}[Optimal Information-Constrained Correction]
\label{thm:optimal}
Within the additive correction class, the MSE-minimizing correction is:
\begin{align*} 
    g^*(\mathcal{I}_i)=\mathbb{E}[\Delta_i \mid \mathcal{I}_i]. 
\end{align*}

This estimator is conditionally unbiased: $\mathbb{E}[S_i^R \mid \eta_i, \mathcal{I}_i] = \eta_i$. For any information refinement $\mathcal{I}_i \subseteq \mathcal{I}_i'$:
\begin{align*}
    \mathrm{Var}(S_i^R(\mathcal{I}_i')-\eta_i)\le\mathrm{Var}(S_i^R(\mathcal{I}_i)-\eta_i).
\end{align*}
\end{theorem}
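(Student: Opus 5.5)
The plan is to reduce everything to the $L^2$-projection property of conditional expectation. Write the corrected error as $S_i^R-\eta_i=(S_i^{SC}-\eta_i)-g(\mathcal{I}_i)=\Delta_i-g(\mathcal{I}_i)$. The MSE objective is then $\mathbb{E}[(\Delta_i-g(\mathcal{I}_i))^2]$, minimized over measurable $g$. I assume $\mathbb{E}[\Delta_i^2]<\infty$, which holds under the paper's finite-variance noise and bounded effort targets from Lemma~\ref{lem:target}.

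\textbf{Step 1: optimality.} For an arbitrary measurable $g$, add and subtract $\mathbb{E}[\Delta_i\mid\mathcal{I}_i]$:
\begin{align*}
\mathbb{E}[(\Delta_i-g)^2]=\mathbb{E}[(\Delta_i-\mathbb{E}[\Delta_i\mid\mathcal{I}_i])^2]+\mathbb{E}[(\mathbb{E}[\Delta_i\mid\mathcal{I}_i]-g)^2].
\end{align*}
The cross term vanishes by the tower property, because $\mathbb{E}[\Delta_i\mid\mathcal{I}_i]-g(\mathcal{I}_i)$ is $\mathcal{I}_i$-measurable and $\mathbb{E}[\Delta_i-\mathbb{E}[\Delta_i\mid\mathcal{I}_i]\mid\mathcal{I}_i]=0$. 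The second term is nonnegative and equals zero only when $g=\mathbb{E}[\Delta_i\mid\mathcal{I}_i]$ almost surely. This gives $g^*$, unique up to null sets. In the FISC case $\mathcal{I}_i=N_i$, this recovers $g^*(N_i)=\mathbb{E}[\Omega(N_i)\mid N_i]$.

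\textbf{Step 2: unbiasedness.} The same computation gives $\mathbb{E}[S_i^R-\eta_i\mid\mathcal{I}_i]=0$ immediately. The stated version conditions on $\eta_i$ as well, and it is equivalent to $\mathbb{E}[\Delta_i\mid\eta_i,\mathcal{I}_i]=\mathbb{E}[\Delta_i\mid\mathcal{I}_i]$. In words, the conditional mean of the distortion must not vary with latent ability once $\mathcal{I}_i$ is known.

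This is the step I expect to be the main obstacle. In the strategic model, effort, and hence the retake pattern, depend on $\eta$. I would first show that the distortion $\Delta_i$ is the order-statistic noise premium, which depends on $\eta_i$ only through the attempt pattern captured by $\mathcal{I}_i$ (for instance $N_i$). Under additive noise independent of type, the premium $\mathbb{E}[\max_k\delta_k]$ over $N_i$ draws is then free of $\eta_i$ given $N_i$. If that fails in general, for example because differing effort means across attempts shift the max, I would state mean-independence as the maintained condition. The unconditional-on-$\eta$ version holds regardless.

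\textbf{Step 3: refinement.} Take $\mathcal{I}_i\subseteq\mathcal{I}_i'$ as nested $\sigma$-algebras. Since $\mathbb{E}[S_i^R(\mathcal{I})-\eta_i]=0$ for either information set, the variance of the error equals its MSE, $\mathbb{E}[(\Delta_i-\mathbb{E}[\Delta_i\mid\mathcal{I}])^2]$. Apply the Step 1 decomposition with the finer set:
\begin{align*}
\mathbb{E}[(\Delta_i-\mathbb{E}[\Delta_i\mid\mathcal{I}_i])^2]=\mathbb{E}[(\Delta_i-\mathbb{E}[\Delta_i\mid\mathcal{I}_i'])^2]+\mathbb{E}[(\mathbb{E}[\Delta_i\mid\mathcal{I}_i']-\mathbb{E}[\Delta_i\mid\mathcal{I}_i])^2].
\end{align*}
The cross term vanishes because $\mathbb{E}[\Delta_i\mid\mathcal{I}_i']-\mathbb{E}[\Delta_i\mid\mathcal{I}_i]$ is $\mathcal{I}_i'$-measurable, so the tower property applies with the finer $\sigma$-algebra. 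Dropping the nonnegative last term yields the claimed variance inequality. Equivalently, this is the law of total variance: $\mathbb{E}[\mathrm{Var}(\Delta_i\mid\mathcal{I}_i')]\le\mathbb{E}[\mathrm{Var}(\Delta_i\mid\mathcal{I}_i)]$.
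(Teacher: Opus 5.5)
Your Steps 1 and 3 are the paper's argument. The paper cites squared-error optimality of conditional expectation, which you prove via the orthogonal decomposition. It gets the refinement inequality from the law of total variance for $\Delta_i$ on nested information sets, which is your Pythagorean identity with $\mathcal{I}_i\subseteq\mathcal{I}_i'$. As you note, that part needs only $\mathbb{E}[S_i^R-\eta_i\mid\mathcal{I}_i]=0$.

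\emph{Step 2: your diagnosis is right, but your proposed derivation fails.} The paper glosses over this point. Its proof establishes only $\mathbb{E}[S_i^R-\eta_i\mid\mathcal{I}_i]=0$ and then asserts the version conditioned on $\eta_i$. That holds if and only if $\mathbb{E}[\Delta_i\mid\eta_i,\mathcal{I}_i]=\mathbb{E}[\Delta_i\mid\mathcal{I}_i]$, exactly as you say. Your primary plan to derive this from the model would not go through, because the attempt count is endogenous to the first-round shock:
\begin{itemize}
\item A student retakes only when $\delta_1$ lands in the retake region $\{V_2^R>S_1^R\}$, so low realizations trigger retakes and, conditional on $N_i$, the first-round draw is selected.
\item The cutoff of that region depends on the student's effort targets, hence on $\eta_{iM},\eta_{iV}$. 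It also depends on $C(W_i)$, which is correlated with ability in the calibrated population.
\item So, given $N_i$, the noise is not a set of unconditional i.i.d.\ draws with premium $\mathbb{E}[\max_k\delta_k]$, and $\mathbb{E}[\Delta_i\mid\eta_i,N_i]$ generally moves with $\eta_i$.
\item Outside the reduced form, $\Delta_i$ also contains a deterministic part: the score component built from $f(\eta_{iM}\,\cdot)$ and $f(\eta_{iV}\,\cdot)$ at type-dependent effort, minus $\eta_i$.
\end{itemize}
The same selection undercuts your remark that $\mathcal{I}_i=N_i$ recovers $\Omega(N_i)$. Students who stop after one sitting are disproportionately those with favorable $\delta_1$, so $\mathbb{E}[\Delta_i\mid N_i=1]$ need not equal $\Omega(1)=0$.

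So commit to your fallback rather than presenting it as a contingency. Either claim only $\mathcal{I}_i$-conditional unbiasedness, or impose mean-independence explicitly for the information set at hand. For $\mathcal{I}_i=\sigma(N_i)$, that means the reduced form $S_i^{SC}=\eta_i+\Omega(N_i)+\epsilon_i$ with $\mathbb{E}[\epsilon_i\mid\eta_i,N_i]=0$. The $\eta_i$-conditional claim cannot be obtained from the construction of $g^*$ alone.
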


\begin{proof}
Let $S_i^R(\mathcal{I}_i) = S_i^{SC} - g(\mathcal{I}_i)$ denote the corrected score. The institution seeks a measurable function $g$ that minimizes the mean squared error conditional on the observed information set $\mathcal{I}_i$:
$$ \min_g \mathbb{E}\left[ \big(S_i^{SC} - g(\mathcal{I}_i) - \eta_i\big)^2 \mid \mathcal{I}_i \right]. $$

By definition, the uncorrected distortion is $\Delta_i \equiv S_i^{SC} - \eta_i$. Substituting this into the objective yields:
$$ \min_g \mathbb{E}\left[ \big(\Delta_i - g(\mathcal{I}_i)\big)^2 \mid \mathcal{I}_i \right]. $$

Under squared-error loss, it is a standard statistical property that the conditional expectation is the unique global minimizer. Therefore, the optimal additive correction is:
$$ g^*(\mathcal{I}_i) = \mathbb{E}[\Delta_i \mid \mathcal{I}_i]. $$

Applying this optimal correction, the residual error is $(S_i^R - \eta_i) = \Delta_i - \mathbb{E}[\Delta_i \mid \mathcal{I}_i]$. Taking the conditional expectation verifies that the estimator is conditionally unbiased:
\begin{align*}
    \mathbb{E}[S_i^R - \eta_i \mid \mathcal{I}_i] 
    = \mathbb{E}\big[\Delta_i - \mathbb{E}[\Delta_i \mid \mathcal{I}_i] \mid \mathcal{I}_i\big]
    = \mathbb{E}[\Delta_i \mid \mathcal{I}_i] - \mathbb{E}[\Delta_i \mid \mathcal{I}_i]
    = 0.
\end{align*}
Because $\mathbb{E}[S_i^R - \eta_i \mid \eta_i, \mathcal{I}_i] = 0$, it follows that $\mathbb{E}[S_i^R \mid \eta_i, \mathcal{I}_i] = \eta_i$.

We now prove the variance ordering. Because the corrected score $S_i^R$ is conditionally unbiased, its conditional variance is equal to the conditional variance of the residual error:
$$ \mathrm{Var}(S_i^R(\mathcal{I}_i) - \eta_i \mid \mathcal{I}_i) = \mathrm{Var}(\Delta_i \mid \mathcal{I}_i). $$
Taking the unconditional expectation of both sides, the total residual variance under the optimal correction $g^*(\mathcal{I}_i)$ is exactly $\mathbb{E}[\mathrm{Var}(\Delta_i \mid \mathcal{I}_i)]$.

Consider a refined information set $\mathcal{I}_i'$ such that $\mathcal{I}_i \subseteq \mathcal{I}_i'$ (i.e., $\mathcal{I}_i'$ contains strictly more information). By the Law of Total Variance applied to the uncorrected distortion $\Delta_i$, conditioning first on the finer information set $\mathcal{I}_i'$: 
$$ \mathrm{Var}(\Delta_i \mid \mathcal{I}_i) = \mathrm{Var}\big(\mathbb{E}[\Delta_i \mid \mathcal{I}_i'] \mid \mathcal{I}_i\big) + \mathbb{E}\big[\mathrm{Var}(\Delta_i \mid \mathcal{I}_i') \mid \mathcal{I}_i\big]. $$

Taking the unconditional expectation of both sides yields: 
\begin{align*}
     \mathbb{E}[\mathrm{Var}(\Delta_i \mid \mathcal{I}_i)] & = \mathbb{E}\Big[\mathrm{Var}\big(\mathbb{E}[\Delta_i \mid \mathcal{I}_i'] \mid \mathcal{I}_i\big)\Big] \\
     & + \mathbb{E}[\mathrm{Var}(\Delta_i \mid \mathcal{I}_i')].
\end{align*}

Because variance is a strictly non-negative operator, the term $\mathbb{E}\big[\mathrm{Var}(\mathbb{E}[\Delta_i \mid \mathcal{I}_i'] \mid \mathcal{I}_i)\big] \ge 0$. Substituting the unconditional residual variances back into the equation, we obtain: 
$$ \mathrm{Var}\big(S_i^R(\mathcal{I}_i) - \eta_i\big) \ge \mathrm{Var}\big(S_i^R(\mathcal{I}_i') - \eta_i\big). $$
This confirms that the residual variance (and therefore the Mean Squared Error) is weakly smaller under the finer information set.
\end{proof}

\subsection{FISC: Full-Information Score Correction}
\label{app:fisc}
The FISC mechanism applies the optimal information-constrained penalty assuming the university perfectly observes the true attempt count $N_i$. Defining the expected mechanical noise premium for $N_i$ attempts as $\Omega(N_i) \equiv \mathbb{E}[\max_{k \le N_i} \delta_k]$, the FISC corrected score is: $S_i^{FISC} = S_i^{SC} - \Omega(N_i)$.

\begin{proposition}[Full-Information Optimality]
\label{prop:fisc}
Assume the university observes all attempts $N_i$ and applies the penalty $\Omega(N_i)$. FISC weakly dominates Single-Sitting because it preserves the participation of constrained high-ability students who would otherwise exit the pool. Furthermore, FISC strictly improves expected admitted ability relative to uncorrected Superscoring by neutralizing attempt-based inflation:
\begin{align*}
V_{uni}(FISC) > V_{uni}(SC) \quad \text{and} \quad V_{uni}(FISC) \ge V_{uni}(SS).
\end{align*}
\end{proposition}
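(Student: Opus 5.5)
We prove the two inequalities separately, working in the stylized binary setting of Proposition~\ref{prop:tradeoff}: a fraction $\gamma$ of constrained Type A students who participate only under max-based aggregation and take one attempt, and a fraction $1-\gamma$ of unconstrained Type B students who retake. The plan is to show that FISC keeps the participation side of Superscoring while removing its measurement distortion. The argument runs through the participation set, then the corrected score distributions, and finally the top-$K$ selection.

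\emph{Step 1 (participation is preserved).} FISC keeps the subject-level max operator, so the split strategy from the proof of Proposition~\ref{prop:participation} remains available. That strategy exerts $(\bar x,0)$ in Round~1 and $(0,\bar y)$ in Round~2. A student who retakes is charged $\Omega(2)$, so the strategy's value is $V^{SC}_{split}-\Omega(2)$. Because $\Omega(2)$ is a fixed number independent of effort, we can choose the student parameters by the same continuity argument as in Proposition~\ref{prop:participation} so that
\[
V^{FISC*} \ge V^{SC}_{split}-\Omega(2) > 0 > V^{SS*}.
\]
This holds for the constrained Type A students. Type B students participate under every rule. Hence the applicant pool under FISC contains the pool under SS and coincides with the pool under SC.

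\emph{Step 2 (corrected scores).} Under FISC, a Type A student who takes one attempt has $\Omega(1)=\mathbb{E}[\delta]=0$ subtracted. Their score is therefore $S_A=\eta_A+\delta_1$, exactly as under SC. A Type B student who takes two attempts has $\Omega(2)=\mathbb{E}[\max(\delta_1,\delta_2)]$ subtracted, so
\[
S_B^{FISC}=\eta_B+\max(\delta_1,\delta_2)-\Omega(2),
\]
which has mean $\eta_B$ by Theorem~\ref{thm:optimal} with $\mathcal{I}_i=N_i$. Relative to SC, every Type B score is shifted down by the constant $\Omega(2)>0$, and no Type A score moves.

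\emph{Step 3 (FISC versus SC).} We show $P_A^{FISC}>P_A^{SC}$. The shift in Step~2 moves mass from Type B to Type A in every top-$K$ cutoff. To make this rigorous, let $s^*$ be the SC cutoff and note that the count of Type A students above $s^*$ is unchanged under FISC. The count of Type B students above $s^*$ strictly drops, because continuous noise puts positive mass in $[s^*,s^*+\Omega(2))$. The FISC cutoff $s^{FISC}$ is therefore lower, and more Type A students clear it. We then write
\[
V_{uni}=P_A\eta_A+(1-P_A)\,\eta_{noisy}
\]
as in the proof of Proposition~\ref{prop:tradeoff}. Since $\eta_A>\bar\eta_B\ge\eta_{noisy}$, raising $P_A$ strictly raises $V_{uni}$. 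We also need $\eta_{noisy}$ not to fall. We argue this using the MLRP/log-concavity assumption already invoked in Proposition~\ref{prop:tradeoff}: a uniform shift of the Type B scores, combined with a different cutoff, leaves selection within Type B governed by the same likelihood-ratio ordering. So the within-B ability of admitted students changes by a second-order amount that is dominated by the composition effect.

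\emph{Step 4 (FISC versus SS).} Under SS, $V_{uni}(SS)=\bar\eta_B$. FISC admits from a pool that is a superset of the SS pool, and every corrected score is unbiased for $\eta$. FISC could replicate SS's selection on the Type B subpopulation, because the Type B corrected scores differ from pure single-draw scores only by the noise-shape distortion. Admitting Type A students instead of marginal Type B students (which happens exactly when $\eta_A>\bar\eta_B$) can only raise the average. This yields the weak inequality, with equality when $\gamma\to0$.

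The main obstacle is Step~3 (and the analogous comparison in Step~4): controlling how the within-Type-B selected ability $\eta_{noisy}$ changes when the cutoff moves. The additive correction removes the bias but not the variance inflation of $\max(\delta_1,\delta_2)$, so Type B scores remain noisier than single draws. I would first try to bound the change in $\eta_{noisy}$ using MLRP monotonicity in the cutoff. If that fails, I would add the hypothesis that $\eta_B$ is homogeneous within Type B, as in the binary setting of Proposition~\ref{prop:tradeoff}; then $\eta_{noisy}=\eta_B$ and only the composition effect remains.
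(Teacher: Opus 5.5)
Steps~1 and~2 assume incompatible things about Type~A, and the strict inequality in Step~3 depends on that. Step~1 brings Type~A into the FISC pool through the split strategy, i.e.\ two sittings charged $\Omega(2)$. Steps~2--3 instead treat Type~A as one-sitting students with $\Omega(1)=0$, and the whole gain over SC comes from FISC lowering Type~B by $\Omega(2)$ while leaving Type~A untouched. You cannot fix this by declaring $N_A=1$. The one-sitting score is identical under every rule, and $V_1^{SS}\ge\max_{x_1,y_1}\{\mathbb{E}[w_MX_1+w_VY_1]-c(x_1+y_1\mid W)\}$, so anyone who participates under SC or FISC without ever retaking also participates under SS. Students excluded only under SS must therefore retake with positive probability (under your split plan, always). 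If both types sit twice, FISC subtracts the same constant from every score, the ranking is that of SC, and $V_{uni}(FISC)=V_{uni}(SC)$ with attempt counts fixed by type as in your framework. The paper gets the needed asymmetry by positing $N_A=2<N_B$, so that $\Omega(N_B)>\Omega(N_A)$ produces a positive-probability displacement set $\mathcal{D}$. You need an explicit assumption of this kind; the paper's $N_B>2$ actually steps outside the two-sitting model, so this is an assumption, not a consequence. With it, your cutoff argument works using the relative shift $\Omega(N_B)-\Omega(N_A)$. Two further points. Step~1 only exhibits some students with $V^{FISC*}>0>V^{SS*}$; it does not show the FISC pool coincides with the SC pool, because SC participants who rely on retaking and have $V^{SC*}<\Omega(2)$ can drop out once the penalty applies. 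Also, your ``second-order'' treatment of $\eta_{noisy}$ should be replaced by a direct argument, which in fact goes your way. Fewer seats go to Type~B, so their cutoff on the uncorrected scale strictly rises. The maximum of $N_B$ i.i.d.\ draws from a log-concave density $\phi$ has the log-concave density $N_BF^{N_B-1}\phi$ ($F$ the noise c.d.f.). Hence $\mathbb{E}[\eta_B\mid\eta_B+\max_{k\le N_B}\delta_k\ge c]$ is nondecreasing in $c$, and $\eta_{noisy}$ weakly increases.

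Step~4 is a genuine gap, not just a technical obstacle. FISC ranks Type~B exactly as SC does (a constant shift), so as $\gamma\to0$ its admitted class converges to SC's and $V_{uni}(FISC)\to\eta_{noisy}$. You adopt the setting of Proposition~\ref{prop:tradeoff}, whose proof asserts $\eta_{noisy}<\bar\eta_B=V_{uni}(SS)$; by continuity the weak inequality then fails for all small $\gamma$, the opposite of your ``equality as $\gamma\to0$''. Your claim that FISC ``could replicate SS's selection'' on Type~B is unsupported. Pool containment plus conditional unbiasedness, which is all the paper's own argument for this half uses, cannot supply it, since mean-unbiasedness says nothing about which Type~B students reach the upper tail. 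Nor is the issue ``variance inflation''. For symmetric noise $\mathbb{E}[\max(\delta_1,\delta_2)^2]=\mathbb{E}[\min(\delta_1,\delta_2)^2]=\sigma^2$, hence $\mathrm{Var}(\max(\delta_1,\delta_2))=\sigma^2-\Omega(2)^2<\sigma^2$, and which Type~B signal selects better is a question about tail shape. So your fallback is required, not optional. With $\eta_B$ constant within Type~B, $\bar\eta_B=\eta_{noisy}=\eta_B$, and any mixture of $\eta_A$ and $\eta_B$ is at least $\eta_B$, which gives Step~4 immediately. The price is dropping the premise $\eta_{noisy}<\bar\eta_B$ of Proposition~\ref{prop:tradeoff}. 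With heterogeneous Type~B, you must either restrict to $\gamma$ above a threshold, as in that proposition, or assume explicitly that FISC's Type~B signal is at least as informative in the upper tail as the SS one.
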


\begin{proof}
We evaluate the university's expected payoff $V_{uni}(\mathcal{R}) = \mathbb{E}[\eta_i \mid i \text{ is admitted}]$ under the Full-Information Score Correction regime (FISC) and compare it against Single-Sitting (SS) and traditional Superscoring (SC). We maintain the applicant pool defined in Proposition~\ref{prop:tradeoff}: high-ability, constrained Type A students ($\eta_A$) and lower-ability, unconstrained Type B students ($\eta_B < \eta_A$).

\paragraph{Weak Dominance over Single-Sitting (SS):}
Under SS, the simultaneous effort penalty forces constrained Type A students to exit the pool ($A_i(SS) = 0$). Under FISC, the underlying aggregation rule before the penalty remains the section-level maximum operator. Therefore, Type A students can still utilize the "split strategy" (established in Proposition~\ref{prop:participation}) to distribute their physical effort sequentially, breaking the superadditive convex effort cost. 

The FISC penalty $\Omega(N_i)$ is a deterministic score translation; it does not reintroduce the physical simultaneous effort constraint. Because the physical cost mechanics are identical to SC, Type A students participate under FISC exactly as they do under SC. 
Because the FISC applicant pool strictly contains the SS applicant pool (by adding high-ability Type A students), and the FISC estimator is conditionally unbiased (Theorem~\ref{thm:optimal}), screening this superior applicant pool yields an expected admitted ability weakly greater than SS: $V_{uni}(FISC) \ge V_{uni}(SS)$.

\paragraph{Strict Dominance over Superscoring (SC):}
Under uncorrected SC, expected scores are artificially inflated by the noise premium. Constrained Type A students split their effort across $N_A=2$ rounds, yielding $\mathbb{E}[S_A^{SC}] = \eta_A + \Omega(2)$. Unconstrained Type B students maximize the algorithmic advantage by taking $N_B > 2$ attempts, yielding $\mathbb{E}[S_B^{SC}] = \eta_B + \Omega(N_B)$. 

This introduces a severe screening distortion: a Type B applicant can displace a Type A applicant with higher true ability in the top $K$ rankings if noise inflation overcomes the ability gap. Let $\mathcal{D}$ denote the positive-measure set of score realizations under SC where a Type B student displaces a Type A student purely due to this asymmetric noise inflation. Because $N_B > N_A \implies \Omega(N_B) > \Omega(N_A)$, the probability of this displacement event is strictly positive under continuous, non-degenerate noise.

Under FISC, the adjustment perfectly internalizes the attempt count and deducts $\Omega(N_i)$. The expected submitted scores become exactly the true abilities:
$$\mathbb{E}[S_A^{FISC}] = \eta_A \quad \text{and} \quad \mathbb{E}[S_B^{FISC}] = \eta_B.$$
FISC restores the conditionally unbiased signal for all applicants, neutralizing the asymmetric expected algorithmic advantage. 

On the set of realizations $\mathcal{D}$, the displacement of Type A by Type B is prevented by the FISC correction. Because the displacing students had strictly lower true ability ($\eta_B < \eta_A$) and are replaced in the admitted set by students with higher true ability, every prevented displacement strictly increases the true latent ability of the admitted class. Integrating over the positive-measure set $\mathcal{D}$ where these swaps occur:
$$V_{uni}(FISC) - V_{uni}(SC) = \frac{1}{K}\int_{\mathcal{D}} (\eta_A - \eta_B) \, d\mathbb{P} > 0.$$
Because $\eta_A > \eta_B$ and $\mathcal{D}$ has strictly positive measure, the integral is strictly positive. Therefore, FISC strictly improves the university's objective relative to uncorrected Superscoring.
\end{proof}

\subsection{Variance Tax}
\label{app:vartax}
For a student with two attempts, the Variance Tax aggregates the scores via a convex combination parameterized by $\lambda \in [0.5,1]$:
\begin{align*}
\widehat S_{VT}(\lambda) = \lambda\max(S_1,S_2)+(1-\lambda)\min(S_1,S_2).
\end{align*}

\begin{proposition}[Variance Tax Score Distribution]
\label{prop:vartax}
Assume a student takes the exam $N_i=2$ times to harvest noise, holding deterministic knowledge $\eta_i$ constant. Under the Variance Tax with parameter $\lambda$, the expected corrected score equals:
\begin{align*}
    \mathbb{E}[\widehat S_{VT}(\lambda)] = \eta_i + (2\lambda-1)\Omega(N_i).
\end{align*}
where $\Omega(2)$ is the binary expected noise premium. For single-attempt students $(N_i=1)$, $\Omega(1)=0$ and the formula yields $\eta_i$ exactly. Setting $\lambda=1$ recovers standard Superscoring; setting $\lambda=0.5$ reduces the correction to the strict attempt average.
\end{proposition}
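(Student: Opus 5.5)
The plan is to use the identity $\max(a,b)+\min(a,b)=a+b$ to rewrite the Variance Tax as an affine combination of the max and the sum. Holding deterministic knowledge fixed, write each attempt's composite as $S_t=\eta_i+\epsilon_t$ for $t\in\{1,2\}$. Here $\epsilon_t=w_M\delta_{Mt}+w_V\delta_{Vt}$ is the composite noise. The $\epsilon_t$ are i.i.d. and mean zero, by independence of $\delta_1,\delta_2$ and $\mathbb{E}[\delta_t]=\mathbf{0}$. Following the convention of Proposition~\ref{prop:tradeoff} and of FISC in Appendix~\ref{app:fisc}, $\eta_i$ is calibrated so that the single-attempt expected score equals $\eta_i$. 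Substituting $\min(S_1,S_2)=S_1+S_2-\max(S_1,S_2)$ gives
\begin{align*}
\widehat S_{VT}(\lambda)=(2\lambda-1)\max(S_1,S_2)+(1-\lambda)(S_1+S_2).
\end{align*}

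Taking expectations, the second term contributes $(1-\lambda)\cdot 2\eta_i$, since each $\mathbb{E}[S_t]=\eta_i$. For the first term, write $\max(S_1,S_2)=\eta_i+\max(\epsilon_1,\epsilon_2)$. Its expectation is $\eta_i+\Omega(2)$, where $\Omega(2)=\mathbb{E}[\max(\epsilon_1,\epsilon_2)]$ is the composite analogue of the premium defined in Appendix~\ref{app:fisc}. Collecting coefficients on $\eta_i$ gives $(2\lambda-1)+2(1-\lambda)=1$, so
\begin{align*}
\mathbb{E}[\widehat S_{VT}(\lambda)]=\eta_i+(2\lambda-1)\Omega(2).
\end{align*}

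The special cases then follow directly. For $N_i=1$ there is only one score, so $\max=\min=S_1$ and $\widehat S_{VT}=S_1$, whose mean is $\eta_i$; this is consistent with $\Omega(1)=\mathbb{E}[\epsilon_1]=0$. At $\lambda=1$ the rule reduces to $\max(S_1,S_2)$. At $\lambda=1/2$ it reduces to the attempt average $(S_1+S_2)/2$, whose mean is $\eta_i$.

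The main obstacle is interpretive rather than computational: pinning down which $\Omega$ is meant. The aggregation operates on composite scores, so the noise premium must be $\mathbb{E}[\max(\epsilon_1,\epsilon_2)]$ on the composite noise. It is not the per-section premium used under superscoring. I would also state explicitly two modeling conventions. First, "$\lambda=1$ recovers Superscoring" is meant in the sense of the max-of-attempts operator applied to $S_t$. Second, the calculation assumes the zero-extra-effort noise-harvesting case, so that both attempts share the same mean $\eta_i$.
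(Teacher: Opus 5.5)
Your proof is correct, and it differs from the paper's by a small but real change of decomposition. The paper computes $\mathbb{E}[\max(S_1,S_2)]=\eta_i+\Omega(2)$ and $\mathbb{E}[\min(S_1,S_2)]=\eta_i-\Omega(2)$ separately, getting the second from the assumption that the noise is symmetric about zero, and then takes the convex combination. You instead use the pathwise identity $\max(S_1,S_2)+\min(S_1,S_2)=S_1+S_2$ to write $\widehat S_{VT}(\lambda)=(2\lambda-1)\max(S_1,S_2)+(1-\lambda)(S_1+S_2)$. The coefficient $(2\lambda-1)$ on the premium then appears directly, and the $\lambda=1/2$ attempt-average case is immediate.

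Your route needs only that both attempts have mean $\eta_i$, meaning zero-mean noise and no added effort. The symmetry the paper invokes is therefore superfluous: $\mathbb{E}[\min(\delta_1,\delta_2)]=-\mathbb{E}[\max(\delta_1,\delta_2)]$ already follows from your identity and $\mathbb{E}[\delta_1+\delta_2]=0$. The paper's max/min split does make visible the interpretation it uses afterwards: the retained minimum carries expected shock $-\Omega(2)$, which is the ``downside risk'' that taxes speculative retaking.

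Your interpretive remarks are also sound. The paper's proof writes $S_t=\eta_i+\delta_t$ with scalar noise, which matches your composite $\epsilon_t$. Your caveat about $\lambda=1$ is well placed, because by the paper's own definitions the maximum of composite scores is the Single-Sitting aggregator $S^{SS}$, not $S^{SC}$. If the tax is instead applied section by section, $\lambda=1$ does literally give $S^{SC}$. In that case your identity argument goes through unchanged, with $\Omega(2)$ replaced by $w_M\mathbb{E}[\max(\delta_{M1},\delta_{M2})]+w_V\mathbb{E}[\max(\delta_{V1},\delta_{V2})]$.
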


\begin{proof}
To isolate the algorithmic effect of the Variance Tax on speculative retaking, we evaluate a "pure noise harvester": a student who takes the exam twice ($N_i = 2$) without exerting additional effort in Round 2 ($x_2=0, y_2=0$).

Because their deterministic effort-based knowledge remains constant, their scores in both rounds are centered on the identical baseline ability $\eta_i$:
$$ S_1 = \eta_i + \delta_1 \quad \text{and} \quad S_2 = \eta_i + \delta_2, $$
where the testing shocks $\delta_1, \delta_2$ are drawn i.i.d. from a continuous distribution that is symmetric around zero ($\mathbb{E}[\delta] = 0$).

By the definition of the mechanical noise premium $\Omega(2) \equiv \mathbb{E}[\max(\delta_1, \delta_2)]$, the expectation of the maximum score is:
$$ \mathbb{E}[\max(S_1, S_2)] = \eta_i + \mathbb{E}[\max(\delta_1, \delta_2)] = \eta_i + \Omega(2). $$

Because the underlying noise distribution is symmetric and centered at zero, the expected minimum of the two draws is exactly the negative of the expected maximum:
$$ \mathbb{E}[\min(\delta_1, \delta_2)] = -\mathbb{E}[\max(\delta_1, \delta_2)] = -\Omega(2). $$
Therefore, the expectation of the minimum score is:
$$ \mathbb{E}[\min(S_1, S_2)] = \eta_i - \Omega(2). $$

The university evaluates the student using the convex combination:
$$ \widehat S_{VT}(\lambda) = \lambda \max(S_1, S_2) + (1-\lambda) \min(S_1, S_2). $$

Taking the expectation and applying the linearity of the expectation operator yields:
$$ \mathbb{E}[\widehat S_{VT}(\lambda)] = \lambda\big(\eta_i + \Omega(2)\big) + (1-\lambda)\big(\eta_i - \Omega(2)\big). $$

Expanding and grouping terms by $\eta_i$ and $\Omega(2)$:
$$ \mathbb{E}[\widehat S_{VT}(\lambda)] = \lambda\eta_i + \lambda\Omega(2) + \eta_i - \Omega(2) - \lambda\eta_i + \lambda\Omega(2). $$
$$ \mathbb{E}[\widehat S_{VT}(\lambda)] = \eta_i + (2\lambda - 1)\Omega(2). $$

This confirms the proposition formula.
For any parameter $\lambda \in [0.5, 1)$, the score boost multiplier $(2\lambda - 1)$ is strictly less than 1. Under standard Superscoring ($\lambda=1$), the student expects to harvest the full premium $\Omega(2)$. By setting $\lambda < 1$, the Variance Tax explicitly forces the final score to incorporate the minimum draw. Because the minimum draw carries a negative expected noise shock ($-\Omega(2)$), taking the exam solely to mine favorable noise introduces strict downside risk. This acts as a mathematical variance tax on retaking that scales proportionally with the severity of the expected noise premium.
\end{proof}

\begin{proposition}[Optimal Variance Tax Parameter]
\label{prop:optlambda}
The university's welfare objective $\mathcal{V}(\lambda)=e^*(\lambda)-\theta(2\lambda-1)\bar\Omega(W_i)$, which captures the trade-off between effort incentives and displacement costs, has a unique interior maximum $\lambda^*\in(0.5,1)$. Interiority follows from $\frac{d\mathcal{V}}{d\lambda}\big|_{\lambda=0.5}>0$ and $\frac{d\mathcal{V}}{d\lambda}\big|_{\lambda=1}<0$.
\end{proposition}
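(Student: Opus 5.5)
The plan is to treat $\mathcal{V}(\lambda)=e^*(\lambda)-\theta(2\lambda-1)\bar\Omega(W_i)$ as a scalar function on $[0.5,1]$. First I establish its regularity and concavity, then sign the derivative at the two endpoints, and finally invoke the intermediate value theorem together with strict monotonicity of $\mathcal{V}'$ to obtain a unique interior root. The displacement term is linear in $\lambda$ with slope $-2\theta\bar\Omega(W_i)<0$, since $\bar\Omega>0$ under non-degenerate noise by Proposition~\ref{prop:screening}. All the curvature must therefore come from the endogenous effort response $e^*(\lambda)$.

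The first step is to characterize $e^*(\lambda)$ via the student's problem under the aggregation rule $\widehat S_{VT}(\lambda)$. By Proposition~\ref{prop:vartax}, the expected gain from a pure noise-harvesting retake is $(2\lambda-1)\Omega(2)$. The value of retaking without studying is therefore increasing in $\lambda$, and the marginal return to genuine effort relative to noise harvesting is decreasing in $\lambda$. Repeating the argument of Proposition~\ref{prop:safetynet}(ii) and Proposition~\ref{prop:monotone}, I would apply the implicit function theorem to the first-order condition, using strict concavity from Lemma~\ref{lem:concavity}. This gives $de^*/d\lambda$ as a ratio of a cross-partial $\partial^2 V_1/\partial e\,\partial\lambda$ to the negative own second derivative. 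I would then impose, or derive, that $e^*$ is $C^2$ and concave in $\lambda$, so that $\mathcal{V}''<0$ and any critical point is unique.

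For the endpoint conditions, consider first $\lambda=0.5$, where the score is the attempt average. Here harvesting noise yields zero expected premium, and retaking is valuable only through genuine effort. I would argue that raising $\lambda$ slightly above $0.5$ rewards the max component. This makes second-round effort aimed at beating the first score more valuable, so that $de^*/d\lambda|_{0.5}>0$, and with enough weight on effort this dominates $2\theta\bar\Omega$.

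At $\lambda=1$ (pure Superscoring), the safety-net and specialization effects from Propositions~\ref{prop:safetynet} and~\ref{prop:specialization} are maximal. Increasing $\lambda$ there only weakens effort, so $de^*/d\lambda|_{1}\le 0$. Combined with the strictly negative displacement slope, this gives $\mathcal{V}'(1)<0$.

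I expect the main obstacle to be the sign at $\lambda=0.5$. The effort response near the average rule is not obviously positive, because the max weight also partially insures against bad draws. The claim likely needs an explicit parameter condition, namely $e^{*\prime}(0.5)>2\theta\bar\Omega(W_i)$, which I would state as a maintained assumption on $\theta$ (a small enough displacement weight). Concavity of $e^*$ is the second delicate point, and I would likewise secure it by a sufficient-dispersion restriction analogous to Lemma~\ref{lem:concavity}.
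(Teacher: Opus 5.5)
\textbf{The step at $\lambda=1$ fails.} Your architecture (concavity of $\mathcal{V}$, signs of $\mathcal{V}'$ at the endpoints, then the IVT) matches the paper's. You assert $de^*/d\lambda|_{\lambda=1}\le 0$ because the safety-net and specialization effects are ``maximal'' there. That would make $\mathcal{V}'(1)<0$ for every $\theta>0$, but nothing supports it.

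Propositions~\ref{prop:safetynet} and~\ref{prop:specialization} are discrete comparisons between Single-Sitting and Superscoring. They say nothing about the marginal effect of $\lambda$.

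The direct computation points the other way. Under $\widehat S_{VT}(\lambda)$, the expected marginal return to second-round effort is proportional to $(1-\lambda)+(2\lambda-1)\Prob(S_2>S_1)$. Its $\lambda$-derivative is $2\Prob(S_2>S_1)-1$, which is positive whenever round-2 effort is positive.

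In the paper's own effort model the response is strictly positive everywhere. The student solves $\max_e\,\eta_1+\lambda e+(2\lambda-1)\Omega-c(e)$, so $c'(e^*)=\lambda$ and $e^{*\prime}(1)=1/c''(e^*(1))>0$. Hence $\mathcal{V}'(1)<0$ is not automatic. It requires $2\theta\bar\Omega(W_i)>e^{*\prime}(1)$, a \emph{lower} bound on $\theta$, which is exactly what the paper imposes (``sufficient weight $\theta$''). Combined with your upper-bound condition at $\lambda=0.5$, you need the window
\[
e^{*\prime}(1)<2\theta\bar\Omega(W_i)<e^{*\prime}(0.5).
\]
This window is nonempty only if $e^{*\prime}$ strictly decreases on $[0.5,1]$.

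\textbf{Concavity of $e^*$ is also not secured by your route.} A sufficient-dispersion restriction in the spirit of Lemma~\ref{lem:concavity} controls the second derivative of the student's objective in effort. That makes $e^*(\lambda)$ well defined and differentiable via the implicit function theorem. But the sign of $e^{*\prime\prime}$ depends on third derivatives, which dispersion does not control.

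In the full two-stage model you cannot even sign $e^{*\prime}$. Raising $\lambda$ expands the retake region, which lowers first-round effort by the logic of Proposition~\ref{prop:monotone}. It also raises the return to second-round effort. These two channels conflict.

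The paper avoids both problems with the reduced-form problem above. From $c'(e^*)=\lambda$ one gets $e^{*\prime}=1/c''(e^*)>0$ and $e^{*\prime\prime}=-c'''(e^*)/c''(e^*)^3$, so concavity follows from a cost-curvature condition. For uniqueness and a nonempty $\theta$-window you actually need $c'''>0$, not just $c'''\ge 0$. With $c'''\equiv 0$ (e.g.\ quadratic cost), $\mathcal{V}$ is affine and cannot have derivatives of opposite sign at the two endpoints; the paper's claim of ``strict'' concavity from $c'''\ge 0$ glosses over this.

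\textbf{Fix.} Work with the reduced-form effort problem, assume $c'''>0$, and state both endpoint inequalities explicitly as the two-sided restriction on $\theta$ above.
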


\begin{proof}
The Variance Tax forces the university to optimize a continuous trade-off. We model the university as a Principal maximizing a reduced-form objective function $\mathcal{V}(\lambda)$ over the interval $\lambda \in [0.5, 1]$. This function evaluates the tradeoff between the effort response of retaking students (Type B) and the cross-type measurement distortion they generate relative to non-retaking students (Type A):
$$\max_{\lambda \in [0.5,1]} \mathcal{V}(\lambda) = e^*(\lambda) - \theta(2\lambda-1)\bar{\Omega}(\mathcal{W}_i),$$
where $e^*(\lambda)$ is the endogenous optimal effort of a retaking student, $\theta > 0$ captures the university's weight on mitigating cross-type measurement distortion, and $\bar{\Omega}(\mathcal{W}_i)$ is the expected score boost for the wealth-constrained population. 

We first evaluate the student's optimal effort response $e^*(\lambda)$. Assume a student takes the exam in Round 1, revealing baseline ability $\eta_1$. Before retaking in Round 2, the student exerts productive effort $e \ge 0$ at a strictly convex cost $c(e)$ (where $c'(e) > 0, c''(e) > 0$) to increase their true ability to $\eta_2 = \eta_1 + e$. 

For moderate effort where the maximum operator only probabilistically selects Round 2 (the generic interior case), the student's expected final score under the Variance Tax is approximated by:
$$\mathbb{E}[S_{final}(\lambda, e)] \approx \eta_1 + \lambda e + (2\lambda - 1)\Omega.$$
The student chooses effort to maximize expected utility (Score minus Cost):
$$\max_{e} \ U(e) = \eta_1 + \lambda e + (2\lambda - 1)\Omega - c(e).$$

The First-Order Condition (FOC) for the student yields:
$$c'(e^*) = \lambda.$$
By the Implicit Function Theorem, because the marginal cost of effort is strictly increasing ($c''(e) > 0$), the optimal effort response $e^*(\lambda)$ is strictly increasing in $\lambda$: 
$$ \frac{\partial e^*}{\partial \lambda} = \frac{1}{c''(e^*)} > 0. $$

Returning to the university's problem, the Principal's First-Order Condition is:
$$\frac{d\mathcal{V}}{d\lambda} = e^{*'}(\lambda) - 2\theta \bar{\Omega}(\mathcal{W}_i) = 0.$$
Assuming $c(\cdot)$ is sufficiently convex such that $c'''(e) \ge 0$, the effort response function $e^*(\lambda)$ is concave ($e^{*''}(\lambda) \le 0$). This guarantees that the university's objective $\mathcal{V}(\lambda)$ is strictly concave, ensuring that any root of the FOC is a unique global maximum.

We evaluate the marginal benefit of $\lambda$ at the boundaries:
\begin{enumerate}
    \item \textbf{Lower Bound ($\lambda = 0.5$):} The marginal measurement distortion is $2\theta \bar{\Omega}$. Because $e^{*'}(0.5) > 0$, if the university values the initial marginal return on genuine learning more than the baseline noise distortion, $\frac{d\mathcal{V}}{d\lambda} \big|_{\lambda=0.5} > 0$, ensuring $\lambda^* > 0.5$.
    \item \textbf{Upper Bound ($\lambda = 1$):} At standard Superscoring, the marginal cost of the distortion term is strictly positive ($2\theta \bar{\Omega} > 0$). Provided the university places sufficient weight $\theta$ on mitigating demographic inequality and measurement error, the marginal cost of inflation exceeds the marginal benefit of effort: $e^{*'}(1) < 2\theta \bar{\Omega}(\mathcal{W}_i)$. Therefore, $\frac{d\mathcal{V}}{d\lambda}\big|_{\lambda=1} < 0$, ensuring $\lambda^* < 1$.
\end{enumerate}
By the Intermediate Value Theorem, because $\frac{d\mathcal{V}}{d\lambda}$ is continuous, positive at $\lambda=0.5$, and negative at $\lambda=1$, there exists a unique interior optimum $\lambda^* \in (0.5, 1)$ that explicitly balances signal efficiency against effort preservation.

\paragraph{Comparative Statics.}
Applying the Implicit Function Theorem to the university's FOC, $e^{*'}(\lambda^*) - 2\theta \bar{\Omega} = 0$:
\begin{itemize}
    \item \textbf{Decreasing in score noise ($\sigma_\delta^2$):} Higher noise strictly increases the expected boost $\bar{\Omega}$. To restore the FOC equality, $e^{*'}(\lambda^*)$ must increase. Because $e^{*'}(\lambda)$ is a decreasing function (due to $e^{*''} \le 0$), the optimal parameter $\lambda^*$ must decrease. Higher noise forces the university to tax variance more aggressively.
    \item \textbf{Decreasing in cost steepness ($k$):} Let the cost function be parameterized by a steepness scalar $k$, where $c(e, k) = k \cdot \tilde{c}(e)$. A steeper cost curve (higher $k$) makes student effort more inelastic, strictly decreasing the marginal effort response $e^{*'}(\lambda)$. Because the marginal benefit of incentivizing effort has fallen relative to the constant marginal cost of measurement distortion, the university optimizes by decreasing $\lambda^*$. When effort is inelastic, the university abandons effort incentives to prioritize signal fidelity.
\end{itemize}
\end{proof}
\subsection{Contextual Correction}
\label{app:ctx}
FISC and the Variance Tax aim to minimize the mean estimation error. University admissions, however, select strictly from the right tail of the score distribution. As established by \cite{kleinberg2016inherent}, no admissions mechanism can simultaneously equalize False Negative Rates (FNR) and False Positive Rates (FPR) across groups whenever base rates of true qualification differ. Rather than minimizing MSE, the Contextual Correction therefore shifts the institutional objective to target Equal Opportunity: equalizing the FNR across wealth groups for equally qualified applicants.

\begin{proposition}[Contextual Correction]
\label{prop:contextual}
Assume the applicant pool is partitioned into wealth cohorts $q\in\mathcal{G}$. To equalize the False Negative Rate across cohorts, the Equal Opportunity Contextual Correction applies a group-specific penalty $c^*(q)$ to the submitted score: $\widehat\eta_i = S_i - c^*(q)$.

For a target admission threshold $\tau$ and target recall rate $\kappa \in (0,1)$, the exactly feasible penalties satisfy:
\begin{align*}
    P\!\left(\widehat\eta_i\ge\tau \mid Y_i=1,~W_i=q\right)=\kappa \quad\forall\, q\in\mathcal{G}, 
\end{align*}
where $Y_i=1$ indicates true qualification. These penalties are uniquely defined by:
\begin{align*} 
c^*(q) = F_q^{-1}(1-\kappa\mid Y_i=1)-\tau, 
\end{align*}
where $F_q(\cdot\mid Y_i=1)$ is the conditional score CDF within cohort $q$. Existence and uniqueness follow from the strict monotonicity of each $F_q$ and the continuity of the total admitted count in $\tau$.
\end{proposition}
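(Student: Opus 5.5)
Proposition~\ref{prop:contextual} reduces to a one-dimensional quantile-matching problem solved separately for each cohort $q$. Because the penalty $c^*(q)$ is a deterministic constant within a cohort, it drops out of every student's marginal utility, as noted in Section~\ref{subsec:alternative_rules}. So the conditional distribution of submitted scores $S_i$ among qualified students in cohort $q$ is exactly the one generated under Superscoring, and it does not react to the choice of $c^*(q)$. I will write $F_q(s\mid Y_i=1)=P(S_i\le s\mid Y_i=1, W_i=q)$ for this fixed CDF. This decoupling is what turns a potentially strategic design problem into a pure statistical one.

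Next I translate the Equal Opportunity constraint. Since $\widehat\eta_i=S_i-c^*(q)$, the event $\widehat\eta_i\ge\tau$ is the event $S_i\ge\tau+c^*(q)$. The recall condition therefore reads $1-F_q(\tau+c^*(q)\mid Y_i=1)=\kappa$, up to the atom at the cutoff. Continuity of the noise distribution, inherited by $S_i$ through the max over continuous draws, makes this atom vanish. Under strict monotonicity (and continuity) of $F_q$ on its support, the inverse $F_q^{-1}$ is a well-defined continuous bijection onto that support. Because $\kappa\in(0,1)$, the level $1-\kappa$ lies in the interior of $(0,1)$. Hence there is a unique $s_q$ with $F_q(s_q)=1-\kappa$, which gives $c^*(q)=s_q-\tau=F_q^{-1}(1-\kappa\mid Y_i=1)-\tau$. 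This settles existence and uniqueness for a fixed $(\tau,\kappa)$, cohort by cohort.

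Finally I address the coupling between $\tau$ and capacity. The statement invokes continuity of the total admitted count, so I would show that a pair $(\tau,\kappa)$ compatible with filling $K$ seats exists. Substituting $c^*(q)$ makes the corrected cutoff in cohort $q$ equal to $F_q^{-1}(1-\kappa)$, which is independent of $\tau$. The admitted mass is then
\[
A(\kappa)=\sum_{q}\big[\kappa\, P(Y_i=1,W_i=q)+P\big(S_i\ge F_q^{-1}(1-\kappa\mid Y_i=1),\,Y_i=0,\,W_i=q\big)\big].
\]
This is continuous and strictly increasing in $\kappa$, running from $0$ as $\kappa\to0$ to the full pool as $\kappa\to1$. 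The intermediate value theorem then yields a unique $\kappa$ meeting capacity, with $\tau$ remaining a free normalization. The main obstacle is this last step: it needs $F_q$ strictly increasing and atomless, and it needs the unqualified-applicant tail also to be continuous. I would justify both from the continuous noise assumption and the Superscoring max operator. I would also note that any flat piece of $F_q$ breaks uniqueness, so it must be excluded by assumption.
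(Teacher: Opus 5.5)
Your per-cohort step is the same as the paper's. You substitute $\widehat\eta_i=S_i-c_q$, use atomlessness to rewrite the recall constraint as $F_q(\tau+c_q\mid Y_i=1)=1-\kappa$, and invert.

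You genuinely diverge at the capacity step, and your route is the sound one. The paper fixes $\kappa$ and re-solves $c^*(q)$ at every $\tau$. It then claims the expected admitted count $A(\tau)=\sum_q n_q\,P(\widehat\eta_i\ge\tau\mid W_i=q)$ is continuous and strictly decreasing, tending to $0$ as $\tau\to\infty$ and to $\sum_q n_q$ as $\tau\to-\infty$, which would give a unique $\tau^*$. As you note, once $c^*(q)=F_q^{-1}(1-\kappa\mid Y_i=1)-\tau$ is substituted, a cohort-$q$ applicant is admitted exactly when $S_i\ge F_q^{-1}(1-\kappa\mid Y_i=1)$. So $A$ is constant in $\tau$, and the paper's limits cannot hold. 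Moreover, $(\tau+d,\,c^*(\cdot)-d)$ produces the same admitted set for every $d$, so $\tau$ is not identified.

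Clearing capacity with $\kappa$ and treating $\tau$ as a normalization gives a valid intermediate-value argument. It also delivers the strongest uniqueness available: uniqueness of $\kappa$ and of the effective cutoffs $\tau+c^*(q)$, hence of the admitted set. The cost is that $\kappa$ is then pinned down by $K$ rather than being a freely chosen target. The statement's closing appeal to continuity ``in $\tau$'' therefore has to be read as continuity in $\kappa$.

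One refinement concerns your plan to justify the regularity conditions from the continuous noise assumption alone. Continuity of the noise gives atomlessness. It does not give absence of flat pieces in $F_q$, nor your endpoint limits $A(\kappa)\to 0$ and $A(\kappa)\to$ the full pool. For those you need full, unbounded support of the noise, hence of each cohort's qualified and unqualified score distributions. That is the assumption the paper's proof actually invokes.

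Your opening remark that a constant penalty leaves effort and retake decisions unchanged is a step the paper's proof leaves implicit; it is asserted only in Section~\ref{subsec:alternative_rules}. You also rightly omit the paper's extra FOSD argument that $c^*(q)$ increases with wealth, since it is not part of the statement.
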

\begin{proof}
Let $F_q(S \mid Y_i = 1)$ denote the Cumulative Distribution Function (CDF) of the submitted Superscore $S_i$ for truly qualified applicants ($Y_i = 1$) within a specific wealth cohort $\mathcal{W}_i = q$.

As established in Proposition~\ref{prop:accelerant}, because the fixed cost of retaking $C$ is strictly decreasing in wealth, higher wealth cohorts optimize by executing a strictly higher attempt frequency ($N_{q'} > N_q$ for $q' > q$). Because the maximum operator over a larger number of independent noise draws First-Order Stochastically Dominates (FOSD) a smaller number of draws, the conditional score distribution of the wealthier cohort is strictly right-shifted. Thus, for any score $S$, the CDF of the wealthier cohort is strictly lower: $F_{q'}(S \mid Y_i=1) < F_q(S \mid Y_i=1)$.

The university seeks a vector of demographic penalties $\mathbf{c} = (c_{q_1}, \dots, c_{q_G})$ applied via $\hat{\eta}_i = S_i - c_q$, and a global admission threshold $\tau$ such that total admissions equal capacity $K$, and the Equal Opportunity constraint (True Positive Rate) is exactly $\kappa$ across all cohorts:
$$ \mathbb{P}(\hat{\eta}_i \ge \tau \mid Y_i = 1, \mathcal{W}_i = q) = \kappa \quad \forall q \in \mathcal{G}. $$

Substituting the estimator $\hat{\eta}_i = S_i - c_q$ yields:
$$ \mathbb{P}(S_i \ge \tau + c_q \mid Y_i = 1, \mathcal{W}_i = q) = \kappa. $$
Because the continuous testing noise generates a continuous score distribution, $\mathbb{P}(S_i \ge x) = 1 - F_q(x)$. Substituting the CDF:
$$ 1 - F_q(\tau + c_q \mid Y_i = 1) = \kappa \implies F_q(\tau + c_q \mid Y_i = 1) = 1 - \kappa. $$

Because the continuous noise distribution has full support, $F_q$ is strictly monotonically increasing, making its inverse $F_q^{-1}$ well-defined and unique. Thus, for any arbitrary target admission threshold $\tau$ and target recall $\kappa$, there exists a unique exact penalty:
$$ c^*(q) = F_q^{-1}(1 - \kappa \mid Y_i = 1) - \tau. $$

Because $F_{q'}(S) < F_q(S)$ by FOSD, evaluating the inverse at the same probability threshold $(1-\kappa)$ guarantees that $F_{q'}^{-1}(1-\kappa) > F_q^{-1}(1-\kappa)$. This correctly forces the optimal penalty $c^*(q)$ to be strictly increasing in the wealth cohort $q$, neutralizing the algorithmic advantage of harvesting variance.

This derivation establishes the unique $c^*(q)$ for any fixed $\tau$. To verify a jointly feasible solution exists that satisfies the university's capacity constraint $K$, observe that for any fixed $\kappa \in (0,1)$, the total expected admitted count is given by:
$$ A(\tau) = \sum_q n_q \cdot \mathbb{P}(\hat\eta_i \ge \tau \mid \mathcal{W}_i = q), $$
where $n_q$ is the population size of cohort $q$.

For each $\tau$, the penalties $c^*(q)$ are instantaneously chosen to satisfy the Equal Opportunity constraint. As $\tau \to \infty$, no students clear the threshold ($A(\tau) \to 0$). As $\tau \to -\infty$, all students clear the threshold ($A(\tau) \to \sum n_q > K$). Because the underlying score distributions are continuous, $A(\tau)$ is a continuous, strictly decreasing function of $\tau$.

By the Intermediate Value Theorem, there exists a unique global threshold $\tau^*$ that achieves exactly capacity $K$. The corresponding vector of penalties $\mathbf{c}^*$, computed at this $\tau^*$, is therefore uniquely determined, establishing the existence and uniqueness of the jointly feasible $(\tau^*, \mathbf{c}^*)$ admission policy.
\end{proof}
\paragraph{Robustness and Implementation Constraints} By conditioning on empirical wealth cohorts $W_i$, this mechanism is robust to violations of noise homogeneity, targeting the realized score distribution within each cohort rather than relying on a population-wide noise model. However, this robustness requires sufficient per-cohort sample sizes to estimate $F_q$ reliably in the tail. Furthermore, applying explicitly different score penalties based on demographic group membership raises disparate-treatment concerns in institutionally or legally constrained environments.

\subsection{Comparing Mechanisms}
Applying Theorem \ref{thm:optimal} to specific information sets explains the structural relationship between these mechanisms. For instance, knowing the exact attempt count ($N_i$) pins down the order statistic exactly, whereas using only a wealth proxy ($W_i$) identifies only the distribution of attempts.

\begin{corollary}[Two Distinct Orderings]
\label{cor:ordering} 
The three mechanisms are ordered along two separate dimensions. 
 
\smallskip 
\noindent \textit{Welfare dominance (admitted ability).} Within the additive correction class of Theorem \ref{thm:optimal}, FISC strictly dominates any mean-unbiased contextual correction, since exact attempt count strictly refines a demographic wealth proxy. The Variance Tax does not belong to this additive class and is therefore not directly comparable on this axis without the parameter analysis of Proposition \ref{prop:optlambda}.
 
\smallskip 
\noindent \textit{Recall equity (Equal Opportunity).} The Contextual Correction is constructed to exactly satisfy $P(\hat Y_i=1\mid Y_i=1,W_i=q)=\kappa$ for every cohort $q$, strictly dominating both the Variance Tax and FISC on this criterion unless those mechanisms are separately tuned to target Equal Opportunity directly.
\end{corollary}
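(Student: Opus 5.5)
The plan is to reduce both orderings in Corollary~\ref{cor:ordering} to results already established: Theorem~\ref{thm:optimal} for the welfare axis and Proposition~\ref{prop:contextual} for the recall axis. I treat the two dimensions separately, since they rest on different objectives (MSE of the corrected score versus cohort-level true positive rates).

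For the welfare-dominance claim, I will first check that a mean-unbiased contextual correction lies in the additive class of Theorem~\ref{thm:optimal}. It has the form $S_i^{SC}-g(W_i)$, with optimal choice $g(W_i)=\mathbb{E}[\Delta_i\mid W_i]$. FISC is $S_i^{SC}-\mathbb{E}[\Delta_i\mid N_i]$; I want to compare it with the finer set $\mathcal{I}_i'=\sigma(W_i,N_i)$. The key modelling step is that, conditional on $N_i$, the distortion $\Delta_i$ no longer depends on wealth: the mechanical premium is the order-statistic term $\Omega(N_i)$, which depends only on the attempt count because noise is homogeneous across applicants. This gives $\mathbb{E}[\Delta_i\mid W_i,N_i]=\mathbb{E}[\Delta_i\mid N_i]$.

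With that in hand, the variance-ordering part of Theorem~\ref{thm:optimal}, applied with $\mathcal{I}_i=\sigma(W_i)\subseteq\mathcal{I}_i'$, yields weak MSE dominance. For strictness I use the law-of-total-variance decomposition from that proof. The gap equals $\mathbb{E}[\mathrm{Var}(\mathbb{E}[\Delta_i\mid N_i]\mid W_i)]$, which is positive as long as $N_i$ is not a deterministic function of $W_i$. That holds because, by Proposition~\ref{prop:monotone} and the retake region $\mathcal{R}(C)$, the retake decision depends on the continuous noise $\boldsymbol\delta_1$, so within a wealth cohort both $N_i=1$ and $N_i=2$ occur with positive probability. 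It also uses $\Omega(2)\neq\Omega(1)=0$, from Proposition~\ref{prop:screening}.

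Next I must translate lower MSE into higher admitted ability, which I expect to be the main obstacle: MSE dominance does not in general imply better top-$K$ selection. My first attempt mirrors the swap argument in Proposition~\ref{prop:fisc}. Under the contextual correction, same-cohort students with different $N_i$ receive the same penalty, so those with larger $N_i$ keep a residual inflation $\Omega(N_i)-\mathbb{E}[\Omega(N)\mid W_i]>0$. I will exhibit a positive-measure set of realizations in which such a student displaces a higher-$\eta$ student, and in which FISC removes the displacement. Integrating $(\eta_{\text{high}}-\eta_{\text{low}})$ over that set gives the strict gap. If this fails in general, I would restrict the claim to the binary pool of Proposition~\ref{prop:tradeoff}, where the swap structure is explicit.

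I also note that the Variance Tax's rule $\lambda\max+(1-\lambda)\min$ is not of the form $S^{SC}-g(\mathcal{I})$, so it falls outside the comparison; this is a definitional remark.

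For the recall-equity ordering, Proposition~\ref{prop:contextual} gives $P(\hat Y_i=1\mid Y_i=1,W_i=q)=\kappa$ for every $q$ by construction, so the cohort FNR spread is zero. For FISC and the Variance Tax, both demographic-blind, I will argue that the true positive rates differ across cohorts. Cohorts differ in effort costs $k(W)$ and first-round effort (Proposition~\ref{prop:monotone}), so the conditional score distributions of qualified applicants are not aligned by a wealth-independent correction, and a common threshold $\tau$ yields unequal recall. Hence the contextual correction weakly dominates on equal opportunity, and strictly unless the other mechanisms are separately tuned to equalize recall, which is exactly the stated exception.
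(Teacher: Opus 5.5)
Your outline follows the paper's proof. On the welfare axis, both use a law-of-total-variance comparison made strict because $N_i$ is non-degenerate given $W_i$ and $\Omega(2)\neq\Omega(1)=0$. On the recall axis, both contrast exact equalization by construction with demographic-blind rules facing cohort-dependent score distributions. Routing through $\sigma(W_i,N_i)$ is cleaner than the paper's ``attempt count refines wealth''. That phrase is not literally a refinement, since $\sigma(W_i)\not\subseteq\sigma(N_i)$; the paper really only uses that FISC's residual on $\Omega(N_i)$ is zero.

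Your identity $\mathbb{E}[\Delta_i\mid W_i,N_i]=\mathbb{E}[\Delta_i\mid N_i]$ does not follow from homogeneous noise in the structural model, for two reasons. First, $\Delta_i=S_i^{SC}-\eta_i$ contains effort-driven score components that vary with $k(W_i)$. Second, the law of $\delta_1$ given $N_i$ depends on $C(W_i)$ through the retake region. You have to posit the identity, as the paper does in the first line of its proof, via the reduced form $S_i^{SC}=\eta_i+\Omega(N_i)+\epsilon_i$ with $\epsilon_i$ mean-zero given $(W_i,N_i)$. Once you do that, you can no longer attribute FISC's unequal recall to $k(W)$, because FISC scores become $\eta_i+\epsilon_i$. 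The disparity must instead come from the cohort-dependent law of $\eta_i$ among qualified applicants. As in the paper, distinct conditional CDFs give unequal recall at the particular $\tau$ only generically.

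The genuine gap is the step you flagged yourself: turning lower MSE into strictly higher admitted ability. The paper does not prove this either; it simply asserts that a more accurate signal yields a better admitted class under a uniform threshold. Your swap argument will not close it in general. Integrating $(\eta_{\mathrm{high}}-\eta_{\mathrm{low}})$ over prevented displacements ignores displacements running the other way. Moreover, the contextual residual $u_i=\Omega(N_i)-\mathbb{E}[\Omega(N_i)\mid W_i]$ need not behave like noise: if higher-ability students within a cohort retake more often, $u_i$ is positively correlated with $\eta_i$.

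Here is a concrete failure. Take one cohort with two ability levels, where only the high type retakes. The contextual score is then $S_i^{SC}$ minus a constant, so it separates the types by $\eta_H-\eta_L+\Omega(2)$ rather than $\eta_H-\eta_L$. Coupling on the same $\epsilon$ draws, every high type admitted under FISC is still admitted, so top-$K$ on this score admits more high types than FISC in expectation, despite its strictly larger MSE.

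So strict, or even weak, welfare dominance needs an added assumption. One option is $N_i$ independent of $(\eta_i,\epsilon_i)$ given $W_i$, so that $u_i$ is pure added noise, together with conditions under which added noise worsens threshold selection. Your binary-pool fallback does not rescue strictness either. There $N_i$ is pinned down by type, so if the cohort label identifies the type, $\mathbb{E}[\Omega(N_i)\mid W_i]=\Omega(N_i)$ and the two corrections coincide.
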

\begin{proof}
We establish the ordering along the two distinct dimensions sequentially.

\textbf{Part 1: Welfare Dominance (Admitted Ability)}
Recall from Section \ref{sec:model} that a student's composite true ability is $\eta_i \equiv w_M\eta_{iM} + w_V\eta_{iV}$. The uncorrected composite Superscore can be expressed as $S_i^{SC} = \eta_i + \Omega(N_i) + \epsilon_i$, where $N_i \in \{1, 2\}$ is the endogenous attempt count, $\Omega(N_i)$ is the expected mechanical noise premium derived from $N_i$ draws, and $\epsilon_i$ is a mean-zero idiosyncratic error. 

By Theorem \ref{thm:optimal}, an additive algorithmic correction takes the form $S_i^{corr} = S_i^{SC} - \mathbb{E}[\Omega(N_i) \mid \mathcal{I}_i]$, where $\mathcal{I}_i$ is the available institutional information set. This minimizes the Mean Squared Error (MSE) between the corrected score and true composite ability.

For the Full-Information Score Correction (FISC), the exact attempt count is observed: $\mathcal{I}_i^{FISC} = \sigma(N_i)$. Because the noise premium is a deterministic function of the attempt count, the penalty is exact and the conditional variance is zero: $\text{Var}(\Omega(N_i) \mid N_i) = 0$.

For the mean-unbiased Contextual Correction, the institution only observes the demographic proxy: $\mathcal{I}_i^{CTX} = \sigma(W_i)$. Because a student's optimal stopping decision $N_i$ depends on stochastic first-round testing shocks ($\delta_1$) in addition to their wealth constraints, $N_i$ remains non-degenerate conditional on $W_i$. Thus, $\text{Var}(\Omega(N_i) \mid W_i) > 0$.

By the Law of Total Variance, the residual estimation error strictly satisfies:
\begin{align*}
    \mathbb{E}\left[ (\Omega(N_i) - \mathbb{E}[\Omega(N_i) \mid W_i])^2 \right] 
    > \mathbb{E}\left[ (\Omega(N_i) - \mathbb{E}[\Omega(N_i) \mid N_i])^2 \right]
\end{align*}

Because FISC strictly minimizes the residual variance of the algorithmic bias, it yields a measurement with a higher signal-to-noise ratio than the demographic proxy. Under uniform threshold admission, a strictly more accurate signal of $\eta_i$ yields an admitted class with strictly higher expected ability. FISC therefore strictly dominates the mean-unbiased contextual correction on the welfare axis. The Variance Tax compresses the order statistics rather than applying an additive penalty; it thus exists outside this bounded class and requires the equilibrium parameter analysis of Proposition \ref{prop:optlambda}.

\textbf{Part 2: Recall Equity (Equal Opportunity)}
Let $Y_i \in \{0,1\}$ denote an indicator for a truly qualified student (e.g., $\eta_i \ge \bar{\eta}$), and let $\hat{Y}_i \in \{0,1\}$ denote admission based on clearing a global threshold $\tau$. Equal Opportunity requires uniform True Positive Rates across all wealth cohorts $q \in \mathcal{G}$:
$$ \mathbb{P}(\hat{Y}_i = 1 \mid Y_i = 1, W_i = q) = \kappa \quad \forall q \in \mathcal{G} $$

By construction, Contextual EqOpp sets cohort-specific penalties $c^*(q)$ (or effectively, cohort-specific thresholds) by evaluating the inverse conditional CDF of the score distribution for each group. It explicitly calibrates the right tail such that $\mathbb{P}(S_i^{SC} - c^*(q) \ge \tau \mid Y_i=1, W_i=q) = \kappa$. It therefore satisfies Equal Opportunity mechanically.

Conversely, FISC and the Variance Tax are demographic-blind. They apply a structurally uniform mapping $\mathcal{M}(S_{i,1}, S_{i,2}, N_i)$ globally. 

Because the fixed retake cost $C(W_i)$ dictates participation capacity, the conditional distribution of attempt counts differs across wealth cohorts. For high-wealth ($q_H$) and low-wealth ($q_L$) applicants of identical true ability:
\begin{align*}
    \mathbb{P}(N_i = 2 \mid Y_i = 1, W_i = q_H)
    > \mathbb{P}(N_i = 2 \mid Y_i = 1, W_i = q_L)
\end{align*}
Because differing $N_i$ distributions map into disparate corrected score spaces, the conditional cumulative distribution functions of the corrected scores strictly diverge:
\begin{align*}
\mathbb{P}(S_i^{FISC} \le s \mid Y_i=1, W_i=q_H) 
\neq \mathbb{P}(S_i^{FISC} \le s \mid Y_i=1, W_i=q_L)
\end{align*}
Evaluating these distinct CDFs at a single, globally uniform admission threshold $\tau$ mathematically guarantees diverging recall rates ($\kappa_{q_H} \neq \kappa_{q_L}$), violating the constraint. Therefore, explicit demographic corrections strictly dominate uniform structural rules on the Equal Opportunity metric.
\end{proof}

\section{Full Experimental Setup}
\label{app:experimental_setup}
This section details the empirical calibration, computational infrastructure, and simulation procedures underlying the experimental results presented in Section~\ref{subsec:results}. Raw SAT section scores $[200, 800]$ and composite scores $[400, 1600]$ are normalized to $[0,1]$. To ensure statistical stability, all reported point estimates and distributions are averaged over $M = 200$ Monte Carlo draws per student for a simulated population of $N=10,000$.

\paragraph{Reproducibility and Infrastructure.} To guarantee exact reproducibility, all stochastic processes---including Monte Carlo noise sampling and population generation---are initialized deterministically (seed $= 42$). The simulation pipeline is implemented in Python 3.10, utilizing NumPy (v1.23) and Numba (v0.56) for heavily optimized, parallelized tensor operations. Experiments were executed on standard consumer-grade multi-core CPU (32GB RAM, Ubuntu Linux), with a complete 10,000-student forward pass requiring approximately 20 minutes. The fully commented source code required to execute the baseline dynamic programming models and replicate all figures and tables is provided in the supplementary material and will be released under an MIT license upon publication.

\subsection{Student Population and Empirical Calibration}
Students $i$ are characterized by a wealth percentile $w_i \sim \text{Uniform}(0,1)$. Their latent Math ability ($\eta_{iM}$) and Verbal ability ($\eta_{iV}$) are drawn jointly from a bivariate normal distribution:
\begin{equation} 
    \begin{pmatrix} \eta_{iM} \\ \eta_{iV} \end{pmatrix}
    \sim
    \mathcal{N}\!\left( 
        \begin{pmatrix} \alpha^M_0 + \alpha^M_1 w_i \\ \alpha^V_0 + \alpha^V_1 w_i \end{pmatrix}, 
        \begin{pmatrix}
        (\sigma_M)^2 & \rho \sigma_M \sigma_V \\
        \rho \sigma_M \sigma_V & (\sigma_V)^2
    \end{pmatrix} 
    \right).
\end{equation}

Using empirical composite and section variances from the \textit{2025 SAT Suite of Assessments Annual Report}, we derive a cross-subject correlation of $\rho \approx 0.81$. The observed real-world gap between the highest and lowest income quintiles is 134 points for Math and 129 points for Verbal. We calibrate the wealth gradients $(\alpha_1^M, \alpha_1^V)$ to these empirical means.

To avoid double-counting the real-world preparation effects already embedded in the empirical data, we apply a structural discount factor. We attribute 40\% of the observed empirical gap to pre-study human capital (structural K-12 inequality), leaving the remaining 60\% to be generated endogenously by the model's heterogeneous effort costs. Score shocks $\varepsilon \sim \mathcal{N}(0, \sigma^2)$ capture test-day variance, with $\sigma \approx 0.2783$ calibrated to account for 60\% of within-group score dispersion.

\subsection{Model Primitives and Procedure}
\paragraph{Effort and Costs}
The deterministic score component for subject $j \in \{M, V\}$ uses a strictly concave knowledge production function $f(\eta_j e_j) = 1 - e^{-\eta_j e_j}$. We strictly enforce convex effort costs ($\alpha = 2$) via the cost function $c(e) = \frac{k_i}{2}(e_M + e_V)^2$. To model dynamic costs, the effort scalar declines linearly in wealth: $k_i = 0.14 - 0.02w_i$.

\paragraph{Retake Friction}
The fixed cost of a second attempt is modeled as a financial constraint inversely proportional to the student's family income. We construct a continuous piecewise linear mapping $\mathcal{I}(w_i)$ from wealth percentiles to US dollars using the exact 2025 SAT income quintile boundaries (bounded by a $\$20,000$ floor and $\$250,000$ ceiling). The fixed retake cost is evaluated as the relative burden of the standard SAT registration fee against this income: $C_i = 100 \times \frac{\$60}{\mathcal{I}(w_i)}$. The baseline retake probability is defined as $p_i = 0.10 + 0.80w_i$, acting as a reduced-form proxy for access to preparation resources and scheduling flexibility. Future payoffs are discounted at $\beta = 0.90$.

\paragraph{Simulation Procedure.}
Students solve the two-stage optimization program via backward induction over a grid of 15 effort levels per subject (spanning $[0.0, 1.5]$ in normalized effort space). The structural scoring rules (Single-Sitting, Superscoring, and the Variance Tax) each induce unique endogenous effort allocations. Conversely, the Full-Information Score Correction (FISC) and Contextual Correction are evaluated as ex-post score transformations applied to the equilibrium effort allocations and retake decisions generated under the Superscoring regime. Final admission outcomes are evaluated primarily via False Negative Rates (FNR) because, in heavily constrained selection pipelines, institutional unfairness manifests overwhelmingly as the exclusionary rejection of highly qualified applicants.

\paragraph{Robustness Grid Sweep}
To empirically validate that our institutional error rates and fairness conclusions are not artifacts of the specific parameter calibration, we execute a comprehensive parameter grid sweep (results detailed in Appendix~\ref{app:grid_robustness}). For this sensitivity analysis, we simulate a population of $N=5,000$ students and evaluate the pipeline across varying temporal discount factors $\beta \in \{0.70, 0.85, 1.0\}$ and test-day noise variance multipliers $\{0.5, 1.0, 2.0\}$, while holding the baseline effort cost multiplier fixed at $1.0$.

\subsection{Calibration of the Baseline Ability Discount}
\label{app:calibration}
The empirical wealth-score gap observed in the College Board data reflects both pre-existing human capital differences---such as school quality and early childhood investment---and the endogenous effects of late-stage test preparation \cite{rothstein2004college, park2015who}. If our simulation assigned 100\% of the observed gap to latent baseline ability, it would implicitly assume that wealth-based effort advantages generate zero score inflation, contradicting the theoretical model.

To accurately isolate the impact of strategic testing behavior, we apply a discount factor to the baseline ability gap. This forces the model to generate the remainder of the observed empirical gap endogenously through our wealth-dependent cost primitives. For reproducibility, we determine the exact magnitude of this discount via a computational calibration sweep.

\paragraph{Calibration Setup.}
We simulate a student population ($N=2000$) where each student draws a wealth percentile $w_i \sim \text{Uniform}(0,1)$ and faces a wealth-dependent marginal cost of effort, parametrized in the normalized score space as $k(w_i) = 0.18 - 0.08w_i$. Latent abilities are drawn from a bivariate normal distribution fitted to the empirical 2025 SAT wealth gradients. To preserve the overall population mean while compressing the baseline variance, we pivot the ability discount around the median wealth ($w=0.50$).

We evaluate the resulting simulated scores against two empirical targets derived from the 2025 College Board report \cite{collegeboard2025}:
\begin{enumerate}
    \item \textbf{The Q5-Q1 SAT Gap:} Targeting the empirical spread of 120--160 points between the highest and lowest wealth quintiles.
    \item \textbf{The Q3 Mean:} Targeting a score of 950--1010 to ensure the overall population proficiency remains anchored to empirical reality.
\end{enumerate}

\paragraph{Heuristic Retake Adjustment} 
Because the calibration sweep evaluates a computationally efficient single-period forward pass, it omits the score inflation generated by strategic retaking and order-statistic selection present in the full model. To accurately align this heuristic proxy with empirical College Board data (which includes final scores after multiple attempts), we apply a flat $+45$ point retake adjustment to the simulated final scores before evaluation against the empirical targets.

\paragraph{Results}
The hyperparameter calibration sweep demonstrates that discounting the baseline ability gap by 35\% to 40\% successfully reproduces the observed 2025 SAT wealth gradient while satisfying both empirical targets. Discounts below 30\% fail to generate a sufficiently wide Q5-Q1 gap under the specified cost primitives. To ensure conservative estimates of our model's endogenous effects, all main simulations reported in Section~\ref{subsec:results} apply a 40\% baseline ability discount.

\subsection{Parameter Grid Robustness}
\label{app:grid_robustness}
To ensure our empirical results are not the consequence of a specific parameter calibration, we evaluate the institutional error rates across a comprehensive parameter grid. We first establish the baseline population distribution, then test the scoring rules against varying structural parameters.

\paragraph{Baseline True Ability Distribution}
The empirical skew of the 2025 SAT applicant pool creates a structurally unbalanced baseline. Table~\ref{tab:pop_distribution} details the latent ability distribution for our simulated population of $N=10,000$ students. Because effort costs are strictly decreasing in wealth, the highest wealth quintile (Q5) heavily saturates the upper tail of latent ability, accounting for 46.4\% of the true top 5\% of applicants. This establishes the absolute ground truth against which all False Negative Rates (FNR) are measured.

\begin{table}[h]
\centering
\begin{tabular}{lccc}
\toprule
\textbf{Q} & \textbf{\# Students} & \textbf{Top 5\% (Count)} & \textbf{Top 5\% (Share)} \\
\midrule
Q1 & 1,256 & 25 & 5.0\% \\
Q2 & 1,409 & 53 & 10.6\% \\
Q3 & 1,670 & 81 & 16.2\% \\
Q4 & 2,284 & 109 & 21.8\% \\
Q5 & 3,381 & 232 & 46.4\% \\
\bottomrule
\end{tabular}
\caption{\footnotesize Global population and true latent ability distribution. The top 5\% share represents the baseline for equitable representation prior to the application of any scoring mechanism.}
\label{tab:pop_distribution}
\end{table}

\paragraph{Parameter Grid Evaluation}
We vary the core structural parameters that govern student behavior and score generation: the temporal discount factor $\beta \in \{0.70, 0.85, 1.0\}$ and the test-day noise variance $\sigma^2$ (scaled by multipliers of $0.5$, $1.0$, and $2.0$). The baseline effort cost multiplier $k$ is held fixed to isolate the interactive effects of testing noise and temporal preferences.

For each parameter combination in the grid, we simulate the full two-stage student optimization problem and record the resulting False Negative Rates. Table~\ref{tab:grid_robustness} reports the mean error rates and standard deviations for all five wealth quintiles, averaged across all grid configurations. 

\begin{table*}[t]
\centering
\begin{tabular}{lccccc}
\toprule
\textbf{Rule} & \textbf{Q1 FNR} & \textbf{Q2 FNR} & \textbf{Q3 FNR} & \textbf{Q4 FNR} & \textbf{Q5 FNR} \\
\midrule
Single-Sitting (SS) & 76.4\% $\pm$ 9.7 & 35.6\% $\pm$ 8.2 & 21.1\% $\pm$ 7.6 & 10.3\% $\pm$ 4.0 & 7.9\% $\pm$ 5.4 \\
Superscoring (SC)   & 68.4\% $\pm$ 12.7 & 31.5\% $\pm$ 7.4 & 23.7\% $\pm$ 8.1 & 12.7\% $\pm$ 5.7 & 9.4\% $\pm$ 5.3 \\
Variance Tax (TAX)  & 66.7\% $\pm$ 6.3 & 33.3\% $\pm$ 8.1 & 18.9\% $\pm$ 7.8 & 8.4\% $\pm$ 4.1 & 5.3\% $\pm$ 4.1 \\
FISC                & 60.9\% $\pm$ 14.8 & 25.2\% $\pm$ 6.0 & 20.1\% $\pm$ 6.7 & 11.5\% $\pm$ 4.9 & 9.6\% $\pm$ 5.5 \\
Contextual (CTX)    & 25.3\% $\pm$ 11.5 & 15.7\% $\pm$ 7.7 & 14.7\% $\pm$ 6.0 & 18.4\% $\pm$ 7.6 & 16.3\% $\pm$ 8.5 \\
\bottomrule
\end{tabular}
\caption{\footnotesize Robustness of False Negative Rates across the parameter grid. Contextual Correction maintains an equitable, narrowly distributed FNR band despite high variance in underlying structural parameters, while Single-Sitting severely and disproportionately penalizes the lowest wealth quintiles.}
\label{tab:grid_robustness}
\end{table*}

The aggregated results demonstrate that the ordinal ranking of the policies' equity outcomes is invariant to the structural parameters. Across all extreme realities of noise and discounting, Single-Sitting (SS) uniquely excludes the highest proportion of qualified Q1 students. Demographic-blind measurement corrections (FISC and TAX) offer only marginal improvements. The Contextual Correction (CTX) is the only intervention that consistently neutralizes the algorithmic bias, ensuring stable and equitable error rates across all income strata.

\section{Uniform Population Baselines}
\label{app:uniformpop}
In the main text, our simulations use an empirically calibrated population that reflects the true demographic skew of the 2025 College Board data. To separate the structural mechanics of the scoring rules from this empirical skew, we present two baseline experiments using a uniform population distribution, where each of the five wealth quintiles represents exactly 20\% of the applicant pool. We evaluate this balanced population under two distinct effort-cost regimes to isolate the precise impact of wealth-dependent friction.

\subsection{Baseline 1: Uniform Population and Static Variable Costs}
\label{app:uniform_static}
In this regime, we isolate the structural impact of the scoring rules from the empirical ability skew of the 2025 applicant pool. We simulate a uniform population where each wealth quintile represents roughly 20\% of the true top ability distribution (Table~\ref{tab:app_static_admissions}). To isolate the pure effect of retake affordability, we remove disparities in learning capacity by setting the marginal cost of effort equal across all groups ($k_i = k$), but maintain the empirical wealth-dependent fixed cost of retaking ($C(W_i)$).

\begin{table}[h]
\centering
\begin{tabular}{lcccccc}
\toprule
\textbf{Q} & \textbf{TA} & \textbf{SS} & \textbf{SC} & \textbf{TAX} & \textbf{FISC} & \textbf{CTX} \\
\midrule
Q1 & 23.2\% & 12.8\% & 13.4\% & 15.4\% & 17.8\% & 21.8\% \\
Q2 & 19.6\% & 20.0\% & 21.0\% & 18.8\% & 20.6\% & 20.8\% \\
Q3 & 20.8\% & 24.2\% & 23.2\% & 23.0\% & 23.6\% & 24.0\% \\
Q4 & 19.2\% & 21.4\% & 21.6\% & 21.2\% & 19.8\% & 18.2\% \\
Q5 & 17.2\% & 21.6\% & 20.8\% & 21.6\% & 18.2\% & 15.2\% \\
\bottomrule
\end{tabular}
\caption{\footnotesize [Uniform Baseline] Admitted Class Composition compared to True Ability (TA). Only Contextual Correction aligns admissions with underlying merit.}
\label{tab:app_static_admissions}
\end{table}

\paragraph{Effort and Participation.} 
Even when all students share the identical capacity to convert effort into knowledge, the fixed financial barrier to securing a second attempt severely fractures student behavior. As shown in Figure~\ref{fig:app_uniform_retake}, financial constraints prevent low-wealth students from accessing the algorithmic safety net: Q5 students retake at a 95.5\% rate under Superscoring, while Q1 students retake at only 56.3\%. Consequently, while Round 1 effort is relatively uniform, Q1 is forced to systematically under-invest in Round 2 preparation compared to their unconstrained Q5 peers (Figure~\ref{fig:app_static_effort}).
\begin{figure}[ht] 
\centering 
\includegraphics[width=0.7\columnwidth]{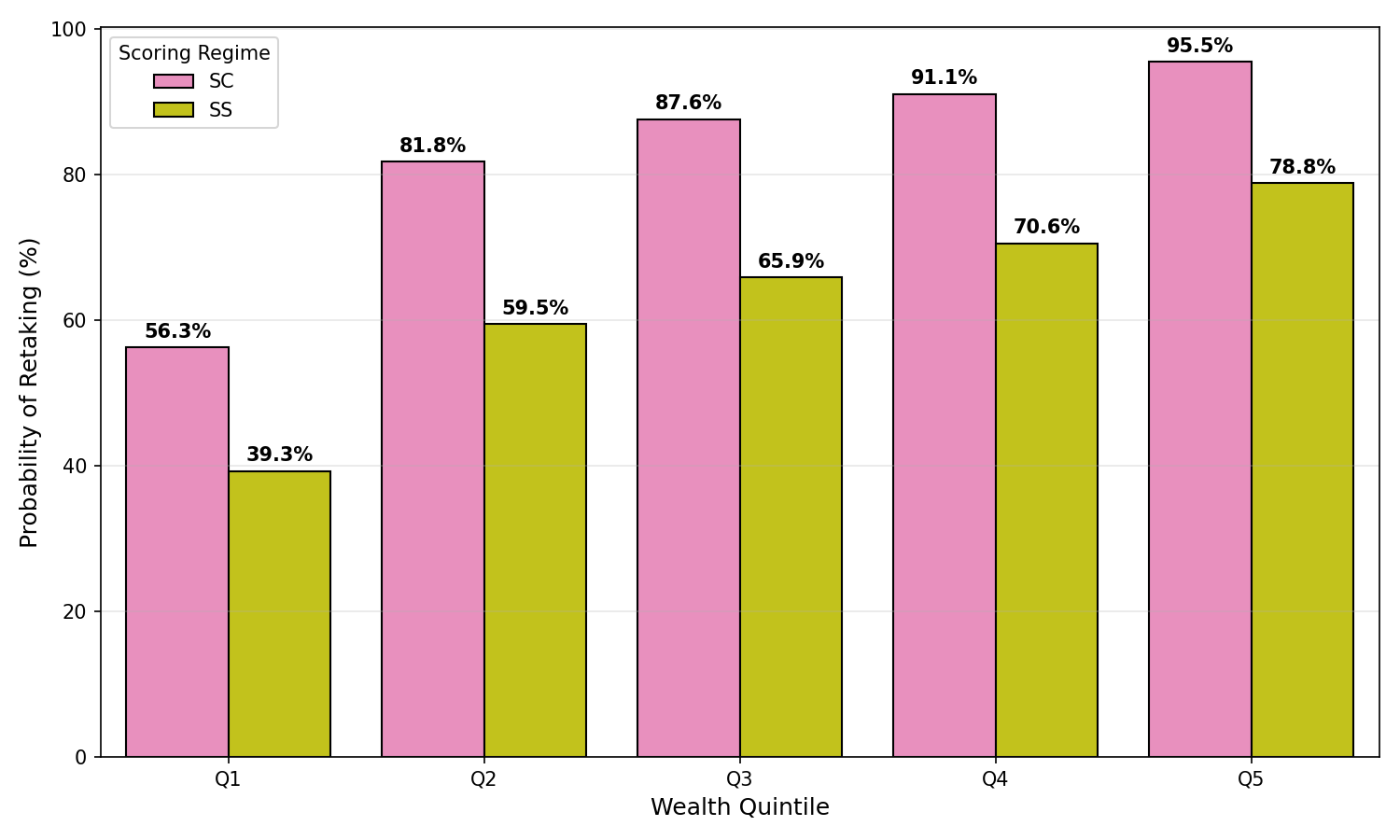}  
\caption{\footnotesize [Uniform Baseline] Probability of retaking under Superscoring (pink) vs. Single-Sitting (yellow) across wealth quintiles.}
\label{fig:app_uniform_retake}
\end{figure}

\begin{figure}[ht] 
\centering 
\includegraphics[width=0.7\columnwidth]{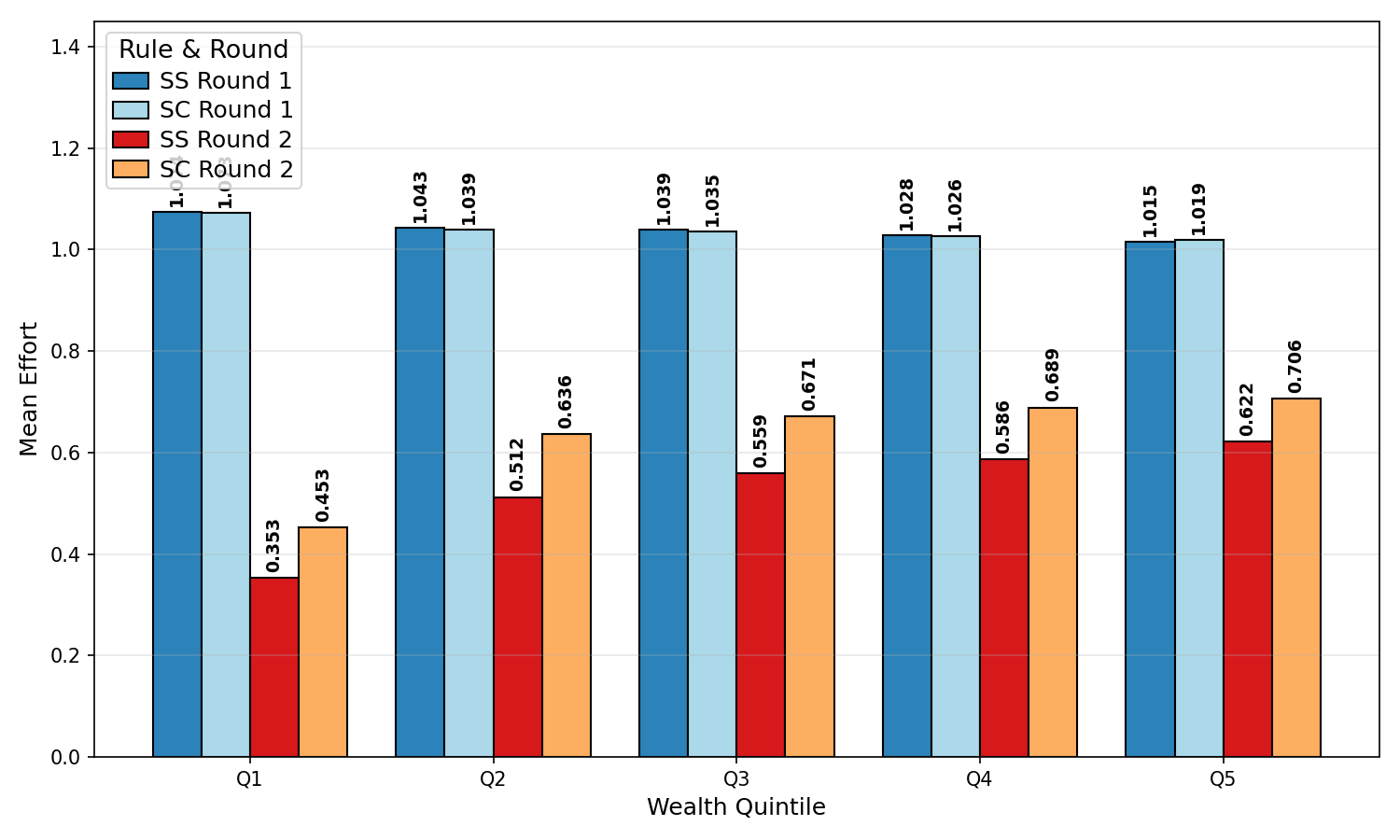}  
\caption{\footnotesize [Uniform Baseline] Mean effort exerted by round and wealth quintile under Single-Sitting in Round 1 (blue, with SS on the left and SC on the right) and Round 2 (red, with SS on the left and SC on the right). Despite uniform marginal costs, fixed retake barriers suppress Q1's Round 2 effort.
Effort Exerted by Round and Wealth Quintile}
\label{fig:app_static_effort}
\end{figure}

\paragraph{Measurement and Institutional Fairness.} 
Because Q1 applicants are financially locked out of multiple draws, they cannot harvest the mechanical noise premium of Superscoring. Figure~\ref{fig:app_uniform_premium} demonstrates that above-median wealth students extract a strictly higher algorithmic score boost than equally qualified below-median wealth peers. 

\begin{figure}[ht] 
\centering 
\includegraphics[width=0.7\columnwidth]{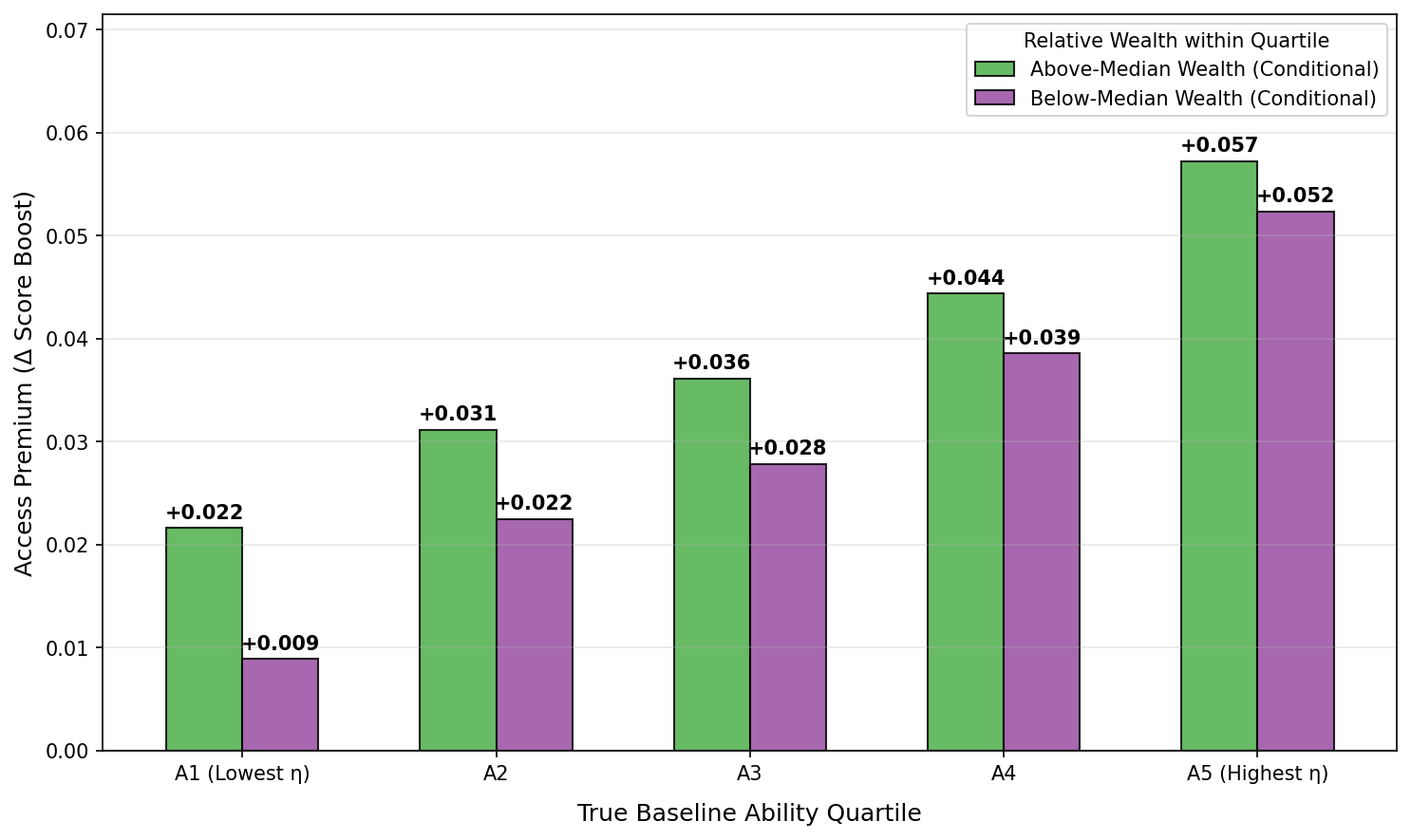}  
\caption{\footnotesize [Uniform Baseline] Conditionally Balanced Access Premium by True Ability Quartile. Wealthier applicants (green) extract higher score boosts at every underlying ability level than below median wealth applicants (purple).}
\label{fig:app_uniform_premium}
\end{figure}

When evaluated for admission, this unearned Access Premium violently distorts the threshold. Despite comprising 23.2\% of the true top ability pool in this balanced simulation, Q1 faces a staggering 48.3\% False Negative Rate under Superscoring (Table~\ref{tab:app_static_fnr}), capturing only 13.4\% of the admitted class (Table~\ref{tab:app_static_admissions}). Contextual Correction (CTX) is the only mechanism that evaluates students against their local structural constraints, dropping the Q1 FNR to 23.3\% and successfully restoring their admitted share to an ability-fair 21.8\%.

\begin{table}[h]
\centering
\begin{tabular}{lccccc}
\toprule
\textbf{Rule} & \textbf{Q1} & \textbf{Q2} & \textbf{Q3} & \textbf{Q4} & \textbf{Q5} \\
\midrule
SS         & 49.1\% & 14.3\% & 7.7\% & 9.4\% & 2.3\% \\
SC         & 48.3\% & 11.2\% & 10.6\% & 8.3\% & 7.0\% \\
TAX        & 36.2\% & 12.2\% & 8.7\% & 6.2\% & 1.2\% \\
FISC       & 29.3\% & 7.1\% & 7.7\% & 9.4\% & 10.5\% \\
CTX        & 23.3\% & 12.2\% & 9.6\% & 14.6\% & 19.8\% \\
\bottomrule
\end{tabular}
\caption{\footnotesize [Uniform Baseline] False Negative Rate (FNR). Even with uniformly distributed ability, fixed financial barriers cause severe exclusion rates for Q1.}\label{tab:app_static_fnr}
\end{table}

Furthermore, as observed in the main text, the algorithmic distortions of Superscoring manifest primarily as exclusionary False Negatives rather than unearned inclusion; False Positive Rates (FPR) remain structurally low (under 1.5\%) across all policies and demographic groups (Table~\ref{tab:app_static_fpr}).

\begin{table}[h]
\centering
\begin{tabular}{lccccc}
\toprule
\textbf{Rule} & \textbf{Q1} & \textbf{Q2} & \textbf{Q3} & \textbf{Q4} & \textbf{Q5} \\
\midrule
SS         & 0.3\% & 0.8\% & 1.3\% & 1.1\% & 1.3\% \\
SC         & 0.4\% & 0.9\% & 1.2\% & 1.1\% & 1.3\% \\
TAX        & 0.2\% & 0.4\% & 1.0\% & 0.9\% & 1.2\% \\
FISC       & 0.4\% & 0.6\% & 1.1\% & 0.7\% & 0.7\% \\
CTX        & 1.0\% & 0.9\% & 1.3\% & 0.5\% & 0.4\% \\
\bottomrule
\end{tabular}
\caption{\footnotesize [Uniform Baseline] False Positive Rate (FPR). Unearned inclusion rates remain structurally low across all scoring mechanisms.}\label{tab:app_static_fpr}
\end{table}

\subsection{Baseline 2: Dynamic Costs (Wealth-Dependent)}
In this regime, we maintain the uniform population distribution but reintroduce the fully wealth-dependent marginal and fixed cost parameters ($k(W_i)$ and $C(W_i)$) used in the main text. This isolates how financial frictions alone drive disparate outcomes, completely independent of the empirical baseline ability skew.



\begin{figure}[ht]
\centering

\begin{minipage}[t]{0.48\columnwidth}
    \centering
    \includegraphics[width=\linewidth]{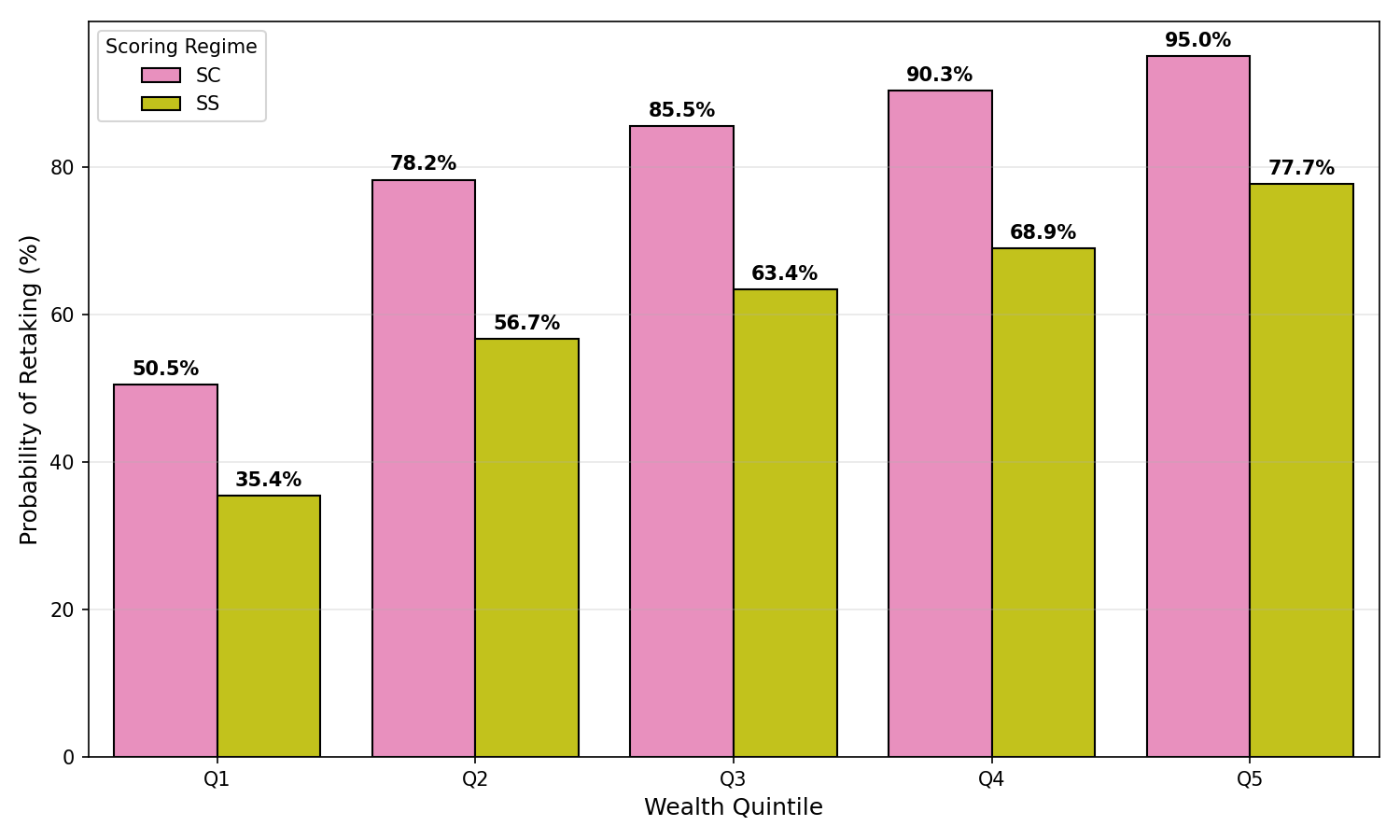}
    \caption{\footnotesize [Dynamic Costs] Probability of retaking under Superscoring (pink) vs. Single-Sitting (yellow) across wealth quintiles. Wealth-dependent costs immediately fracture participation.}
    \label{fig:app_dynamic_retake}
\end{minipage}
\hfill
\begin{minipage}[t]{0.48\columnwidth}
    \centering
    \includegraphics[width=\linewidth]{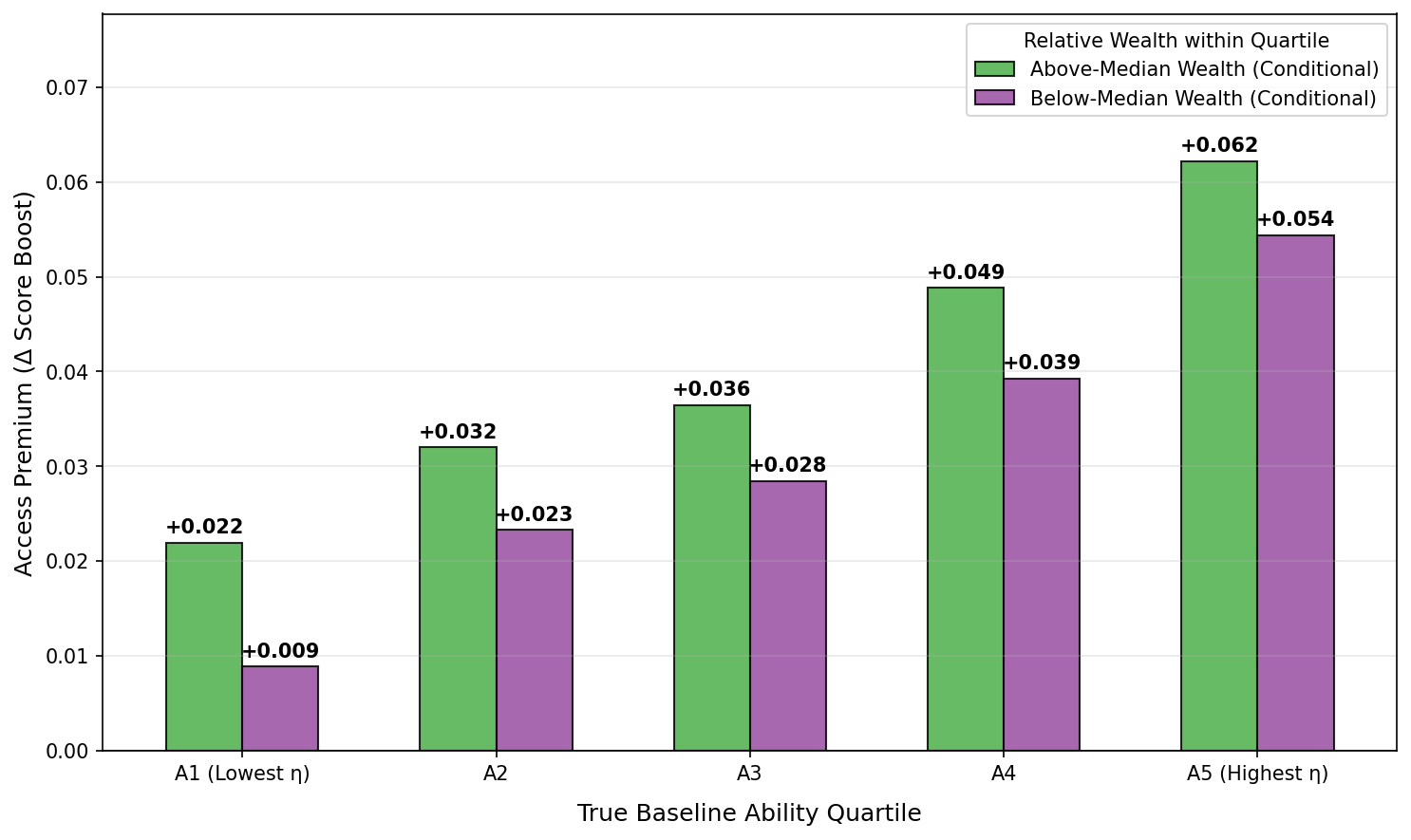}
    \caption{\footnotesize [Dynamic Costs] Conditionally Balanced Access Premium by True Baseline Ability Quartile. Wealthier applicants extract higher algorithmic score boosts.}
    \label{fig:app_dynamic_premium}
\end{minipage}

\end{figure}

\paragraph{The Return of Structural Disparity.} As soon as preparation and retake costs become wealth-dependent, the disparities observed in the 2025 empirical data immediately re-emerge, despite the population being perfectly balanced in latent ability. As shown in Figure~\ref{fig:app_dynamic_retake}, Q5 students utilize the Superscoring retake option at a rate of 95.0\%, while Q1 participation drops to just 50.5\%. Because this unequal access to noise-harvesting artificially inflates Q5 scores, it resurrects the Access Premium across every quartile of underlying ability (Figure~\ref{fig:app_dynamic_premium}).

\begin{table}[ht]
\centering

\begin{minipage}[t]{0.48\columnwidth}
\centering
\begin{tabular}{lccccc}
\toprule
\textbf{Rule} & \textbf{Q1} & \textbf{Q2} & \textbf{Q3} & \textbf{Q4} & \textbf{Q5} \\
\midrule
SS   & 59.5\% & 16.3\% & 6.7\%  & 6.2\%  & 2.3\% \\
SC   & 57.8\% & 14.3\% & 9.6\%  & 5.2\%  & 1.2\% \\
TAX  & 50.9\% & 9.2\%  & 2.9\%  & 2.1\%  & 0.0\% \\
FISC & 44.0\% & 16.3\% & 5.8\%  & 3.1\%  & 3.5\% \\
CTX  & 17.2\% & 12.2\% & 13.5\% & 14.6\% & 12.8\% \\
\bottomrule
\end{tabular}
\captionof{table}{\footnotesize [Dynamic Costs] False Negative Rate (FNR). Economic frictions cause severe exclusion rates for Q1, regardless of the uniform ability distribution.}
\label{tab:app_dynamic_fnr}
\end{minipage}
\hfill
\begin{minipage}[t]{0.48\columnwidth}
\centering
\begin{tabular}{lccccc}
\toprule
\textbf{Rule} & \textbf{Q1} & \textbf{Q2} & \textbf{Q3} & \textbf{Q4} & \textbf{Q5} \\
\midrule
SS   & 0.2\% & 0.3\% & 1.1\% & 1.6\% & 2.1\% \\
SC   & 0.1\% & 0.5\% & 1.0\% & 1.5\% & 2.1\% \\
TAX  & 0.1\% & 0.2\% & 0.7\% & 1.0\% & 2.0\% \\
FISC & 0.1\% & 0.3\% & 0.9\% & 1.0\% & 1.8\% \\
CTX  & 1.5\% & 1.0\% & 0.7\% & 0.3\% & 0.3\% \\
\bottomrule
\end{tabular}
\captionof{table}{\footnotesize [Dynamic Costs] False Positive Rate (FPR). Unearned inclusion rates remain structurally low across all scoring mechanisms and demographic groups.}
\label{tab:app_dynamic_fpr}
\end{minipage}

\end{table}



\begin{table}[h]
\centering
\begin{tabular}{lcccccc}
\toprule
\textbf{Q} & \textbf{TA} & \textbf{SS} & \textbf{SC} & \textbf{TAX} & \textbf{FISC} & \textbf{CTX} \\
\midrule
Q1 & 23.2\% & 10.2\% & 10.0\% & 11.6\% & 13.4\% & 24.8\% \\
Q2 & 19.6\% & 17.4\% & 18.8\% & 18.4\% & 17.6\% & 21.0\% \\
Q3 & 20.8\% & 23.8\% & 22.6\% & 23.0\% & 23.2\% & 20.8\% \\
Q4 & 19.2\% & 24.0\% & 23.6\% & 22.4\% & 22.4\% & 17.4\% \\
Q5 & 17.2\% & 24.6\% & 25.0\% & 24.6\% & 23.4\% & 16.0\% \\
\bottomrule
\end{tabular}
\caption{\footnotesize [Dynamic Costs] Admitted Class Composition. Q5 heavily over-indexes its 17.2\% True Ability (TA) share under SC, corrected only by CTX.}
\label{tab:app_dynamic_admissions}
\end{table}

\paragraph{Measurement and Institutional Fairness.} The structural advantage of affordable retakes severely distorts the admission thresholds. Under uncorrected Superscoring, the FNR for Q1 spikes to 57.8\%, while remaining at a mere 1.2\% for Q5 (Table~\ref{tab:app_dynamic_fnr}). Consequently, despite Q1 comprising 23.2\% of the top latent ability pool in this specific simulation draw, they capture only 10.0\% of the admitted seats under Superscoring, while Q5 captures 25.0\% (Table~\ref{tab:app_dynamic_admissions}). As in the main text, only Contextual Correction (CTX) successfully redistributes the admitted seats back to closely match the true ability baseline (24.8\% for Q1 and 16.0\% for Q5). Furthermore, consistent with all prior results, the False Positive Rates (FPR) remain structurally low (under 2.5\%) across all policies and demographics (Table~\ref{tab:app_dynamic_fpr}).

\section{Empirical Population and Dynamic Costs Graphs and Tables}
\subsection{False Positive Rates by Rule and Wealth Quintile}
\label{subsec:fpr}
\begin{table}[H]
\centering
\begin{tabular}{lccccc}
\toprule
\textbf{Rule} & \textbf{Q1} & \textbf{Q2} & \textbf{Q3} & \textbf{Q4} & \textbf{Q5} \\
\midrule
SS         & 0.0\%& 0.0\%& 0.1\%& 0.9\%& 1.6\%\\
SC         & 0.1\%& 0.1\%& 0.1\%& 1.0\%& 1.6\%\\
TAX        & 0.0\%& 0.0\%& 0.2\%& 0.5\%& 1.4\%\\
FISC       & 0.2\%& 0.0\%& 0.3\%& 0.9\%& 1.4\%\\
CTX& 0.7\&& 0.4\%& 0.2\%& 0.7\&& 1.1\%\\
\bottomrule
\end{tabular}
\caption{\footnotesize False Postive Rate (FPR) by scoring rule for each wealth quintile. The evaluated policies are abbreviated as Single-Sitting (SS), Superscoring (SC), Variance Tax (TAX), Full-Information Score Correction (FISC), and Contextual Correction (CTX).}
\label{tab:error_rates_fpr}
\end{table}

\end{document}